%% file: main.tex
\documentclass[11pt,a4paper]{article}

\usepackage[margin=1in]{geometry}
\usepackage{amsmath,amssymb,amsthm,mathtools}
\usepackage{array,tabularx,booktabs}
\usepackage[table]{xcolor}
\usepackage[most]{tcolorbox}
\usepackage[colorlinks=true,linkcolor=blue!55!black,citecolor=blue!55!black,urlcolor=blue!55!black]{hyperref}
\usepackage{graphicx}
\usepackage{subcaption}

\newtheorem{theorem}{Theorem}[section]
\newtheorem{lemma}[theorem]{Lemma}
\newtheorem{proposition}[theorem]{Proposition}
\newtheorem{corollary}[theorem]{Corollary}

\theoremstyle{definition}
\newtheorem{definition}[theorem]{Definition}

\theoremstyle{remark}
\newtheorem{remark}[theorem]{Remark}
\newtheorem{example}[theorem]{Example}

\allowdisplaybreaks

\newtcolorbox{protocolbox}[1][]{
  enhanced, breakable,
  colback=black!4, colframe=black!55, coltitle=white,
  fonttitle=\bfseries, title={#1},
  boxrule=0.6pt, arc=1.5pt, left=7pt, right=7pt, top=5pt, bottom=5pt,
  before skip=10pt, after skip=10pt
}
\definecolor{headgrey}{gray}{0.88}

\title{Device-Independent Conference Keys \\ from Parity-Extended Games}
\date{}
\author{Suvradip Chakraborty\thanks{Visa Research. \texttt{suvchakr@visa.com}}
\and
Ronak Ramachandran\thanks{UT Austin. \texttt{ronakr@utexas.edu} (Work done while at Visa.)}
\and
Aniruddha Sen \thanks{UT Austin. \texttt{aniruddhasen@utexas.edu} (Work done while at Visa.)}}

\begin{document}
\input{macros}

\maketitle

\input{abstract}

\pagenumbering{arabic}

\input{intro_new}

\input{technical-overview}
\input{prelims}
\input{parity_MSG1}

\input{CKA}
\input{security}

\input{noisy-key-rates}

\input{acknowledgements}

\bibliographystyle{alpha}
\bibliography{references}

\newpage
\appendix

\input{appendix}

\input{disclaimer}

\end{document}

%% file: macros.tex
\newcommand{\ket}[1]{\left|#1\right\rangle}
\newcommand{\bra}[1]{\left\langle#1\right|}
\newcommand{\braket}[2]{\left\langle#1\middle|#2\right\rangle}
\newcommand{\proj}[1]{\ket{#1}\!\bra{#1}}
\newcommand{\tr}{\operatorname{Tr}}
\newcommand{\id}{\mathbb{I}}
\newcommand{\GHZ}{\mathrm{GHZ}}
\newcommand{\Bell}[1]{\ket{\Phi^{#1}}}
\newcommand{\eps}{\varepsilon}
\newcommand{\wt}{\operatorname{wt}}
\newcommand{\pMSG}{\mathsf{pMSG}}
\newcommand{\MSG}{\mathsf{MSG}}
\newcommand{\pCHSH}{\mathsf{pCHSH}}
\newcommand{\Win}{\mathsf{W}}
\newcommand{\sgn}{\sigma}
\newcommand{\Hmin}{H_{\min}}
\newcommand{\leakEC}{\mathrm{leak}_{\mathrm{EC}}}
\newcommand{\Eve}{\mathsf{E}}
\newcommand{\cH}{\mathcal{H}}
\newcommand{\cM}{\mathcal{M}}
\newcommand{\cX}{\mathcal{X}}
\newcommand{\cA}{\mathcal{A}}
\newcommand{\wc}{\omega_{\mathrm C}}
\newcommand{\wq}{\omega_{\mathrm Q}}
\newcommand{\pguess}{p_{\mathrm{guess}}}
\newcommand{\Acore}{A_{\mathrm{core}}}

\newcommand{\as}[1]{{\color{teal}[AS: #1]}}
\newcommand{\rr}[1]{{\color{magenta}[RR: #1]}}

\newcommand{\Ftwo}{\mathbb{F}_2}
\newcommand{\pG}[1]{\mathsf{p}#1}
\newcommand{\Xsupp}{\operatorname{supp}_X}
\providecommand{\cY}{\mathcal{Y}}
\providecommand{\cB}{\mathcal{B}}

%% file: abstract.tex
\begin{abstract}
Device-independent conference key agreement (DI-CKA) lets a group of parties establish a shared secret key from untrusted quantum devices, with security certified by non-locality. Existing DI-CKA protocols are each built around a single Bell inequality, typically a multiparty variant of the CHSH game. DI-QKD protocols, in contrast, have been built from a much richer landscape of
non-local games, and it has remained unclear how to carry this landscape over to the conference setting. We introduce \emph{Parity-$G$ games}, which extend any two-player
game $G$ to $N$ players, for every $N$, provided $G$ has an optimal strategy in which one player measures Pauli observables. The extension preserves the quantum and classical values of $G$, and the security of the resulting $N$-party protocol follows from an analysis of the two-player game alone. Our framework recovers the Parity-CHSH game of Ribeiro, Murta and Wehner (Phys. Rev. A, 2018) as a special case. Applied to the Mermin--Peres Magic Square Game,
it yields a new $N$-player pseudo-telepathy game, the \emph{Parity Magic Square Game}, which ideal devices win in every round. We use it to construct the \emph{first} DI-CKA protocol based on a pseudo-telepathy game. We prove the protocol secure against coherent attacks. It produces up to two key bits per round, and at low noise its key rate exceeds that of the DI-CKA protocol based on the Parity-CHSH game.
\end{abstract}

%% file: intro_new.tex
\section{Introduction}\label{sec:intro_new}
Quantum key distribution (QKD) lets two parties establish a secret key whose security rests on the laws of quantum mechanics \cite{Bennett1984BB84,ekert1991quantum}. Standard security proofs assume that the devices behave as modeled, for instance that they prepare and measure qubits in prescribed bases. Real devices deviate from such models, and practical attacks (e.g., \cite{lydersen2010hacking})
exploit these deviations. Device-independent (DI)
QKD, also called fully DI QKD, removes the need to trust the quantum devices and seeks security even when those devices are uncharacterized or supplied by an adversary.\footnote{Fully DI QKD trusts neither the sources nor the measurement devices.
Measurement-device-independent QKD \cite{lo2012measurement,braunstein2012side} trusts the sources
but not the measurements, and one-sided DI QKD \cite{branciard2012one} trusts one party's
devices.} The parties treat their devices as black
boxes, and security follows from the observed input-output statistics alone, through the
winning probability of a non-local game \cite{acin2007device,arnon2018practical}. Vazirani and
Vidick proved that a slightly modified version of Ekert's protocol, which is based on the CHSH
game \cite{clauser1969proposed,ekert1991quantum}, is secure in this fully device-independent
setting \cite{vazirani2012fully}. DI-QKD has since been built on several other games, including
generalized CHSH inequalities \cite{sekatski2021device}, the Mermin--Peres magic square game
\cite{zhen2023device}, and, more broadly, on the class of monogamy-of-entanglement games \cite{cervero2025device}. The choice of
game determines what the test certifies and how much key each round yields.

Conference key agreement (CKA) generalizes QKD to $N$ parties who wish to establish a common secret key. When the parties share a multipartite entangled state such as the GHZ state, a single
measurement round gives all of them the same key bit. In some network topologies, this yields
higher key rates than running $N-1$ two-player QKD protocols \cite{epping2017multi,murta2020quantum}. DI-CKA has received much less attention than DI-QKD.
Ribeiro, Murta and Wehner \cite{ribeiro2018fully} gave the first fully DI-CKA protocol, based on the Parity-CHSH game,
an $N$-party extension of the CHSH game \cite{clauser1969proposed}. Subsequent work designed a genuinely multipartite Bell inequality for DI-CKA
\cite{holz2020genuine} and derived entropy bounds and key rates for multiparty DI protocols
\cite{grasselli2021entropy,grasselli2023boosting}. Wooltorton et~al.\ recently showed that DI-CKA
does not require genuine multipartite entanglement \cite{wooltorton2025genuine}.

Each of these protocols relies on a multiparty generalization of the CHSH inequality, such as the
Parity-CHSH inequality \cite{ribeiro2018fully} or the genuinely multipartite generalization of CHSH
by Holz et~al.\ \cite{holz2020genuine}. To our knowledge, DI-CKA protocols based on other
non-local games are not known. In particular, none is based on a \textit{pseudo-telepathy} game, that is, a game quantum players can win with certainty while classical players cannot \cite{brassard2005quantum}. Such games seem attractive for DI cryptography, since malicious devices cannot afford to lose even a single game in the ideal noiseless setting.
We have concrete evidence of such an advantage in the two-party setting: DI-QKD based on the magic square game, a pseudo-telepathy game, indeed yields more key per round than CHSH-based DI-QKD at low noise \cite{zhen2023device}. This prompts the following question:

\begin{quote}
\textbf{Question 1:} \itshape Would a DI-CKA protocol based on a pseudo-telepathy game outperform existing protocols?
\end{quote}

There are many obvious first choices for such an $N$-party pseudo-telepathy game: the $N$-player GHZ game or parallel two-player magic square games\footnote{See Page \pageref{par:parallel} for a discussion of DI-CKA protocols based on performing pairwise DI-QKD protocols in parallel.}, for instance.
The central difficulty, however, is proving security for such a protocol by upper-bounding the information that an adversary Eve can learn about the conference key the players are attempting to share. Unfortunately, the only known techniques for doing so require solving computationally intensive semi-definite programs \cite{navascues2007bounding,brown2024device} that become infeasible for $N$ even slightly larger than $2$ and do not say anything about asymptotic behavior as $N$ increases. Holz, Kampermann, and Bruß \cite{holz2020genuine}, for instance, are unable to prove any security guarantees on their $N$-party game or proposed DI-CKA protocol for more than $4$ parties. This leads to the following question:

\begin{quote}
\textbf{Question 2:} \itshape Is there a general way to construct $N$-party DI-CKA protocols with provable security for all $N$?
\end{quote}

\subsection{Our contributions}\label{sec:contributions}
We answer both questions affirmatively. Our main contributions are the following.
 
\begin{itemize}
    \item \textbf{Parity-$G$ games.} For every two-player game $G$ with an optimal strategy in which one player measures Pauli observables on shared EPR pairs (Definition~\ref{def:pauli-strategy}), we define an $N$-player game $\pG{G}_N$, the \emph{parity extension} of $G$ (Definition~\ref{def:parityG}). For every $N\ge2$, $\pG{G}_N$ has the same quantum and classical values as $G$ (Theorem~\ref{thm:parityG-values}), and its optimal value can be attained given copies of the $N$-party GHZ state (Proposition~\ref{prop:parityG-honest}).

    \item \textbf{DI-CKA protocols based on Parity-$G$ games.} We instantiate our Parity-$G$ games framework with the Mermin--Peres magic square game in particular, obtaining the \emph{Parity Magic Square game} $\pMSG_N$ (Definition~\ref{def:pmsg}), which is a new $N$-player pseudo-telepathy game with quantum value $1$ and classical value $8/9$ for every $N$ (Propositions~\ref{prop:wq}  and~\ref{prop:wc}). From it we construct the first DI-CKA protocol based on such a game, which produces up to two key bits per round, and we prove its security against coherent attacks (Sections~\ref{sec:protocol} and~\ref{sec:security-proofs}). Applied to the CHSH game, our Parity-$G$ framework also recovers the Parity-CHSH game of Ribeiro, Murta and Wehner \cite{ribeiro2018fully} as a special case (Example~\ref{ex:pchsh}), capturing the DI-CKA protocol based on it, and extends to other two-party games as well.

    \item \textbf{Reduction of security to the two-party game.} A major benefit of our framework is that the security of the $N$-party protocol reduces to that of the underlying two-party game $G$: every convex lower bound on the entropy of Alice's key output that holds for $G$ also holds for $\pG{G}_N$, for every $N$ (Lemma~\ref{lem:pmsg-Nto2-reduction}). This provides us with a family of DI-CKA protocols, based on Parity-$G$ games, in which the security analysis is relatively simple and does not increase in difficulty with the number of parties. This gives us an important advantage over the current best techniques for analyzing the security of general DI-CKA, which require solving much larger semi-definite programs via the NPA hierarchy \cite{navascues2007bounding,brown2024device} that quickly become infeasible for $N$ even slightly larger than $2$, and do not give any asymptotic guarantees.

\item \textbf{Key rates.} We compare the asymptotic key rates of the DI-CKA protocols based on $\pMSG_N$ and on the Parity-CHSH game under depolarizing noise (Section~\ref{sec:key-rates}). Under global depolarizing noise, our protocol based on $\pMSG_N$ achieves a higher key rate in the low
noise regime. Conversely, to show the tightness of our analysis, we also give an attack, adapted from \cite{adamson2020quantum}, under which an adversary learns Alice's entire key while the devices are able to win $\pMSG_N$ with probability up to $\tfrac59+\tfrac49\cos^2(\pi/8)\approx0.935$ (Appendix~\ref{app:msg_upper}). When this attack is combined with
\cite{lukanowski2023upper}, it yields an upper bound on the key rate. We also consider a local depolarizing noise model, where an advantage again persists for $\pMSG_N$ at low noise, though the key rates of both protocols decrease as $N$ increases.
\end{itemize}

%% file: technical-overview.tex
\section{Technical Overview}\label{sec:tech-overview}

Our technical overview is divided into three parts: an intuitive explanation of parity-extensions of 2-player games, an overview of our device-independent protocol for quantum conference key agreement based on the parity-extension of the Mermin--Peres Magic Square Game, and finally an analysis of key rates for this new protocol.

\subsection{$N$-Player Parity-Extensions of $2$-Player Games}

Optimal quantum strategies for 2-player non-local games often require players to share the state $$\Bell{+} := \frac{1}{\sqrt{2}}(\ket{00} + \ket{11}).$$ 
In many cases, however, replacing that state with 
$$\Bell{-} := \frac{1}{\sqrt{2}}(\ket{00} - \ket{11}) = (Z \otimes I) \Bell{+}$$ 
without Alice and Bob knowing has a very predictable effect on player outputs. 
In particular, because $\Bell{-}$ is just $\Bell{+}$ with a $Z$ gate locally applied to Alice's qubit,
if all of Alice's possible measurements commute or anticommute with $Z$---for instance, if her measurements are Paulis\footnote{Paulis are a strict subset; every basis along the equator of the Bloch sphere anticommutes with $Z$. Throughout this paper, we only restrict to Paulis for simplicity, but the results hold more generally.}---then performing her measurements on $\Bell{+}$ is equivalent to performing them on $\Bell{-}$ and flipping the outcome if the measurement anticommutes with $Z$. 

We call a $2$-player non-local game a \textit{Pauli game} if it has an optimal quantum strategy where parties share copies of $\Bell{+}$ and one party applies only Pauli measurements to their shares. 
In Section \ref{sec:game} we show that any Pauli game $G$ can be extended to an $N$-player non-local game Parity-$G$ that inherits the classical/quantum winning probabilities and entropy guarantees of the 2-player game. Almost every known 2-player non-local game is a Pauli game, so this technique is very general. Parity-CHSH is one special case.

The key insight enabling $N$-player parity-extensions of Pauli games is that if all but $2$ of $N$ players measure their qubit of a shared $N$-party GHZ state in the $X$ basis, the remaining two players---say Alice and Bob---will share $\Bell{+}$ or $\Bell{-}$, and which state they share will depend sensitively on every measurement outcome of the $N-2$ others. 
In particular, Alice and Bob will share $\frac{1}{\sqrt{2}}(\ket{00} + (-1)^p\ket{11})$, where $p \in \{0,1\}$ is the \textit{parity} of the $N-2$ measurement outcomes.
Alice and Bob can then use their shared state to play any Pauli game, so long as the referee flips Alice's raw output appropriately based on $p$ and Alice's input (which determines whether her measurement will anticommute with $Z$) before evaluating the game's winning condition.

Notably, the players other than Alice and Bob receive no input\footnote{Thus, unlike other $N$-party non-local games that require non-signaling between all $N^2$ pairs of players, Parity-$G$ games only require non-signaling between $N-1$ pairs (Alice and everyone else), which is quadratically easier to enforce.}. In fact, the new $N$-party game can be won by simply having Alice and Bob independently play the 2-player game using shared copies of $\Bell{+}$, while all other players output $0$, so this game is not self-testing for an $N$-qubit GHZ state. Consistent with \cite{wooltorton2025genuine}, our game also does not require genuine multi-partite entanglement to achieve DI-CKA. As we'll see, all ability to arrive at shared bits of key comes from the tension between playing this game during test rounds and passing the error correction step of our protocol.

\paragraph{The Parity Magic Square Game} Our DI-CKA protocol relies on the $N$-player parity-extension, $\pMSG_N$, of the $2$-player Mermin--Peres Magic Square Game, $\MSG$. 

In $\MSG$, Alice and Bob claim to have some $3 \times 3$ grid of $0$'s and $1$'s where the rows have an even number of $1$'s and the columns have an odd number of $1$'s. To check this claim, the referee asks Alice for the $3$ entries in one row and Bob for the $3$ entries in one column. The players win if Alice's entries have an even number of $1$'s, Bob's have an odd number of $1$'s, and Alice and Bob agree on the entry where their row and column intersect. 

This game is a Pauli game, since an optimal quantum strategy exists where Alice and Bob share two copies of $\Bell{+}$ and Alice makes only Pauli measurements on her two qubits, as shown in Figure \ref{fig:peres tech overview}.
\begin{figure}[h]
    \centering
    {\renewcommand{\arraystretch}{1.3}
    \begin{tabularx}{0.5\textwidth}{>{\centering\arraybackslash}X|>{\centering\arraybackslash}X|>{\centering\arraybackslash}X|>{\centering\arraybackslash}X|}
        & \cellcolor[gray]{0.9}$a_0$ & \cellcolor[gray]{0.9}$a_1$ & \cellcolor[gray]{0.9}$a_2$ \\\hline
        \cellcolor[gray]{0.9}$x=0$ & $\id\otimes Z$ & $Z\otimes\id$ & $Z\otimes Z$\\\hline
        \cellcolor[gray]{0.9}$x=1$ & $X\otimes\id$ & $\id\otimes X$ & $X\otimes X$\\\hline
        \cellcolor[gray]{0.9}$x=2$ & $-X\otimes Z$ & $-Z\otimes X$ & $Y\otimes Y$\\\hline
    \end{tabularx}}
    \caption{Alice's Pauli measurements for outputs $a_0, a_1, a_2 \in \{0,1\}$ on input $x \in \{0,1,2\}$.}
    \label{fig:peres tech overview}
\end{figure}

In one optimal quantum strategy for $\pMSG_N$, the two $\Bell{+}$ states are replaced by two GHZ states. Let $p_0$ and $p_1$ be the parities of the $X$ basis measurement outcomes of the players other than Alice and Bob for the first and second GHZ state respectively. To modify the winning condition of $\MSG$ to work for $\pMSG_N$, Alice's raw outputs must be flipped if $p_0$ is $1$ and her measurement anticommutes with $Z$ on the first qubit, and again if $p_1$ is $1$ and her measurement anticommutes with $Z$ on the second qubit.
In other words, before checking the winning condition of $\MSG$, the referee must XOR the bits specified in Figure \ref{fig:swap mask tech overview} into each of Alice's three outputs $a_0, a_1, a_2 \in \{0,1\}$ depending on her input $x \in \{0,1,2\}$.
\begin{figure}[h]
    \centering
    {\renewcommand{\arraystretch}{1.3}
    \begin{tabularx}{0.5\textwidth}{>{\centering\arraybackslash}X|>{\centering\arraybackslash}X|>{\centering\arraybackslash}X|>{\centering\arraybackslash}X|}
    & \cellcolor[gray]{0.9}$a_0$ & \cellcolor[gray]{0.9}$a_1$ & \cellcolor[gray]{0.9}$a_2$ \\\hline
    \cellcolor[gray]{0.9}$x=0$ & $0$ & $0$ & $0$ \\\hline
    \cellcolor[gray]{0.9}$x=1$ & $p_0$ & $p_1$ & $p_0 \oplus p_1$ \\\hline
    \cellcolor[gray]{0.9}$x=2$ & $p_0$ & $p_1$ & $p_0 \oplus p_1$ \\\hline
    \end{tabularx}}
    \caption{The referee's corrections to Alice's outputs for $\pMSG_N$.}
    \label{fig:swap mask tech overview}
\end{figure}

We call this table of correction bits the \textit{swap mask} of $\pMSG_N$. By hard-coding these corrections into the winning condition, we will have fully specified the game.

\subsection{The Conference Key Agreement Protocol}
Now, we show how to construct a conference key agreement protocol based on $\pMSG_N$. This protocol extends to Parity-$G$ games as well, but for ease of exposition we focus on just analyzing the protocol with $\pMSG_N$. We give an overview here, and the protocol is more formally described in Section \ref{sec:protocol}.

Consider $N$ parties $A, B_1, B_2,...,B_{N-1}$ where the first \emph{two} parties play distinguished roles, and are referred to as Alice ($A$) and Bob ($B_1$). Since we are in the device independent setting, let us assume some (untrusted) multipartite quantum state $\rho$ is shared between all the players. Broadly, we divide the protocol into $m$ rounds, with each round being designated as a \emph{test} round with probability $\gamma$ and a \emph{key} round with probability $1-\gamma$. The players do the following.
\begin{itemize}
    \item In key rounds, all players ask their devices to measure their state in the $Z$ basis.
    \item In test rounds, the players $B_2,...,B_{N-1}$ ask their devices to measure their state in the $X$ basis. Alice and Bob ask their devices to perform measurements according to the optimal quantum strategy of $\MSG$. Combined, this is the optimal winning strategy for $\pMSG_N$, as described in the last section.
\end{itemize}
After all the rounds have been completed, the players first do a \emph{parameter estimation} step, in which they publicly communicate their test-round outcomes, which allows them to calculate their average winning probability over multiple rounds of $\pMSG_N$. If the winning probability $\omega$ is below some predetermined threshold $\omega_{\mathrm{tol}}$, they abort the protocol. If not, they carry out an \emph{error correction} and \emph{privacy amplification} step to verify they all received the same raw key bits in the key rounds, and to extract fully secret bits of key from their raw key bits. Clearly, if the devices behaved honestly and shared a tensor product of two GHZ states with the players, they would get bits of key. But why should we expect this protocol to be secure? Intuitively, the security of this protocol mainly comes from the observation that the devices are unable to distinguish key rounds and \emph{certain} test rounds. In particular, Alice's device can't distinguish between the case where it is a key round, or the case when her test round input for $\MSG$ is $0$ (which is the first row). Both of these cases correspond to Alice's device being asked to measure $Z \otimes I$ and $I \otimes Z$. Also note that the designation of \emph{each} round is probabilistically chosen. So, in order to pass the winning condition on the test rounds, the devices are forced to play honestly and report correct measurement outcomes. Thus, on key rounds they are forced to do the same as well. 

Note that in order to just win on the test rounds, the devices could just share a tensor product of two EPR pairs with Alice and Bob, and have the other players just output 0. This ensures Alice and Bob get bits of secret key, but the other players do not. However, this is noticed by the players in the error correction step and the protocol is aborted. Thus, the devices are forced to act honestly in order to pass both these checks.

\subsection{Security analysis and key rates}
There are two parameter choices the players have when carrying out the above conference key agreement protocol: $\omega_{\mathrm{tol}}$, the tolerated winning probability, and $\gamma$, the test-round probability. Let us start by considering the security of a particular round -- in particular we will just focus on setting $\omega_{\mathrm{tol}}$. The ideal winning probability of $\pMSG_N$ is 1, but we will often wish to set $\omega_{\mathrm{tol}} < 1$, since there may be noise in the quantum channels that brings the winning probability down. Fortunately, it turns out that we can still get bits of secret key depending on how far below 1 we set $\omega_{\mathrm{tol}}$. Note that we must still set $\omega_{\mathrm{tol}} > 8/9$ to get any non-zero bits of key, since it is known that classical adversaries (the devices) can win $\pMSG_N$ deterministically with winning probability $8/9$, so their winning strategy or reported measurement outcomes could be pre-decided by an adversary Eve. In fact, we show that we need $\omega_{\mathrm{tol}} \gtrsim 0.935$, since a better \emph{quantum} strategy exists for the devices, as we show in Appendix \ref{app:msg_upper}.

We formally prove the per round security of the above protocol by lower bounding the per round \emph{conditional von Neumann entropy} $H(A|E)$ of Alice's key bitstring, over all possible shared states and measurement strategies which can be devised by Eve (who we assume constructs the untrusted devices), such that she succeeds with probability $\ge \omega_{\mathrm{tol}}$ on $\pMSG_N$. Intuitively, this just lower bounds how many bits of secret key Alice is expected to get per round as a function of $\omega_{\mathrm{tol}}$, so for us $0 \le H(A|E) \le 2$. No known analytical solution for this exists, so we resort to using numerical techniques which involve solving semi-definite programs (SDPs) via the NPA hierarchy \cite{navascues2007bounding,brown2024device}, to calculate a lower bound. It is often challenging to calculate such lower bounds in device-independent protocols \cite{brown2024device}, especially if the number of players is large, since this makes the size of the SDPs blow up. However, we get around this problem for our protocol by showing a reduction of the security of the $N$-party $\pMSG_N$ to the two party $\MSG$, by Lemma \ref{lem:pmsg-Nto2-reduction}. The proof relies on expressing the entropy lower bound of $\pMSG_N$ as a convex combination of entropy lower bounds of all possible $\MSG$ games Alice and Bob could be playing (up to a relabeling of their inputs) in the test rounds. Significantly, this technique generalizes to Parity-$G$ games, which provides a new framework for proving security of a large family of conference key agreement protocols by reducing to the much simpler security bound of just 2 party games. Finally, we extend our security over all $n$ rounds against coherent attacks (the protocol has $n$ rounds and $N$ parties), by appealing to the Generalized Entropy Accumulation Theorem \cite{metger2022generalised}. We refer to Section \ref{sec:security-proofs} for our precise formulation and proofs.

\paragraph{Key rates} We evaluate our protocol by analyzing its \emph{asymptotic key rate}. The key rate $r_\infty = H(A|E)-H(A|B)$ quantifies the average number of key bits we get each round over all rounds, when setting the test round probability $\gamma \rightarrow 0$ as the number of rounds increases. It is a function of $\omega_{\mathrm{tol}}$, and accounts for both the entropy lower bound when $\omega < 1$, and the security loss in the error correction step. 

We mainly compare our conference key agreement protocol based on $\pMSG_N$ with the work of \cite{ribeiro2018fully}, whose protocol is based on the \emph{Parity-CHSH} game. As the name suggests, our framework of Parity-$G$ games captures the Parity-CHSH game, as explained in Example \ref{ex:pchsh}, roughly since it can be expressed as a suitable $N$-party extension of the CHSH game. Moreover, we show that our protocol based on $\pMSG_N$ achieves strictly better key rates in the low noise regime. Note that $r_\infty$ can be expressed as a function of noise since it influences the expected winning probability $\omega$ of our $N$-party games, and we set the threshold $\omega_{\mathrm{tol}}$ accordingly. 

For instance, our protocol achieves higher key rate when the \emph{global} depolarizing noise probability $q$ is below $\approx 5\%$; also when the \emph{local} depolarizing noise probability is below $\approx 1.5\%$, setting the number of parties $N=3$. Conversely, our protocol has lower key rate when $q \gtrsim 10\%$. Both protocols exhibit zero key rate when $q \gtrsim 14\%$.  We calculate both lower \emph{and} upper bounds on key rates for both protocols to show the gap in tightness of our comparison. We refer to Section \ref{sec:key-rates} for a detailed explanation of our techniques and comparison between key rates of both protocols.

%% file: prelims.tex
\section{Preliminaries}\label{sec:prelim}

\subsection{Quantum states and entropies}
We assume the reader is familiar with quantum information essentials. For a primer on quantum states and measurements, see \cite{nielsen2000quantum}. We briefly introduce some useful definitions and notation regarding states and entropies.

\begin{itemize}
    \item The canonical \emph{Bell state} or \emph{EPR pair} is defined as $\Bell{+}=\frac{1}{\sqrt{2}}(\ket{00} + \ket{11})$, the other Bell states are $\Bell{-}, \ket{\Psi^+}, \ket{\Psi^-}$.
    \item A \emph{classical-quantum} state is defined as a state $\rho_{KE} = \sum_{x \in \mathcal{X}}\ket{x}\bra{x} \otimes \rho_{E,x}$ where $E$ is a quantum system and $\mathcal{X}$ is a classical alphabet.
    \item The \emph{von Neumann entropy} of a state $\rho$ is defined as $H(A)_\rho=-\tr(\rho_A\log\rho_A)$. 
    \item The \emph{conditional von Neumann entropy} of a bipartite state $\rho_{AB}$ is defined as $H(A|B)_\rho=H(AB)_\rho-H(B)_\rho$. 
    \item The \emph{min-entropy} of a classical-quantum state $\rho_{KE}$  is defined as $\Hmin(K|E)_\rho=\\ -\log_2(\pguess(K|E))$, where $\pguess(K|E)=\max_{\{M_k\}}\sum_k\tr(M_k\rho_E^k)$ is the optimal probability of guessing the classical bitstring $K$ by measuring the state $\rho_E$. 
    \item The \emph{smooth min-entropy} of a classical-quantum state is defined as $\Hmin^\eps(K|E)_\rho = \\ \sup_{\rho'}{H_{\min}(K|E)}_{\rho'}$, where the supremum is over $\eps$-close states $\rho'$ in terms of their purified distance $\sqrt{1-F(\rho,\rho')^2} \le \eps$, where $F(\rho,\rho')$ denotes the fidelity between $\rho$ and $\rho'$. This is the variant of min-entropy we will mainly be using throughout the paper, since it governs randomness extraction.
    \item The \emph{min-tradeoff function} is a function which lower bounds the conditional von Neumann entropy of a particular non-local game, as a function of its winning probability.
\end{itemize}

\subsection{Non-Local Games and Pseudo-Telepathy}

Informally, non-local games are collaborative games where players that cannot communicate with each other receive inputs from a referee and must return satisfactory outputs to the referee.

\begin{definition}[Non-local game]\label{def:nlg}
An $N$-player non-local game $G=(\pi,\vec\cX,\vec\cA,V)$ has finite input sets $\cX_j$,
output sets $\cA_j$, an input distribution $\pi$ on $\vec\cX=\cX_1\times\cdots\times\cX_N$,
and a predicate $V:\vec\cX\times\vec\cA\to\{0,1\}$. A referee samples $\vec x\sim\pi$, sends
$x_j$ to player $j$, each player outputs
$a_j$ \emph{without communicating} with the others, and all players win together iff $V(\vec x,\vec a)=1$. 
\end{definition}

Preventing communication between players is crucial, since communicating players could trivially win any such game. In practice, non-communication is enforced by placing players far enough apart that signals, which are limited by the speed of light, cannot travel between players before the game ends. This is why we call such games \textit{non-local}---players are separated from each other. Since communication must be prevented between every pair of players, the difficulty of preventing communication scales as $N^2$, becoming more demanding as $N$ grows.

Every quantum strategy for a non-local game is defined by a shared quantum state and input-dependent observables (POVMs) that players can apply to their part of the state\footnote{In this paper we will restrict to local observables rather than more general commuting observables.}. 
The classical and quantum values of a non-local game $G$, denoted $\wc(G)$ and $\wq(G)$, are the maximum possible winning probabilities of classical and quantum strategies respectively. 
Because classical strategies are a strict subclass of quantum strategies where players can only share randomness instead of quantum states, $\wc(G)\le\wq(G)$. 

What makes non-local games interesting is that quantum players can win some games with \textit{strictly greater} probability than classical players by using entanglement to correlate their outputs in ways that classical players cannot. 
For this reason, winning such a game with much higher probability than $\wc(G)$ can certify that players shared some highly entangled state. Games which certify a specific state are called \textit{self-testing} \cite{mayers2004selftesting}. 

Games where the optimal quantum strategy succeeds with certainty, called \emph{pseudo-telepathy} games, can be especially powerful, since losing a single game certifies non-classicality.

\begin{definition}[Pseudo-telepathy]
$G$ is a \emph{pseudo-telepathy} game if $\wq(G)=1>\wc(G)$. 
\end{definition}

\subsection{The Mermin--Peres magic square game}\label{sec:msg}
Here, we introduce a well-known non-local game known as the Magic Square game or $\MSG$; it will be useful in our construction of a multiparty game later on. We also give brief proofs of its optimal classical and quantum winning strategies.

\begin{definition}[Magic Square game, $\MSG$]\label{def:msg}
This is a non-local game played between two players, Alice and Bob.
Alice and Bob both receive inputs $x,y$ respectively from the set $\{0,1,2\}$. We assume their inputs are uniform random and independent. They each output two vectors $a = (a_0,a_1)$ and $b = (b_0,b_1)$; then set $a_2 = a_0 \oplus a_1$, $b_2 = b_0 \oplus b_1 \oplus 1$. They win iff $a_y=b_x$.
\end{definition}
For intuition, it is useful to think of a strategy for this game as a $3 \times 3$ matrix $M$, where Alice and Bob's inputs are a row and column index of the matrix. They have to output an even parity row and odd parity column respectively, and they win iff their intersecting cell has the same value.

\begin{proposition}[Classical value of $\MSG$]\label{prop:msgc}
$\wc(\MSG)=8/9$.
\end{proposition}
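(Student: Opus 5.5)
We have to prove two inequalities: that $\wc(\MSG)\ge 8/9$, by exhibiting a strategy, and that $\wc(\MSG)\le 8/9$, by a parity argument.

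\textbf{Reduction to deterministic strategies.} A classical strategy is a probability distribution (via shared randomness) over deterministic strategies. The winning probability is linear in this distribution, so it suffices to bound the winning probability of deterministic strategies.

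A deterministic strategy is a pair of functions. Alice's function assigns to each row $x\in\{0,1,2\}$ a triple $(a_0,a_1,a_2)$ with $a_2=a_0\oplus a_1$, so each row has even parity. Bob's function assigns to each column $y$ a triple $(b_0,b_1,b_2)$ with $b_2=b_0\oplus b_1\oplus 1$, so each column has odd parity. Together these define two $3\times 3$ binary matrices: Alice's matrix $A$, whose row $x$ is her answer on input $x$, and Bob's matrix $B$, whose column $y$ is his answer on input $y$. On input $(x,y)$ the players win iff $A_{xy}=B_{xy}$.

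\textbf{Upper bound.} I will show that some cell must have $A_{xy}\neq B_{xy}$. Sum all entries of $A$ modulo 2, grouping by rows: every row is even, so the total is $0$. Sum all entries of $B$ modulo 2, grouping by columns: every column is odd and there are three columns, so the total is $1$. Hence $A\neq B$, so there is at least one cell $(x,y)$ with $A_{xy}\neq B_{xy}$. The inputs are uniform on the $9$ cells, so every deterministic strategy loses with probability at least $1/9$. By the reduction above, $\wc(\MSG)\le 8/9$.

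\textbf{Lower bound.} Take the matrix with a $1$ in cell $(2,2)$ and $0$ elsewhere, and let Alice and Bob both answer according to it. Bob's columns $0$ and $1$ are all-zero and thus even, so they are invalid for him. To fix this, set $B$ equal to $A$ except that $B_{2,0}=B_{2,1}=1$. Now every column of $B$ is odd and every row of $A$ is even. The two matrices disagree on exactly two cells, which only gives $7/9$, so this attempt is not good enough.

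The fix is to choose Alice's matrix $A=0$ (all zeros, every row even) and let $B$ equal $A$ except in the single cell $(2,2)$. Column $2$ of $B$ is then odd, but columns $0$ and $1$ are still even. So a one-cell difference requires a matrix that is simultaneously row-even and column-odd except in one cell.

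Take $B$ to be all zeros except its entire bottom row, which is set to $1$. Every column of $B$ has exactly one $1$, so every column is odd. Let $A$ agree with $B$ everywhere except $A_{2,2}=0$. Then the bottom row of $A$ is $(1,1,0)$, which is even, and the other two rows are zero and hence even. So $A$ is a valid strategy for Alice, $B$ is a valid strategy for Bob, and they differ only in cell $(2,2)$. This deterministic strategy wins with probability $8/9$, so $\wc(\MSG)\ge 8/9$.

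The only step with any subtlety is the parity double-counting in the upper bound; the rest is routine.
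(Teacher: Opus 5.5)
Your proof is correct and follows the same route as the paper: reduce to deterministic strategies by convexity, then observe that rows summed with even parity and columns summed with odd parity force at least one losing cell among the nine. Your explicit strategy (Bob's bottom row all ones, Alice's bottom row $(1,1,0)$, everything else zero) also makes concrete the $8/9$ lower bound that the paper only asserts. In a write-up, delete the two abandoned attempts, because the first one contains a false claim: Alice's bottom row $(0,0,1)$ has odd parity, so that $A$ was never a valid strategy for her.
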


\begin{proof}
Consider Alice and Bob's classical strategies as a matrix. By convexity, it suffices to restrict to deterministic strategies $x\mapsto a^{(x)}$, $y\mapsto b^{(y)}$. For a strategy which wins over all possible cells that can appear at the intersection of Alice and Bob's output, the matrix $M_{xy}:=a^{(x)}_y$ would need to have $a^{(x)}_y=b^{(y)}_x$ for
all $x,y$. However, summing all entries by rows (Alice even) gives $\bigoplus_{x,y}M_{xy}=0$, while summing all entries by columns (Bob odd) gives $\bigoplus_{x,y}M_{xy}\equiv1+1+1\equiv1$. This results in a
contradiction. Hence, at least one of the nine equally likely cells is lost, so $\wc\le 8/9$. Thus, such a strategy attains $8/9$, and it is not possible to improve on this, which we can simply check by enumerating all such possible matrices.
\end{proof}

\noindent The quantum strategy uses the \emph{Peres square} of two-qubit $\pm1$ observables
\begin{equation}\label{eq:peres}
(P_{xy})=
\begin{array}{|c|c|c|}\hline
\id\otimes Z & Z\otimes\id & Z\otimes Z\\\hline
X\otimes\id & \id\otimes X & X\otimes X\\\hline
-X\otimes Z & -Z\otimes X & Y\otimes Y\\\hline
\end{array}
\end{equation}
($x$ = row, $y$ = column).

\begin{lemma}[Peres relations]\label{lem:peres}
In \eqref{eq:peres}, the three observables in each row commute and multiply to
$+\,\id\otimes\id$; the three in each column commute and multiply to $-\,\id\otimes\id$.
\end{lemma}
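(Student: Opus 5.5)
The plan is a direct verification from the single-qubit Pauli algebra, using $XY = iZ$, $YZ = iX$, $ZX = iY$ (so $XZ = -iY$, and so on), $P^2 = \id$ for each Pauli $P$, and the rule that tensor products multiply factorwise:
\[
(A\otimes B)(C\otimes D) = AC\otimes BD.
\]
Commutation is handled first by a parity count. Two tensor-product Paulis $P_1\otimes P_2$ and $Q_1\otimes Q_2$ commute if and only if the number of tensor slots in which the factors anticommute is even. Two distinct nontrivial single-qubit Paulis anticommute, while $\id$ commutes with everything.

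For the rows:
\begin{itemize}
\item Row $0$ consists only of $Z$'s and identities, so all pairs commute trivially. The same holds for row $1$ with $X$'s.
\item In row $2$, the pair $X\otimes Z$ and $Z\otimes X$ anticommutes in both slots. Each of them against $Y\otimes Y$ also anticommutes in both slots. Two anticommuting slots is even, so every pair commutes.
\end{itemize}

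For the columns:
\begin{itemize}
\item Column $0$ is $\id\otimes Z$, $X\otimes\id$, $X\otimes Z$. No slot contains two distinct nontrivial Paulis among any pair, so every pair commutes.
\item Column $1$ is symmetric to column $0$.
\item In column $2$, each pair among $Z\otimes Z$, $X\otimes X$, $Y\otimes Y$ anticommutes in both slots, so every pair commutes.
\end{itemize}

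Once commutation is established, the order of multiplication is irrelevant, and each product can be computed slotwise. The rows:
\begin{itemize}
\item Row $0$: $(\id\otimes Z)(Z\otimes\id)(Z\otimes Z) = Z^2\otimes Z^2 = \id\otimes\id$.
\item Row $1$: the same computation with $X$ gives $\id\otimes\id$.
\item Row $2$: the two minus signs cancel, leaving $(XZ\otimes ZX)(Y\otimes Y)$. This equals $(-iY\otimes iY)(Y\otimes Y) = Y^2\otimes Y^2 = \id\otimes\id$.
\end{itemize}

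The columns:
\begin{itemize}
\item Column $0$: $(\id\otimes Z)(X\otimes\id)(-X\otimes Z) = -X^2\otimes Z^2 = -\id\otimes\id$.
\item Column $1$: symmetrically, $-\id\otimes\id$.
\item Column $2$: $(Z\otimes Z)(X\otimes X)(Y\otimes Y) = ZXY\otimes ZXY$. Since $ZX = iY$, we get $ZXY = iY^2 = i\id$, so the product is $(i)(i)\,\id\otimes\id = -\id\otimes\id$.
\end{itemize}

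The only point requiring care is tracking the phases $\pm i$ in row $2$ and column $2$. I would double-check these, for instance by confirming that $XZ = -iY$ through the matrix product. The rest is routine bookkeeping.
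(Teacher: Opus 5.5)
Your proposal is correct and follows the same approach as the paper: you check commutation by counting anticommuting tensor slots, then compute the row and column products directly. The paper leaves the product computation as ``a straightforward calculation''; your explicit phase bookkeeping fills that in, and each phase is right, namely $(-i)(i)=1$ for row $2$ and $(i)(i)=-1$ for column $2$.
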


\begin{proof}
Commutation is checked by counting anti-commuting tensor factors (an even number implies that the observables commute). This can be verified by a straightforward calculation.
\end{proof}

\begin{proposition}[Quantum value of $\MSG$]\label{prop:msgq}
$\wq(\MSG)=1$. This is achieved by a strategy using two EPR pairs $\Bell{+}^{\otimes2}$: Alice measures the row-$x$ observables of \eqref{eq:peres}, Bob the conjugated column-$y$ observables.
\end{proposition}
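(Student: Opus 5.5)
Since winning probabilities never exceed $1$, it suffices to exhibit a strategy that wins with certainty. I would take the strategy stated in Proposition~\ref{prop:msgq}: the players share $\Bell{+}^{\otimes 2}$, Alice measures the three commuting observables $P_{x0},P_{x1},P_{x2}$ of row $x$ of \eqref{eq:peres}, and Bob measures the transposed (complex-conjugated) observables $\overline{P_{0y}},\overline{P_{1y}},\overline{P_{2y}}$ of column $y$. All outcomes are recorded as bits via $\pm1\mapsto 0/1$.

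\emph{Step 1: both measurements are well defined, and Alice's output satisfies the parity constraint.} By Lemma~\ref{lem:peres}, the observables in a row commute, so Alice can measure them jointly, for instance in a common eigenbasis. Since their product is $+\id\otimes\id$, the product of her three $\pm1$ outcomes is $+1$, which gives $a_0\oplus a_1\oplus a_2=0$. She therefore only needs to report $(a_0,a_1)$, consistent with Definition~\ref{def:msg}. Complex conjugation preserves commutation and maps products to products, and $\overline{-\id}=-\id$. Hence Bob's column observables also commute and multiply to $-\id$, so $b_0\oplus b_1\oplus b_2=1$, exactly as required.

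\emph{Step 2: agreement on the intersecting cell.} This is the only real computation, and the main point is the identity
\[
(M\otimes\id)\Bell{+}^{\otimes 2}=(\id\otimes M^{T})\Bell{+}^{\otimes 2}
\]
for any operator $M$ on Alice's two qubits. For a Hermitian observable $M$ we have $M^T=\overline{M}$. Take $M=P_{xy}$, which appears in both Alice's row and Bob's column. This gives
\[
\bra{\Bell{+}^{\otimes2}}\,P_{xy}\otimes\overline{P_{xy}}\,\ket{\Bell{+}^{\otimes2}}=\bra{\Bell{+}^{\otimes2}}\,P_{xy}^2\otimes\id\,\ket{\Bell{+}^{\otimes2}}=1,
\]
because $P_{xy}^2=\id$. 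Alice's outcome on $P_{xy}$ and Bob's outcome on $\overline{P_{xy}}$ are $\pm1$ random variables whose product has expectation $1$, so they are equal with probability $1$; that is, $a_y=b_x$.

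One subtlety needs care. Alice's and Bob's outcomes for the shared cell come from joint measurements of commuting families, not from measuring $P_{xy}$ alone. The joint statistics of the two marginal observables are still given by the expectation of their product, because each party's marginal outcome is the eigenvalue of that observable within her or his joint measurement. So the argument above is valid.

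Combining Steps 1 and 2, the strategy wins for every input pair $(x,y)$, so $\wq(\MSG)\ge 1$, and therefore $\wq(\MSG)=1$. Together with Proposition~\ref{prop:msgc}, this also shows that $\MSG$ is a pseudo-telepathy game.
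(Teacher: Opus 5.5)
Your proof is correct and follows essentially the same route as the paper. The row and column product relations of Lemma~\ref{lem:peres} give the parity constraints, and the ricochet identity together with $\overline{P}=P^{T}$ gives agreement on the shared cell. The differences are only cosmetic: you argue via $\langle P_{xy}\otimes\overline{P_{xy}}\rangle=1$ instead of the eigenvector identity $(P_{xy}\otimes\overline{P_{xy}})\Bell{+}^{\otimes2}=\Bell{+}^{\otimes2}$, and you make explicit the joint-measurement subtlety that the paper leaves implicit.
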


\begin{proof}
By Lemma~\ref{lem:peres}, Alice's row product is $+\id$, so her three outcomes have even parity; Bob's conjugated column product is $-\id$, giving odd parity. We can check consistency of outputs on the intersecting cell by the ricochet identity $(M\otimes\id)\Bell{+}=(\id\otimes M^{T})\Bell{+}$, also using $\overline P=P^{T}$ for a Hermitian $P$, which are their observables, gives us the identity $(P_{xy}\otimes\overline{P_{xy}})\Bell{+}^{\otimes2}=\Bell{+}^{\otimes2}$, so the measurement outcomes of their observables satisfy $a_y=b_x$ with probability one.
\end{proof}


 

\subsection{Monogamy of entanglement}\label{sec:monogamy}

\begin{lemma}[A pure subsystem factorizes off]\label{lem:purefactor}
If a bipartite state $\ket\psi_{XY}$ and its subsystem state $\rho_X=\tr_Y\proj\psi$ are both pure states, then $\ket\psi=\ket{u}_X\otimes\ket{w}_Y$.
\end{lemma}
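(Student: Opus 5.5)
The plan is to use the Schmidt decomposition of $\ket\psi_{XY}$. We write
\[
\ket\psi=\sum_{i=1}^r \sqrt{\lambda_i}\,\ket{u_i}_X\otimes\ket{w_i}_Y,
\]
where the $\{\ket{u_i}\}$ and $\{\ket{w_i}\}$ are orthonormal families, each $\lambda_i>0$, and $\sum_i\lambda_i=1$. Tracing out $Y$ then gives
\[
\rho_X=\sum_i \lambda_i\proj{u_i},
\]
and this expression is a spectral decomposition of $\rho_X$.

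Next we use the hypothesis that $\rho_X$ is pure. This means $\rho_X$ is a rank-one projector, or equivalently $\tr(\rho_X^2)=\sum_i\lambda_i^2=1$. Since the $\lambda_i$ are positive and sum to $1$, we have $\sum_i\lambda_i^2\le\max_i\lambda_i\le 1$. Equality holds only if there is a single index, so $r=1$ and $\lambda_1=1$. The Schmidt decomposition therefore collapses to $\ket\psi=\ket{u_1}_X\otimes\ket{w_1}_Y$. Setting $\ket u=\ket{u_1}$ and $\ket w=\ket{w_1}$ gives the claim.

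If we prefer to avoid invoking the Schmidt decomposition, there is an alternative route. Write $\rho_X=\proj{u}$ and complete $\ket u$ to an orthonormal basis $\{\ket u,\ket{u_2^\perp},\dots\}$ of $X$. Expand $\ket\psi=\ket u\otimes\ket{w}+\sum_j\ket{u_j^\perp}\otimes\ket{w_j}$ with unnormalized vectors $\ket{w_j}$ on $Y$. Then $\bra{u_j^\perp}\rho_X\ket{u_j^\perp}=\|w_j\|^2$, and this must equal $0$ because $\rho_X$ is supported on $\ket u$. Hence every $w_j=0$, and $\ket\psi=\ket u\otimes\ket w$ with $\|w\|=1$.

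Neither route presents a real obstacle. The only point needing care is the equality case $\sum\lambda_i^2=1$, which forces a single Schmidt coefficient. In the second route, the corresponding step is checking that the diagonal entries of $\rho_X$ outside $\ket u$ are exactly the squared norms $\|w_j\|^2$.
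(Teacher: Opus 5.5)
Your proof is correct, and your main route is the paper's own argument: Schmidt-decompose $\ket\psi$, read off $\rho_X=\sum_k p_k\proj{u_k}$, and observe that purity forces a single Schmidt coefficient equal to $1$. You justify that equality case more carefully than the paper, via $\sum_i\lambda_i^2\le\max_i\lambda_i$; your second, basis-completion argument is also valid and avoids the Schmidt decomposition, but it is not needed.
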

\begin{proof}
Consider the Schmidt decomposition $\ket\psi=\sum_k\sqrt{p_k}\ket{u_k}\ket{w_k}$. Then, clearly the subsystem
$\rho_X=\sum_kp_k\proj{u_k}$ is pure iff some $p_{k_0}=1$, which implies $\ket{\psi}$ is a product pure state.
\end{proof}

\begin{lemma}[Maximal entanglement is monogamous]\label{lem:monogamy}
Let $\ket\psi_{ABC}$ be a pure state and suppose $\cH_A=\cH_{A_0}\otimes\cH_{A_1}$ such that $\rho_{A_0B}$ is a maximally entangled pure state between the two registers. Then, register $A_0$ must be uncorrelated with $C$.
\end{lemma}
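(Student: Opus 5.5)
We will show that $\rho_{A_0C}=\rho_{A_0}\otimes\rho_C$, and in fact that $\rho_{A_0C}=\frac{\id}{d}\otimes\rho_C$, where $d=\dim\cH_{A_0}$. The plan is to reduce this to Lemma~\ref{lem:purefactor}, applied to the tripartite pure state $\ket\psi_{A_0,(A_1BC)}$ viewed in the right bipartition.

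First, the hypothesis says that the reduced state $\rho_{A_0B}=\tr_{A_1C}\proj\psi$ is pure: it equals $\proj{\Phi}_{A_0B}$ for some maximally entangled $\ket\Phi$. Regard $\ket\psi$ as a bipartite pure state on $X=A_0B$ and $Y=A_1C$. Its subsystem state $\rho_X=\proj\Phi$ is pure. Lemma~\ref{lem:purefactor} then gives
\[
\ket\psi=\ket\Phi_{A_0B}\otimes\ket{w}_{A_1C}
\]
for some pure state $\ket w$ on $A_1C$.

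Second, we trace out $B$ and $A_1$ from this product form. This yields $\rho_{A_0C}=\tr_B\proj\Phi\otimes\tr_{A_1}\proj w$. Because $\ket\Phi$ is maximally entangled, $\tr_B\proj\Phi=\id/d$. Hence $\rho_{A_0C}=\frac{\id}{d}\otimes\rho_C$, which is a product state. Consequently $I(A_0:C)=0$, and no measurement on $C$ is correlated with any measurement on $A_0$; this is the meaning of ``uncorrelated'' in the statement.

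The only step needing care is the purity hypothesis. If ``maximally entangled'' were read as merely maximally correlated, for example a mixture, the factorization would fail. The statement, however, explicitly says pure, so Lemma~\ref{lem:purefactor} applies directly. If $B$ is also split into $B_0B_1$, the same argument applies verbatim. For use later in the paper, the trace-distance or robust version of this result would need a continuity argument (Uhlmann / Fuchs--van de Graaf), but that is not required here.
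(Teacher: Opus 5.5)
Your proof is correct and follows the paper's argument exactly: apply Lemma~\ref{lem:purefactor} across the cut $(A_0B\,|\,A_1C)$ to factor $\ket\psi=\ket\Phi_{A_0B}\otimes\ket{w}_{A_1C}$, then trace out $B$ and $A_1$ to obtain $\rho_{A_0C}=\rho_{A_0}\otimes\rho_C$. Your extra observation that $\rho_{A_0}=\id/d$ is a harmless strengthening; it assumes $\ket\Phi$ has full Schmidt rank on $A_0$, i.e.\ $\dim\cH_B\ge d$, and the product conclusion does not depend on it.
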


\begin{proof}
Since $\rho_{A_0B}$ is a pure state, we can apply Lemma~\ref{lem:purefactor} across the cut $(A_0B\,|\,A_1C)$. So, we can write $\ket\psi=\ket\phi_{A_0B}\otimes\ket\zeta_{A_1C}$. Then, it must be true that $\rho_{A_0C}=\rho_{A_0}\otimes\rho_C$, thus the entanglement between
$A_0$ and $C$ is zero. 
\end{proof}

\noindent In other  words, if a two-qubit subsystem $A$ is maximally entangled with register $B$ then it must share no entanglement with another register $C$.

\subsection{Security definition for DI-CKA}\label{sec:secdef}

\begin{definition}[DI-CKA security \cite{ribeiro2018fully}]\label{def:cka}
An $N$-party DI-CKA protocol outputs a key $K\in\{0,1\}^\ell$ for all parties, or $\bot$. It
is $\eps$-secure if:
\begin{enumerate}
\item[(i)] \emph{Correctness} ($\eps_{\mathrm{cor}}$):
$\Pr[\text{no abort}\wedge\exists i\ne j:K_i\ne K_j]\le\eps_{\mathrm{cor}}$;
\item[(ii)] \emph{Secrecy} ($\eps_{\mathrm{sec}}$): conditioned on no abort,
$\tfrac12\big\|\rho_{KE}-\tau_K\otimes\rho_E\big\|_1\le\eps_{\mathrm{sec}}$, with $\tau_K$
maximally mixed and $E$ the adversary's quantum side information including the public
transcript,
\end{enumerate}
and $\eps=\eps_{\mathrm{cor}}+\eps_{\mathrm{sec}}$.
\end{definition}

%% file: parity_MSG1.tex
\section{\texorpdfstring{Parity Extensions of Two-Player Games.}{Parity Extensions of Two-Player Games.}}\label{sec:game}

In this section we introduce \emph{Parity-$G$ games}, an $N$-player extension of a two-player non-local game $G$. The construction applies to every $G$ that has a suitable Pauli strategy. We develop the general framework in Sections~\ref{sec:pauli-games}--\ref{sec:parityG-values}, and instantiate it with the Magic Square game in Section~\ref{sec:pmsg}. 

There may also be various other potential paths to designing a multiparty game based on two-party games for conference key agreement. We motivate our Parity-$G$ framework by showing in Appendix \ref{sec:obstruction} why a certain natural extension, for the Magic Square game in particular, cannot be used as a basis for a conference key agreement. This informs our construction of the Parity-$G$ framework. We first describe the conditions we need from a two-player game $G$ in order to extend it to an $N$-player Parity-$G$ game. We note that these conditions are very natural and satisfied by nearly all two-party games studied in the literature. 
\subsection{Two-player games with Pauli strategies}\label{sec:pauli-games}

\paragraph{Games.} Throughout this section, $G=(\pi,\cX,\cY,\{0,1\}^{L},\cB,V)$ refers to a two-player game such that
$(x,y)\sim\pi$ on the cross product of input sets $\cX\times\cY$, Alice receives $x$ and outputs a bit string $a=(a_\ell)_{\ell\in L}\in\{0,1\}^{L}$ indexed by a finite set $L$, Bob receives $y$ and outputs $b\in\cB$, and they win iff $V(x,y,a,b)=1$. Alice's output set is all of $\{0,1\}^L$, and any constraint on her answer is encoded in the winning condition $V$.

\paragraph{Pauli strings.} A \emph{signed Pauli string} on $k$ qubits is an operator $\tau=\pm\,\tau_1\otimes\cdots\otimes\tau_k$ with $\tau_i\in\{\id,X,Y,Z\}$. Its \emph{$X$-support} is the vector $\Xsupp(\tau)\in\Ftwo^k$ with $\Xsupp(\tau)_i=1$ iff $\tau_i\in\{X,Y\}$. We call $\tau$ \emph{$Z$-type} if $\Xsupp(\tau)=0$, that is, $\tau=\pm Z^{v}$ for some $v\in\Ftwo^k$, where $Z^{v}\coloneq Z^{v_1}\otimes\cdots\otimes Z^{v_k}$. We also write
$\langle u,p\rangle\coloneq\bigoplus_{i}u_ip_i$ for $u,p\in\Ftwo^k$.

\begin{lemma}[Twist rule]\label{lem:twist}
For every signed Pauli string $\tau$ on $k$ qubits and every $p\in\Ftwo^k$,
\[
Z^{p}\,\tau\,Z^{p}=(-1)^{\langle p,\Xsupp(\tau)\rangle}\,\tau .
\]
\end{lemma}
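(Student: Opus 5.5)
The plan is to reduce the statement to single-qubit commutation relations and then take a tensor product. Since $Z^{p}=Z^{p_1}\otimes\cdots\otimes Z^{p_k}$ and $\tau=\pm\tau_1\otimes\cdots\otimes\tau_k$, conjugation factorizes tensor-factor by tensor-factor:
\[
Z^{p}\,\tau\,Z^{p}=\pm\,(Z^{p_1}\tau_1Z^{p_1})\otimes\cdots\otimes(Z^{p_k}\tau_kZ^{p_k}).
\]
The global sign $\pm$ passes through unchanged because conjugation is linear.

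Next I verify the one-qubit rule $Z^{p_i}\tau_iZ^{p_i}=(-1)^{p_i\,\Xsupp(\tau)_i}\,\tau_i$. If $p_i=0$, the left side is $\tau_i$ and the sign is $+1$. If $p_i=1$, I use $Z^2=\id$ together with the standard Pauli relations: $Z$ commutes with $\id$ and $Z$, and anticommutes with $X$ and $Y$. These give $Z\tau_iZ=\tau_i$ when $\tau_i\in\{\id,Z\}$, which is exactly the case $\Xsupp(\tau)_i=0$, and $Z\tau_iZ=-\tau_i$ when $\tau_i\in\{X,Y\}$, which is exactly the case $\Xsupp(\tau)_i=1$. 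For $Y$, I check $ZYZ=-Y$ either directly from the matrices or via $Y=iXZ$.

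Finally, I pull the scalar signs out of the tensor product. Their product is
\[
\prod_i(-1)^{p_i\Xsupp(\tau)_i}=(-1)^{\bigoplus_i p_i\Xsupp(\tau)_i}=(-1)^{\langle p,\Xsupp(\tau)\rangle},
\]
where the first equality holds because the exponent only matters modulo $2$. This gives the claimed identity.

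None of these steps is a real obstacle. The only points needing care are that the global sign of $\tau$ commutes through the conjugation, and the one-qubit check that $Y$ counts as anticommuting with $Z$.
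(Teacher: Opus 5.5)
Your proof is correct and takes the same approach as the paper. You verify the single-qubit rule $Z^{p_i}\tau_iZ^{p_i}=(-1)^{p_i\Xsupp(\tau)_i}\tau_i$ from $Z$ commuting with $\id,Z$ and anticommuting with $X,Y$, then take the tensor product over $i$; your remarks on the global sign and the $Y$ case only spell out details the paper leaves implicit.
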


\begin{proof}
$Z$ anticommutes with $X$ and $Y$ and commutes with $\id$ and $Z$. Hence, for each $i$,
\[
Z^{p_i}\tau_iZ^{p_i}=(-1)^{p_i\Xsupp(\tau)_i}\,\tau_i ,
\]
and the claim follows by taking the tensor product over $i$.
\end{proof}

\begin{definition}[Pauli strategy and key input]\label{def:pauli-strategy}
A \emph{$k$-qubit Pauli strategy} $\mathcal S$ for a game $G$ is a quantum strategy satisfying conditions (C1), (C2) and (C3) as follows.
\begin{enumerate}
\item[\textnormal{(C1)}] \emph{Resource state.} Alice and Bob share $k$ EPR pairs $\Bell{+}^{\otimes k}$ between them.
\item[\textnormal{(C2)}] \emph{Alice measures Pauli strings.} For every input $x\in\cX$, Alice jointly measures pairwise commuting signed Pauli strings $(Q_{x,\ell})_{\ell\in L}$ on her $k$ qubits. So, she reports $(-1)^{a_\ell}$ as the outcome of $Q_{x,\ell}$. In particular, she performs the projective measurement
\begin{equation}\label{eq:pauli-proj}
\Pi^{x}_{a}\coloneq\prod_{\ell\in L}\tfrac12\big(\id+(-1)^{a_\ell}Q_{x,\ell}\big),
\qquad a\in\{0,1\}^L .
\end{equation}
\end{enumerate}
Bob's measurements are arbitrary POVMs $\{N^{y}_{b}\}_{b\in\cB}$ on his $k$ qubits. We write
$\omega(\mathcal S)$ for the winning probability of $\mathcal S$ in $G$. Also, an input $x^*\in\cX$ is a \emph{key input} for $\mathcal S$ if the following condition is met.
\begin{enumerate}
\item[\textnormal{(C3)}] \emph{Key input.} Every $Q_{x^*,\ell}$ is $Z$-type, say $Q_{x^*,\ell}=(-1)^{s_\ell}Z^{v_\ell}$, such that the vectors $(v_\ell)_{\ell\in L}$ span $\Ftwo^k$.
\end{enumerate}
\end{definition}

\begin{remark}\label{rem:conditions}
Condition (C1) says the resource state required by $\mathcal S$ should be a tensor product of EPR pairs. In our $N$-player Parity-$G$ game, this will be obtained from sharing $k$ GHZ states. Condition (C2) is the key requirement of our framework. By Lemma~\ref{lem:twist}, conjugating any $Q_{x,\ell}$ by $Z^{p}$ gives $\pm Q_{x,\ell}$. Applying $Z^{p}$ to Alice's qubits before she measures therefore leaves her measurement \eqref{eq:pauli-proj} unchanged as a set of projectors and only changes which outcome
label each projector carries (Lemma~\ref{lem:twisted-meas} makes this precise). As we will see soon, the referee of a Parity-$G$ game will need to undo such a relabelling. Note that for observables outside the Pauli group this fails: $Z\,\tfrac{X+Z}{\sqrt2}\,Z=\tfrac{-X+Z}{\sqrt2}$ is neither $\tfrac{X+Z}{\sqrt2}$ nor its negation. Condition (C2) restricts only Alice, because in our construction the operators $Z^{p}$ act on her
qubits alone; Bob's measurements are arbitrary. Condition (C3) singles out an input on which
Alice's measurement is a computational-basis measurement (Lemma~\ref{lem:parityG-key}); this
input will produce the key.
\end{remark}

\subsection{\texorpdfstring{The Parity-$G$ game}{The Parity-G game}}\label{sec:parityG}

Consider a two-player game $G$ which has a $k$-qubit Pauli strategy $\mathcal S$. We first establish some more notation by defining the \emph{swap mask} of $\mathcal S$, which intuitively the referee of the Parity-$G$ game will use to relabel Alice's answer.

\begin{definition}[Swap mask]\label{def:swap-mask}
For each input $x\in\cX$ and output index $\ell\in L$, the \emph{mask vector} of $Q_{x,\ell}$ is
$u_{x,\ell}\coloneq\Xsupp(Q_{x,\ell})\in\Ftwo^k$. We define the \emph{swap mask} of
$\mathcal S$ as the map $\sgn\colon\cX\times\Ftwo^k\to\{0,1\}^L$ given by
\begin{equation}\label{eq:mask-general}
\sgn(x,p)\coloneq\big(\sgn_\ell(x,p)\big)_{\ell\in L},\qquad
\sgn_\ell(x,p)\coloneq\langle u_{x,\ell},p\rangle .
\end{equation}
\end{definition}

By Lemma~\ref{lem:twist}, $\sgn_\ell(x,p)=1$ exactly when $Z^{p}$ anticommutes with
$Q_{x,\ell}$. Thus $\sgn(x,p)$ lists the outcome bits of Alice's measurement on input $x$ that flip when her qubits carry the twist $Z^p$. Section~\ref{sec:honeststrategy} shows that entanglement swapping produces such a twist.

\begin{definition}[Parity-$G$ game]\label{def:parityG}
For $N\ge2$, the $N$-player game $\pG{G}_N$ has players Alice, $B_1$, and parity Bobs
$B_2,\dots,B_{N-1}$.
\begin{itemize}
\item \textbf{Inputs.} The referee samples $(x,y)\sim\pi$ and sends $x$ to Alice and $y$ to
$B_1$. The parity Bobs receive no input.
\item \textbf{Outputs.} Alice outputs $a\in\{0,1\}^L$ and $B_1$ outputs $b\in\cB$. Each parity
Bob $B_j$ outputs a vector $c_j=(c_j^{(1)},\dots,c_j^{(k)})\in\Ftwo^k$, one bit per index $i\in\{1,\dots,k\}$.
\item \textbf{Parity.} The referee computes the \emph{parity vector}
$p=(p_1,\dots,p_k)\in\Ftwo^k$, where
\[
p_i\coloneq\bigoplus_{j=2}^{N-1}c_j^{(i)}\qquad(1\le i\le k)
\]
is the XOR of the parity Bobs' $i$-th bits. In vector notation,
$p=c_2\oplus\cdots\oplus c_{N-1}$ with $\oplus$ taken coordinate wise.
\item \textbf{Winning predicate.} The players win iff
\begin{equation}\label{eq:parityG-win}
V\big(x,\,y,\,a\oplus\sgn(x,p),\,b\big)=1 .
\end{equation}
\end{itemize}
\end{definition}

The game depends on $\mathcal S$ only through the mask vectors, and we suppress this dependence
in the notation. For $N=2$ there are no parity Bobs, $p=0$, and $\pG{G}_2=G$. Since Alice's output set is all of $\{0,1\}^L$, the relabeled answer $a\oplus\sgn(x,p)$ is again an element of it, so the predicate \eqref{eq:parityG-win} is well defined. This is why constraints on Alice's answer are encoded in $V$. 

\begin{lemma}[Properties of the mask]\label{lem:mask-props}
Given a game $G$ with a $k$-qubit Pauli strategy $\mathcal{S}$, the swap mask $\sigma$ has the following properties.
\begin{enumerate}
\item[\textnormal{(i)}] $\sgn(x,\cdot)$ is linear:
$\sgn(x,p\oplus p')=\sgn(x,p)\oplus\sgn(x,p')$, and $\sgn(x,0)=0$.
\item[\textnormal{(ii)}] If $x^*$ is a key input, then $\sgn(x^*,p)=0$ for every $p$.
\item[\textnormal{(iii)}] If $T\subseteq L$ and $\prod_{\ell\in T}Q_{x,\ell}=\pm\id$, then
$\bigoplus_{\ell\in T}\sgn_\ell(x,p)=0$ for every $p$.
\end{enumerate}
\end{lemma}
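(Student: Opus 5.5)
The plan is to prove all three properties directly from Definition~\ref{def:swap-mask}, where each coordinate $\sgn_\ell(x,p)=\langle u_{x,\ell},p\rangle$ is an $\Ftwo$-inner product of a fixed mask vector with $p$. For (ii) and (iii) we also use the Pauli structure of Alice's observables.

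For (i), linearity is immediate coordinatewise. Since $\langle u,\cdot\rangle$ is $\Ftwo$-linear, we have $\langle u_{x,\ell},p\oplus p'\rangle=\langle u_{x,\ell},p\rangle\oplus\langle u_{x,\ell},p'\rangle$ and $\langle u_{x,\ell},0\rangle=0$. Collecting over $\ell\in L$ gives the claim.

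For (ii), condition (C3) says that every $Q_{x^*,\ell}$ is $Z$-type, so $u_{x^*,\ell}=\Xsupp(Q_{x^*,\ell})=0$ by definition. Hence $\sgn_\ell(x^*,p)=\langle 0,p\rangle=0$ for all $\ell$ and $p$. Only the $Z$-type part of (C3) is needed here, not the spanning condition.

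For (iii), which I expect to carry the only real content, I will give two routes.

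The first route uses the twist rule. Conjugate the product by $Z^p$ and insert $Z^pZ^p=\id$ between the factors. This gives
\[
Z^p\Big(\prod_{\ell\in T}Q_{x,\ell}\Big)Z^p=\prod_{\ell\in T}Z^pQ_{x,\ell}Z^p=(-1)^{\bigoplus_{\ell\in T}\langle p,u_{x,\ell}\rangle}\prod_{\ell\in T}Q_{x,\ell},
\]
where the last equality applies Lemma~\ref{lem:twist} to each factor. The left-hand side equals $Z^p(\pm\id)Z^p=\pm\id$, and the product on the right is also $\pm\id\neq0$. Comparing the two sides forces the sign $(-1)^{\bigoplus_{\ell\in T}\sgn_\ell(x,p)}$ to be $+1$, which is the claim.

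The second route is more direct. The $X$-support of a product of Pauli strings is the XOR of their $X$-supports, because phases do not affect which tensor factors lie in $\{X,Y\}$. Hence $\bigoplus_{\ell\in T}u_{x,\ell}=\Xsupp(\pm\id)=0$, and taking the inner product with $p$ finishes the argument. Either route suffices.

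The main subtlety is that the $Q_{x,\ell}$ need not be ordered or commuting in general. In our setting they pairwise commute by (C2), so the product is unambiguous, and in any case the $X$-support argument does not depend on the order of the factors.
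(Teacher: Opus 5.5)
Your proposal is correct. Parts (i) and (ii) match the paper, and your first route for (iii) (conjugate the product by $Z^p$, apply the twist rule to each factor, and compare with $\pm\id$) is exactly the paper's argument. Your second route, using additivity of $X$-support under multiplication, is also valid and slightly more direct, since it never invokes conjugation or Lemma~\ref{lem:twist}.
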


\begin{proof}
(i) follows directly from \eqref{eq:mask-general}, and (ii) holds because $Z$-type strings have zero $X$-support. For (iii), inserting $Z^pZ^p=\id$ between the factors and applying Lemma~\ref{lem:twist} gives
$Z^{p}\big(\prod_{\ell\in T}Q_{x,\ell}\big)Z^{p}
=(-1)^{\bigoplus_{\ell\in T}\sgn_\ell(x,p)}\prod_{\ell\in T}Q_{x,\ell}$.
The left side equals $\prod_{\ell\in T}Q_{x,\ell}$ because this product is $\pm\id$.
\end{proof}

Part (iii) says that the mask respects every multiplicative relation among Alice's observables.
In particular, the relabelling never breaks a parity constraint that $\mathcal S$ satisfies: if
$\prod_{\ell\in T}Q_{x,\ell}=(-1)^{s}\id$, then Alice's answers under $\mathcal S$ satisfy
$\bigoplus_{\ell\in T}a_\ell=s$, and so does $a\oplus\sgn(x,p)$ for every $p$.

\subsection{The honest strategy}\label{sec:honeststrategy}

The honest strategy for a Parity-$G$ game can be constructed by applying the corresponding two player strategy $\mathcal S$ for $G$ on $k$ EPR pairs shared between Alice and $B_1$, with some additional measurements by the parity Bobs.

\paragraph{The strategy.} The $N$ players share $\ket{\GHZ_N}^{\otimes k}$. Every player holds one qubit of each copy, and $P^{(i)}$ denotes player $P$'s qubit of copy $i$. They carry out the following measurements.
\begin{enumerate}
\item[\textnormal{(M1)}] Each parity Bob $B_j$ measures $X$ on each of their $k$ qubits, obtaining the bit $c_j^{(i)}$ from copy $i$, and outputs $c_j=(c_j^{(1)},\dots,c_j^{(k)})$.
\item[\textnormal{(M2)}] On input $x$, Alice performs $\{\Pi^{x}_{a}\}_a$ of
\eqref{eq:pauli-proj} on $(A^{(1)},\dots,A^{(k)})$ and outputs $a$.
\item[\textnormal{(M3)}] On input $y$, $B_1$ performs $\{N^{y}_{b}\}_b$ on
$(B_1^{(1)},\dots,B_1^{(k)})$ and outputs $b$.
\end{enumerate}

\begin{lemma}[Single-copy swap]\label{lem:swap}
Consider a single copy of $\ket{\GHZ_N}$ and group its qubits as
\[
\ket{\GHZ_N}=\tfrac1{\sqrt2}\big(\ket0_A\ket0_{B_1}\ket{0\cdots0}_{\mathrm{par}}
+\ket1_A\ket1_{B_1}\ket{1\cdots1}_{\mathrm{par}}\big),
\]
where the register $\mathrm{par}$ holds the qubits of the parity Bobs $B_2,\dots,B_{N-1}$. Each parity Bob $B_j$ measures their qubit in the $X$ basis, obtaining the bit $e_j\in\{0,1\}$. Let
\[
e\coloneq\bigoplus_{j=2}^{N-1}e_j .
\]
Then, each outcome string $(e_2,\dots,e_{N-1})\in\{0,1\}^{N-2}$ occurs with probability $2^{-(N-2)}$, and conditioned on it, the state of $(A,B_1)$ collapses to
\begin{equation}\label{eq:singleswap}
\tfrac1{\sqrt2}\big(\ket{00}+(-1)^{e}\ket{11}\big)=(Z^{e}\otimes\id)\Bell{+}.
\end{equation}
For $N=2$ there are no parity Bobs, $e=0$, and the state is $\Bell{+}=\ket{\GHZ_2}$.
\end{lemma}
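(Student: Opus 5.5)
The plan is to expand each parity Bob's qubit in the $X$ eigenbasis and read off the post-measurement state directly. Write $\ket{\pm}=\tfrac1{\sqrt2}(\ket0\pm\ket1)$ and use the convention that outcome bit $e_j$ corresponds to the eigenvector $\ket{(-1)^{e_j}}$, i.e.\ $\ket{+}$ for $e_j=0$ and $\ket{-}$ for $e_j=1$. Inverting gives
\[
\ket0=\tfrac1{\sqrt2}(\ket{+}+\ket{-}),\qquad
\ket1=\tfrac1{\sqrt2}(\ket{+}-\ket{-}).
\]
Equivalently, $\ket{b}=\tfrac1{\sqrt2}\sum_{e_j}(-1)^{b e_j}\ket{(-1)^{e_j}}$ for $b\in\{0,1\}$.

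Substitute this into each of the $N-2$ factors of $\ket{0\cdots0}_{\mathrm{par}}$ and $\ket{1\cdots1}_{\mathrm{par}}$. This yields
\[
\ket{0\cdots0}_{\mathrm{par}}=2^{-(N-2)/2}\sum_{e_2,\dots,e_{N-1}}\bigotimes_j\ket{(-1)^{e_j}},
\]
\[
\ket{1\cdots1}_{\mathrm{par}}=2^{-(N-2)/2}\sum_{e_2,\dots,e_{N-1}}\prod_j(-1)^{e_j}\bigotimes_j\ket{(-1)^{e_j}}.
\]
The product of signs is $(-1)^{e}$ with $e=\bigoplus_j e_j$. Regrouping the full GHZ state by parity-register basis vectors then gives
\[
\ket{\GHZ_N}=\sum_{e_2,\dots,e_{N-1}}2^{-(N-2)/2}\,\tfrac1{\sqrt2}\big(\ket{00}+(-1)^{e}\ket{11}\big)_{AB_1}\otimes\bigotimes_j\ket{(-1)^{e_j}}_{B_j}.
\]

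This decomposition settles both claims. The vectors $\bigotimes_j\ket{(-1)^{e_j}}$ are orthonormal, and each attached $(A,B_1)$ vector $\tfrac1{\sqrt2}(\ket{00}\pm\ket{11})$ has unit norm. The Born rule therefore assigns every outcome string probability $2^{-(N-2)}$, and the normalized conditional state is $\tfrac1{\sqrt2}(\ket{00}+(-1)^e\ket{11})$. Since $Z\ket0=\ket0$ and $Z\ket1=-\ket1$, we have $(Z^e\otimes\id)\Bell{+}=\tfrac1{\sqrt2}(\ket{00}+(-1)^e\ket{11})$, which is \eqref{eq:singleswap}.

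For $N=2$ the parity register is empty, so $e=0$ by convention and the state is literally $\ket{\GHZ_2}=\Bell{+}$. There is no real obstacle here. The only point needing care is to fix the outcome-label convention, so that bit $1$ corresponds to $\ket{-}$, and to note that the measurement order among the parity Bobs is irrelevant because their measurements act on distinct tensor factors.
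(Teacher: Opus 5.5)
Your proof is correct and follows essentially the same route as the paper, which projects each parity Bob's qubit onto its $X$-outcome to obtain the unnormalized state $(\sqrt2)^{-(N-2)}\cdot\tfrac1{\sqrt2}\big(\ket{00}+(-1)^{e}\ket{11}\big)$ and then normalizes. Your full expansion of the parity register in the $X$ basis is the same computation, and it has the small advantage of deriving the uniform outcome probability $2^{-(N-2)}$ explicitly via the Born rule, where the paper leaves it implicit in the normalization factor.
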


\begin{proof}
After all the parity Bobs $B_2,...,B_{N-1}$ measure the GHZ state in the $X$ basis, with $B_j$ measuring the bit $e_j$, we can check that the state collapses to the unnormalized two-qubit state
\[
\tfrac1{\sqrt2}\Big(\underbrace{\textstyle\prod_{j}\tfrac1{\sqrt2}}_{(\sqrt2)^{-(N-2)}}\ket{00}
+\underbrace{\textstyle\prod_{j}\tfrac{(-1)^{e_j}}{\sqrt2}}_{(-1)^{e}(\sqrt2)^{-(N-2)}}\ket{11}\Big)
=(\sqrt2)^{-(N-2)}\cdot\tfrac1{\sqrt2}\big(\ket{00}+(-1)^{e}\ket{11}\big),
\]
where $\prod_j(-1)^{e_j}=(-1)^{e}$ because only the parity of the number of minus signs matters. Normalizing this, we obtain the state in Eq. \eqref{eq:singleswap}.
\end{proof}

\begin{lemma}[Dressed EPR resource]\label{lem:dressed}
Under the measurement \textnormal{(M1)}, every outcome $(c_2,\dots,c_{N-1})$ occurs with probability
$2^{-k(N-2)}$, so $p$ is uniform on $\Ftwo^k$. Conditioned on the outcome, the state of
Alice's and $B_1$'s qubits is
\begin{equation}\label{eq:dressed}
\ket{\Psi_p}=\big(Z^{p}_{A}\otimes\id_{B_1}\big)\,\Bell{+}^{\otimes k},\qquad
Z^{p}_{A}=Z^{p_1}_{A^{(1)}}\otimes\cdots\otimes Z^{p_k}_{A^{(k)}},
\end{equation}
where the $i$-th EPR pair is on $(A^{(i)},B_1^{(i)})$.
\end{lemma}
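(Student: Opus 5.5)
The plan is to reduce the statement to $k$ independent applications of Lemma~\ref{lem:swap}, one for each GHZ copy. Since the parity Bobs' $X$-measurements act on different copies as commuting tensor-product operators, we can treat the copies one at a time. First, fix the ordering of registers so that
\[
\ket{\GHZ_N}^{\otimes k}=\bigotimes_{i=1}^{k}\ket{\GHZ_N}_{A^{(i)}B_1^{(i)}\mathrm{par}^{(i)}} .
\]
The projector for a joint outcome $(c_2,\dots,c_{N-1})$ then factorizes as a tensor product over $i$ of the projectors for outcome string $(c_2^{(i)},\dots,c_{N-1}^{(i)})$ acting on $\mathrm{par}^{(i)}$.

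Applying this product projector to a product state gives a product of unnormalized post-measurement states. For copy $i$, Lemma~\ref{lem:swap} says this factor has squared norm $2^{-(N-2)}$ and normalizes to $(Z^{p_i}\otimes\id)\Bell{+}$ on $(A^{(i)},B_1^{(i)})$, where $p_i=\bigoplus_j c_j^{(i)}$. Here the lemma's $e$ coincides with the referee's $p_i$. Multiplying the norms shows that each joint outcome occurs with probability $2^{-k(N-2)}$. Tensoring the normalized factors and collecting the $Z^{p_i}$ into $Z^p_A$ gives Eq.~\eqref{eq:dressed}, after tracing out (or discarding) the measured parity registers, which are now in a product state with $(A,B_1)$.

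To show $p$ is uniform, note that the map $(c_2,\dots,c_{N-1})\mapsto p=c_2\oplus\cdots\oplus c_{N-1}$ from $\Ftwo^{k(N-2)}$ to $\Ftwo^k$ is a surjective linear map when $N\ge3$. Every fiber therefore has size $2^{k(N-3)}$, and each $p$ has probability $2^{k(N-3)}\cdot 2^{-k(N-2)}=2^{-k}$. The case $N=2$ is degenerate: there are no parity Bobs, $p=0$, and the state is $\Bell{+}^{\otimes k}$. This is consistent with the statement only if "uniform" is read for $N\ge3$, so I would note that caveat explicitly. No step is a real obstacle. The only care needed is the register bookkeeping, namely that the projectors factorize across copies and that the ordering of $Z^{p_i}$ factors matches $Z^p_A$.
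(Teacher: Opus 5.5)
Your proof is correct and follows the paper's route: apply Lemma~\ref{lem:swap} to each GHZ copy (with $e=p_i$) and take the product over $i$. The paper's one-line proof leaves two details implicit, and you handle both correctly: the fiber-counting argument for the uniformity of $p$, and the caveat that uniformity fails for $N=2$, where $p\equiv 0$.
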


\begin{proof}
We apply Lemma~\ref{lem:swap} to each copy. On copy $i$, the parity Bobs' outcome bits are $e_j=c_j^{(i)}$, so $e=p_i$; copy $i$ therefore yields $(Z^{p_i}\otimes\id)\Bell{+}$ on $(A^{(i)},B_1^{(i)})$, and its outcome string has probability $2^{-(N-2)}$. The lemma follows from taking the product over $i$.
\end{proof}

\begin{lemma}[The twist relabels Alice's outcome]\label{lem:twisted-meas}
For all $x\in\cX$, $a\in\{0,1\}^L$ and $p\in\Ftwo^k$,
$Z^{p}\,\Pi^{x}_{a}\,Z^{p}=\Pi^{x}_{a\oplus\sgn(x,p)}$.
\end{lemma}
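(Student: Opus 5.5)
The identity follows by conjugating the product formula \eqref{eq:pauli-proj} factor by factor. Since $Z^{p}$ is a Hermitian unitary with $Z^{p}Z^{p}=\id$, we may insert $Z^{p}Z^{p}$ between consecutive factors. This gives
\[
Z^{p}\,\Pi^{x}_{a}\,Z^{p}=\prod_{\ell\in L}Z^{p}\,\tfrac12\big(\id+(-1)^{a_\ell}Q_{x,\ell}\big)\,Z^{p}
=\prod_{\ell\in L}\tfrac12\big(\id+(-1)^{a_\ell}\,Z^{p}Q_{x,\ell}Z^{p}\big).
\]
Conjugation is an algebra automorphism, so it respects products. We do not need to worry about the order of the factors, because the $Q_{x,\ell}$ pairwise commute by (C2). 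Their conjugates also pairwise commute, so the product stays well defined and order-independent.

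Next we apply Lemma~\ref{lem:twist} to each factor. It gives $Z^{p}Q_{x,\ell}Z^{p}=(-1)^{\langle p,u_{x,\ell}\rangle}Q_{x,\ell}$, where $u_{x,\ell}=\Xsupp(Q_{x,\ell})$. By Definition~\ref{def:swap-mask}, this exponent is exactly $\sgn_\ell(x,p)$. The $\ell$-th factor therefore becomes
\[
\tfrac12\big(\id+(-1)^{a_\ell\oplus\sgn_\ell(x,p)}Q_{x,\ell}\big).
\]
This is the $\ell$-th factor of $\Pi^{x}_{a\oplus\sgn(x,p)}$. Taking the product over $\ell\in L$ gives the claim.

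There is no real obstacle here. The only points to state carefully are the two just used: the well-definedness of the product, which comes from commutativity in (C2), and the fact that $\langle p,u\rangle=\langle u,p\rangle$ over $\Ftwo$, which matches the exponent in Lemma~\ref{lem:twist} with the convention in \eqref{eq:mask-general}.
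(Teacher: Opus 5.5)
Your proposal is correct and follows essentially the same route as the paper: conjugate each factor of \eqref{eq:pauli-proj} by $Z^{p}$, apply Lemma~\ref{lem:twist}, and match the resulting exponent with $\sgn_\ell(x,p)$. Your added remarks on inserting $Z^{p}Z^{p}$ between factors and on the order-independence of the product are fine. They are details the paper leaves implicit.
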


\begin{proof}
By Lemma~\ref{lem:twist}, conjugating each factor of \eqref{eq:pauli-proj} gives
$Z^{p}\tfrac12\big(\id+(-1)^{a_\ell}Q_{x,\ell}\big)Z^{p}
=\tfrac12\big(\id+(-1)^{a_\ell\oplus\sgn_\ell(x,p)}Q_{x,\ell}\big)$.
\end{proof}

\begin{proposition}[Honest winning probability]\label{prop:parityG-honest}
For every $p\in\Ftwo^k$, conditioned on the parity being $p$, the honest strategy
\textnormal{(M1)--(M3)} wins $\pG{G}_N$ with probability $\omega(\mathcal S)$. In particular it
wins with probability $\omega(\mathcal S)$.
\end{proposition}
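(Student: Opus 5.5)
The plan is to condition on the parity Bobs' outcomes and reduce to the two-player game. By Lemma~\ref{lem:dressed}, the parity vector $p$ is uniform on $\Ftwo^k$. Conditioned on the outcome $(c_2,\dots,c_{N-1})$, the state of Alice's and $B_1$'s registers is $\ket{\Psi_p}=(Z^p_A\otimes\id)\Bell{+}^{\otimes k}$. Since the parity Bobs receive no input and their measurements (M1) act on disjoint registers, the order of measurement does not matter. We may therefore regard (M1) as performed first, independently of $(x,y)\sim\pi$. Fix $p$ and $(x,y)$. The joint probability that Alice outputs $a$ and $B_1$ outputs $b$ is
\[
\bra{\Psi_p}\Pi^x_a\otimes N^y_b\ket{\Psi_p}
=\bra{\Phi^{+}}^{\otimes k}\big(Z^p\Pi^x_aZ^p\otimes N^y_b\big)\Bell{+}^{\otimes k}
=\bra{\Phi^{+}}^{\otimes k}\big(\Pi^x_{a\oplus\sgn(x,p)}\otimes N^y_b\big)\Bell{+}^{\otimes k},
\]
using $Z^p=(Z^p)^\dagger$ and Lemma~\ref{lem:twisted-meas}.

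Next, substitute $a'=a\oplus\sgn(x,p)$. This is a bijection of $\{0,1\}^L$ for fixed $x,p$. The conditional winning probability of $\pG{G}_N$ is
\[
\sum_{x,y}\pi(x,y)\sum_{a,b}V\big(x,y,a\oplus\sgn(x,p),b\big)\,\bra{\Psi_p}\Pi^x_a\otimes N^y_b\ket{\Psi_p}.
\]
After the substitution, this becomes
\[
\sum_{x,y}\pi(x,y)\sum_{a',b}V(x,y,a',b)\,\bra{\Phi^{+}}^{\otimes k}\Pi^x_{a'}\otimes N^y_b\Bell{+}^{\otimes k},
\]
which is exactly $\omega(\mathcal S)$ by conditions (C1)--(C2). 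Averaging over the uniform $p$ gives the unconditional claim.

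The main point requiring care is the conditioning step. We must justify that the post-measurement state given the parity Bobs' full outcome string depends only on $p$, which is what Lemma~\ref{lem:dressed} supplies. We must also justify that the parity Bobs' outcomes are independent of $(x,y)$, which follows from non-signalling and commutation of operators on disjoint registers. Once these hold, the rest is the relabelling computation above.
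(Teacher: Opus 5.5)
Your proposal is correct and follows the paper's own argument. You condition on the parity Bobs' outcome via Lemma~\ref{lem:dressed}, use Lemma~\ref{lem:twisted-meas} to rewrite $\Pr[a,b\mid x,y,p]$ as the $\mathcal S$-distribution at $(a\oplus\sgn(x,p),b)$, absorb the relabelling through the bijection $a\mapsto a\oplus\sgn(x,p)$, and average over $(x,y)\sim\pi$. You are a bit more explicit than the paper about two points: the coarse-graining from the full outcome string to $p$, and the independence of the parity Bobs' outcomes from $(x,y)$.
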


\begin{proof}
Fix $p$ and $(x,y)$, and let $P_{\mathcal S}(a,b\mid x,y)$ be the output distribution of
$\mathcal S$ in $G$. By Lemmas~\ref{lem:dressed} and~\ref{lem:twisted-meas},
\[
\Pr[a,b\mid x,y,p]
=\bra{\Phi^+}^{\otimes k}\big(Z^{p}\Pi^{x}_{a}Z^{p}\otimes N^{y}_{b}\big)\Bell{+}^{\otimes k}
=P_{\mathcal S}\big(a\oplus\sgn(x,p),\,b\mid x,y\big).
\]
So, $(a\oplus\sgn(x,p),b)$ has the same distribution as the outputs of $\mathcal S$, and the referee evaluates $V$ on this pair. Averaging over $(x,y)\sim\pi$ gives the winning probability $\omega(\mathcal S)$.
\end{proof}

\subsection{Quantum and classical values}\label{sec:parityG-values}

A quantum strategy for $\pG{G}_N$ consists of a state $\rho$ on
$Q_A\otimes Q_{B_1}\otimes Q_{\mathrm{par}}$, where
$Q_{\mathrm{par}}=Q_{B_2}\otimes\cdots\otimes Q_{B_{N-1}}$, together with POVMs
$\{M^{x}_{a}\}$ for Alice, $\{N^{y}_{b}\}$ for $B_1$, and $\{D^{(j)}_{c}\}_{c\in\Ftwo^k}$ for
each parity Bob, which carries no input label. For $p\in\Ftwo^k$ let
\[
D_p\coloneq\sum_{\substack{c_2,\dots,c_{N-1}\in\Ftwo^k\\ c_2\oplus\cdots\oplus c_{N-1}=p}}\ \bigotimes_{j=2}^{N-1}D^{(j)}_{c_j},\qquad
q_p\coloneq\tr\big[\rho\,(\id\otimes\id\otimes D_p)\big],
\]
and, for $q_p>0$, let
$\rho^{(p)}\coloneq q_p^{-1}\tr_{\mathrm{par}}\big[(\id\otimes\id\otimes D_p^{1/2})\,\rho\,
(\id\otimes\id\otimes D_p^{1/2})\big]$ be the state of $Q_AQ_{B_1}$ conditioned on parity $p$.

\begin{theorem}[Values of Parity-$G$ games]\label{thm:parityG-values}
For every $N\ge2$,
\[
\wq(\pG{G}_N)=\wq(G)\qquad\text{and}\qquad\wc(\pG{G}_N)=\wc(G).
\]
The honest strategy attains $\omega(\mathcal S)$, so it is optimal whenever $\mathcal S$ is
optimal for $G$.
\end{theorem}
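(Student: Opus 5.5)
We prove each equality through two inequalities. The lower bounds $\wq(\pG{G}_N)\ge\wq(G)$ and $\wc(\pG{G}_N)\ge\wc(G)$ come from embedding strategies. Any strategy for $G$ (quantum or classical) becomes a strategy for $\pG{G}_N$ if Alice and $B_1$ run it unchanged and every parity Bob deterministically outputs $c_j=0$. Then $p=0$ with certainty, $\sgn(x,0)=0$ by Lemma~\ref{lem:mask-props}(i), and the predicate \eqref{eq:parityG-win} reduces to $V(x,y,a,b)$, so the winning probability is preserved. For the quantum value we take a supremum over quantum strategies of $G$. This avoids assuming that an optimal strategy is Pauli. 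The final claim, that the honest strategy attains $\omega(\mathcal S)$, is exactly Proposition~\ref{prop:parityG-honest}; optimality follows once the upper bound below is in place.

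For the upper bounds we condition on the parity. Fix an arbitrary quantum strategy $(\rho,\{M^x_a\},\{N^y_b\},\{D^{(j)}_c\})$ for $\pG{G}_N$. The parity Bobs receive no input, so their joint measurement $\{D_p\}_p$ is a single POVM that does not depend on $(x,y)$. It acts on a tensor factor disjoint from Alice's and $B_1$'s, so it commutes with $M^x_a\otimes N^y_b$. Hence
\[
\Pr[\text{win}]=\sum_{p}q_p\sum_{x,y}\pi(x,y)\sum_{a,b}\tr\big[\rho^{(p)}(M^x_a\otimes N^y_b)\big]\,V\big(x,y,a\oplus\sgn(x,p),b\big).
\]
For each fixed $p$ with $q_p>0$, define a strategy $\mathcal T_p$ for $G$ with state $\rho^{(p)}$, Bob's POVMs $\{N^y_b\}$, and Alice's relabelled POVMs $\tilde M^{x}_{a}\coloneq M^{x}_{a\oplus\sgn(x,p)}$. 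This is a valid POVM because $a\mapsto a\oplus\sgn(x,p)$ is a bijection of $\{0,1\}^L$, and it depends only on $x$ and the fixed $p$. Substituting $a'=a\oplus\sgn(x,p)$ shows that the inner expression equals $\omega(\mathcal T_p)\le\wq(G)$. Averaging over $q_p$ then gives $\Pr[\text{win}]\le\wq(G)$.

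The classical case follows the same pattern. By convexity we may take a deterministic strategy. The parity Bobs' outputs are then fixed, so $p$ is a constant $p_0$. Alice's relabelled deterministic map $x\mapsto a(x)\oplus\sgn(x,p_0)$, paired with Bob's unchanged map, is a classical strategy for $G$ that wins with the same probability. Hence the value is at most $\wc(G)$. If shared randomness is allowed, we condition on the random seed and apply the same argument.

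The main obstacle is to justify cleanly that conditioning on $p$ gives legitimate strategies for $G$. Two points need care. First, $\rho^{(p)}$ must be a valid normalized state on $Q_AQ_{B_1}$ obtained via $D_p^{1/2}$; this uses that the parity Bobs' measurement is fixed independently of the inputs and commutes with the other players' operators. Second, the relabelling is performed by Alice alone: it uses $p$, but that $p$ is fixed in the branch, so no communication is required. Once this decomposition is established, the rest is bookkeeping. Note that the Pauli structure (Lemma~\ref{lem:twisted-meas}) is needed only for the honest strategy's achievability and not for the upper bound.
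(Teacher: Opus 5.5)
Your proof is correct and follows the paper's argument step for step. The lower bounds come from the parity Bobs outputting $c_j=0$, so that $p=0$ and the mask vanishes; the quantum upper bound comes from the decomposition $\Pr[\mathrm{win}]=\sum_p q_p\omega_p$ with Alice's POVMs relabelled by the bijection $a\mapsto a\oplus\sgn(x,p)$; and the classical upper bound comes from $p$ being constant under deterministic strategies. Your added remarks, that taking a supremum avoids assuming an optimal Pauli strategy and that the Pauli structure is needed only for the honest strategy's achievability, are accurate refinements rather than a different route.
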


\begin{proof}
\emph{Lower bound.} If every parity Bob outputs $c_j=0$, then $p=0$, $\sgn(\cdot,0)=0$, and
Alice and $B_1$ play $G$. Every quantum or classical strategy for $G$ therefore wins
$\pG{G}_N$ with the same probability.

\emph{Quantum upper bound.} The parity Bobs have no input, so $D_p$ does not depend on
$(x,y)$ and
\[
\Pr[\mathrm{win}]=\sum_{p}q_p\,\omega_p,\qquad
\omega_p\coloneq\sum_{x,y}\pi(x,y)\sum_{a,b}\tr\big[\rho^{(p)}(M^{x}_{a}\otimes N^{y}_{b})\big]\,
V\big(x,y,a\oplus\sgn(x,p),b\big).
\]
Substituting $a'=a\oplus\sgn(x,p)$, which is a bijection of $\{0,1\}^L$, shows that $\omega_p$ is the winning probability in $G$ of
the strategy with state $\rho^{(p)}$, Alice's POVMs $M'^{x}_{a'}\coloneq M^{x}_{a'\oplus\sgn(x,p)}$
and Bob's POVMs $N^{y}_{b}$. Hence $\omega_p\le\wq(G)$ for every $p$, and
$\Pr[\mathrm{win}]\le\wq(G)$.

\emph{Classical upper bound.} By convexity it suffices to bound deterministic strategies. The
parity Bobs then output constants, so $p$ is fixed, and the predicate sees Alice's answer
function $x\mapsto f(x)$ as $x\mapsto f(x)\oplus\sgn(x,p)$, which is again a deterministic
answer function. The winning probability is that of a deterministic strategy for $G$, at most
$\wc(G)$.

The last claim is Proposition~\ref{prop:parityG-honest}.
\end{proof}

\begin{corollary}\label{cor:parityG-pt}
If $G$ is a pseudo-telepathy game and $\omega(\mathcal S)=1$, then $\pG{G}_N$ is a
pseudo-telepathy game for every $N\ge2$, and the honest strategy wins it with certainty.
\end{corollary}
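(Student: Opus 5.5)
The corollary follows by combining Theorem~\ref{thm:parityG-values} with Proposition~\ref{prop:parityG-honest}; no new construction is needed. The plan is to check the two defining conditions of pseudo-telepathy separately for $\pG{G}_N$, namely $\wq(\pG{G}_N)=1$ and $\wc(\pG{G}_N)<1$, and then read off the claim about the honest strategy.

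For the quantum side, I will apply Proposition~\ref{prop:parityG-honest}. The honest strategy (M1)--(M3) on $\ket{\GHZ_N}^{\otimes k}$ wins $\pG{G}_N$ with probability $\omega(\mathcal S)$, and here $\omega(\mathcal S)=1$. The proposition also gives this conditionally on every parity value $p$, so the win is certain for every outcome of the parity Bobs. Since no strategy can exceed probability $1$, this gives $\wq(\pG{G}_N)=1$. Alternatively, the same conclusion follows from the identity $\wq(\pG{G}_N)=\wq(G)=1$ in Theorem~\ref{thm:parityG-values}, since $G$ is pseudo-telepathy.

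For the classical side, I will invoke the classical part of Theorem~\ref{thm:parityG-values}, which gives $\wc(\pG{G}_N)=\wc(G)$ for every $N\ge2$. Because $G$ is pseudo-telepathy, $\wc(G)<1$, and therefore $\wc(\pG{G}_N)<1=\wq(\pG{G}_N)$. That is exactly the definition of a pseudo-telepathy game. The case $N=2$ is covered automatically, since $\pG{G}_2=G$.

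I do not expect any step to be a real obstacle. The only point that needs care is that ``wins with certainty'' means probability one in every round and for every parity outcome, not just on average. This is handled by the conditional form of Proposition~\ref{prop:parityG-honest}: each conditional winning probability equals $\omega(\mathcal S)=1$, and averaging with weights $q_p$ over $p$ cannot lower it.
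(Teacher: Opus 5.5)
Your proposal is correct and follows the argument the paper intends (it states the corollary without a separate proof). You read it off from Theorem~\ref{thm:parityG-values}, which gives $\wc(\pG{G}_N)=\wc(G)<1$, and from Proposition~\ref{prop:parityG-honest}, which shows the honest strategy wins with probability $\omega(\mathcal S)=1$ for every parity outcome; the paper uses the same pattern for $\pMSG_N$ in Propositions~\ref{prop:wq} and~\ref{prop:wc}.
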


\subsection{The Parity Magic Square game}\label{sec:pmsg}

We now study the Parity-$G$ game for the special case when $G$ is the Magic Square game. For the rest of the paper, we will mostly focus on building a conference key protocol from this game and analyzing its security, to simplify exposition. When needed, we will show how our analysis carries over to general Parity-$G$ games.

So, let $G=\MSG$ with $L=\{0,1,2\}$, $\cX=\cY=\{0,1,2\}$, $\pi$ uniform, and Alice's parity constraint folded into the predicate:
\[
V_{\MSG}(x,y,a,b)=1\iff
\textstyle\bigoplus_{\ell=0}^{2}a_\ell=0,\quad\bigoplus_{\ell=0}^{2}b_\ell=1,\quad a_y=b_x .
\]
Answers violating a parity constraint lose in this formulation and in the one of the
preliminaries, so the two define the same game.

\paragraph{The Pauli strategy.} Let $\mathcal S_{\mathrm{P}}$ be the optimal quantum strategy (or Peres strategy) described in
Proposition~\ref{prop:msgq}: Alice holds $(A^{(1)},A^{(2)})$ of $\Bell{+}^{\otimes2}$ and on row
$x$ measures $Q_{x,y}\coloneq P_{xy}$ from \eqref{eq:peres}, and $B_1$ on column $y$ measures
$\overline{P_{0y}},\overline{P_{1y}},\overline{P_{2y}}$. We note that $\mathcal S_{\mathrm{P}}$ is a valid Pauli strategy since it satisfies the conditions of Definition~\ref{def:pauli-strategy} with $k=2$:
\begin{itemize}
\item (C1) holds with two EPR pairs.
\item (C2) holds because every entry of \eqref{eq:peres} is a signed Pauli string and the
entries of each row commute (Lemma~\ref{lem:peres}).
\item (C3) holds with $x^*=0$: the row-$0$ observables are $\id\otimes Z=Z^{(0,1)}$,
$Z\otimes\id=Z^{(1,0)}$ and $Z\otimes Z=Z^{(1,1)}$, and $(0,1),(1,0)$ span $\Ftwo^2$.
\end{itemize}
Moreover, $\omega(\mathcal S_{\mathrm P})=1$ by Proposition~\ref{prop:msgq}.

\paragraph{The swap mask.} Writing $P_{xy}=\pm\,\tau^{(1)}_{xy}\otimes\tau^{(2)}_{xy}$, the mask vector of cell $(x,y)$ is
$u_{x,y}=(\alpha_{xy},\beta_{xy})$, where
\begin{equation}\label{eq:ab}
\alpha_{xy}\coloneq[\tau^{(1)}_{xy}\in\{X,Y\}],\qquad
\beta_{xy}\coloneq[\tau^{(2)}_{xy}\in\{X,Y\}] .
\end{equation}
Table~\ref{tab:mask} lists these values. By \eqref{eq:mask-general}, the mask on cell $(x,y)$
is $\alpha_{xy}p_1\oplus\beta_{xy}p_2$; following the rest of the paper, we write it as
$\sgn(x,y)$ and leave the dependence on $p=(p_1,p_2)$ implicit.

\begin{table}[h]
\centering
\renewcommand{\arraystretch}{1.15}
\begin{tabular}{c|ccc}
\rowcolor{headgrey}
$(\alpha_{xy},\beta_{xy})$ & $y=0$ & $y=1$ & $y=2$\\
\hline
$x=0$ & $(0,0)$ & $(0,0)$ & $(0,0)$\\
$x=1$ & $(1,0)$ & $(0,1)$ & $(1,1)$\\
$x=2$ & $(1,0)$ & $(0,1)$ & $(1,1)$\\
\end{tabular}
\caption{The mask vectors \eqref{eq:ab} of the Peres strategy \eqref{eq:peres}. The mask is
$\sgn(x,y)=\alpha_{xy}p_1\oplus\beta_{xy}p_2$.}
\label{tab:mask}
\end{table}

\begin{definition}[Parity Magic Square game $\pMSG_N$]\label{def:pmsg}
$\pMSG_N$ is defined as the Parity-$G$ game, from Definition~\ref{def:parityG}, with $G=\MSG$ and the Peres
strategy $\mathcal S_{\mathrm P}$. Explicitly, the players are Alice, Bob $B_1$, and parity Bobs $B_2,\dots,B_{N-1}$.
\begin{itemize}
\item \textbf{Inputs.} Alice receives a row $x\in\{0,1,2\}$ and $B_1$ a column
$y\in\{0,1,2\}$, both input are uniform random and independent. The parity Bobs receive no input.
\item \textbf{Outputs.} Alice outputs $a=(a_0,a_1,a_2)$, $B_1$ outputs $b=(b_0,b_1,b_2)$, and
each parity Bob $B_j$ outputs two bits $(c_j^{(1)},c_j^{(2)})$.
\item \textbf{Parities and mask.} The referee sets
$p_1=\bigoplus_{j=2}^{N-1}c_j^{(1)}$, $p_2=\bigoplus_{j=2}^{N-1}c_j^{(2)}$, and the mask
\begin{equation}\label{eq:mask}
\sgn(0,y)=0\ (\forall y),\qquad
\sgn(x,0)=p_1,\ \ \sgn(x,1)=p_2,\ \ \sgn(x,2)=p_1\oplus p_2\quad(x\in\{1,2\}).
\end{equation}
\item \textbf{Winning predicate.}
\begin{equation}\label{eq:pmsg-win}
\bigoplus_{\ell=0}^{2}a_\ell=0\ \ (\text{Alice even}),\quad
\bigoplus_{\ell=0}^{2}b_\ell=1\ \ (B_1\text{ odd}),\quad
a_y\oplus b_x=\sgn(x,y)\ \ (\text{masked cell check}).
\end{equation}
\end{itemize}
For $N=2$ there are no parity Bobs, $\sgn\equiv0$, and $\pMSG_2=\MSG$.
\end{definition}

\begin{remark}[Equivalence with Definition~\ref{def:parityG}]\label{rem:relabel}
The general predicate \eqref{eq:parityG-win} checks Alice's parity on $a\oplus\sgn(x,\cdot)$,
while \eqref{eq:pmsg-win} checks it on $a$. The two agree: each row of the Peres square
multiplies to $+\id$ (Lemma~\ref{lem:peres}), so by Lemma~\ref{lem:mask-props} (iii)
$\bigoplus_{y}\sgn(x,y)=0$ for every $x$, which is also implied by \eqref{eq:mask}. Hence, the referee may equivalently replace Alice's cell value $a_y$ by $a_y\oplus\sgn(x,y)$ and apply the
ordinary magic-square check. Note that the relabeled triple would still have even parity.
\end{remark}

\paragraph{The honest strategy.} Here, we simply instantiate the measurement strategy \textnormal{(M1)--(M3)} from Section \ref{sec:honeststrategy} with $k=2$:
\begin{enumerate}
\item[\textnormal{(S1)}] Each parity Bob $B_j$ measures $X$ on both of their qubits and gets $c_j^{(1)},c_j^{(2)}$.
\item[\textnormal{(S2)}] On row input $x$, Alice measures the commuting Peres observables $P_{x0},P_{x1},P_{x2}$ of \eqref{eq:peres} jointly on her qubits $(A^{(1)},A^{(2)})$, getting the measurement outcomes $a_0,a_1,a_2$.
\item[\textnormal{(S3)}] On column input $y$, $B_1$ measures the commuting conjugated observables $\overline{P_{0y}},\overline{P_{1y}},\overline{P_{2y}}$ jointly on his qubits $(B_1^{(1)},B_1^{(2)})$, getting the measurement outcomes $b_0,b_1,b_2$.
\end{enumerate}
By Lemma~\ref{lem:dressed}, after (S1), Alice and $B_1$ obtain the state
$(Z^{p_1}_{A^{(1)}}\otimes Z^{p_2}_{A^{(2)}})\Bell{+}^{\otimes2}$, and by
Lemma~\ref{lem:twisted-meas} the twist flips exactly the cells with $\sgn(x,y)=1$. We also note the following properties of $\pMSG_N$.

\begin{proposition}[Quantum value]\label{prop:wq}
$\wq(\pMSG_N)=1$: the strategy \textnormal{(S1)--(S3)} wins on every input and for every
outcome of the parity Bobs.
\end{proposition}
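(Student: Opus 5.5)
The plan is to obtain the statement directly from the general framework: $\pMSG_N$ is the Parity-$G$ game for $G=\MSG$ with the Peres strategy $\mathcal S_{\mathrm P}$, and (S1)--(S3) instantiate (M1)--(M3). The upper bound $\wq(\pMSG_N)\le 1$ is trivial, so the work is to show that the honest strategy wins with certainty. That gives $\wq(\pMSG_N)=1$, and it is also what Corollary~\ref{cor:parityG-pt} yields once $\omega(\mathcal S_{\mathrm P})=1$ is known from Proposition~\ref{prop:msgq}. The statement, however, claims something stronger than winning with probability one on average: the strategy must win on \emph{every} input and for \emph{every} parity outcome. So I will argue pointwise in $(x,y,p)$.

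First, fix the parity Bobs' outcomes, hence $p=(p_1,p_2)$. By Lemma~\ref{lem:dressed}, each outcome occurs with positive probability, and the conditional state of Alice and $B_1$ is $(Z^{p_1}\otimes Z^{p_2})_A\Bell{+}^{\otimes 2}$. Next, fix $(x,y)$. Following the proof of Proposition~\ref{prop:parityG-honest} together with Lemma~\ref{lem:twisted-meas}, the conditional distribution of $(a,b)$ equals the distribution of $(a'\oplus\sgn(x,p),b)$, where $(a',b)$ is distributed as the Peres strategy's outputs on inputs $(x,y)$.

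Proposition~\ref{prop:msgq} shows that for each fixed $(x,y)$ the Peres outputs satisfy the $\MSG$ predicate with probability one. Its proof gives this per input: row parity even, column parity odd, and $a'_y=b_x$ by the ricochet identity. Consequently $(a\oplus\sgn(x,p),b)$ satisfies $V_{\MSG}$ almost surely for each $(x,y,p)$.

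It remains to translate this into the form \eqref{eq:pmsg-win}. The masked cell check $a_y\oplus b_x=\sgn(x,y)$ is exactly $a'_y=b_x$. Alice's parity on $a$ rather than on $a'$ follows from Remark~\ref{rem:relabel}, since $\bigoplus_y\sgn(x,y)=0$ by \eqref{eq:mask}, which rests on Lemma~\ref{lem:mask-props}(iii). Bob's odd parity is unaffected by the mask.

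The only delicate point is the "every input and every outcome" quantifier. Here I will make explicit that the winning event has conditional probability one for every $(x,y)$ and every parity outcome, rather than only on average. This is immediate because none of the steps above averages over anything.
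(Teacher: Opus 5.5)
Your proposal is correct and follows essentially the paper's route. The paper's proof simply invokes Proposition~\ref{prop:parityG-honest} with $\omega(\mathcal S_{\mathrm P})=1$, and the proof of that proposition is exactly your pointwise-in-$(x,y,p)$ computation via Lemmas~\ref{lem:dressed} and~\ref{lem:twisted-meas}. Your explicit handling of the ``every input, every outcome'' quantifier, and of the translation to \eqref{eq:pmsg-win} via Remark~\ref{rem:relabel}, spells out what the paper's one-line citation leaves implicit.
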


\begin{proof}
This follows from Proposition~\ref{prop:parityG-honest} with $\omega(\mathcal S_{\mathrm P})=1$.
\end{proof}

\begin{proposition}[Classical value]\label{prop:wc}
$\wc(\pMSG_N)=8/9<1$. Hence $\pMSG_N$ is a pseudo-telepathy game.
\end{proposition}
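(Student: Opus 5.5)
The plan is to obtain the statement directly from Theorem~\ref{thm:parityG-values}, specialised to $G=\MSG$ with the Peres strategy $\mathcal S_{\mathrm P}$. By Definition~\ref{def:pmsg}, $\pMSG_N$ is exactly the Parity-$G$ game $\pG{\MSG}_N$. By Remark~\ref{rem:relabel}, the explicit predicate \eqref{eq:pmsg-win} agrees with the general predicate \eqref{eq:parityG-win}. Hence Theorem~\ref{thm:parityG-values} applies and gives $\wc(\pMSG_N)=\wc(\MSG)$. Combined with Proposition~\ref{prop:msgc}, this yields $\wc(\pMSG_N)=8/9$. Since $8/9<1=\wq(\pMSG_N)$ by Proposition~\ref{prop:wq}, $\pMSG_N$ meets the definition of a pseudo-telepathy game (this is also Corollary~\ref{cor:parityG-pt}).

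For completeness, I would also spell out the classical argument directly for $\pMSG_N$, mirroring the proof of Theorem~\ref{thm:parityG-values}.

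The lower bound comes from letting every parity Bob output $(0,0)$. Then $p=0$, so $\sgn\equiv0$ by \eqref{eq:mask}, and Alice and $B_1$ can play an optimal deterministic $\MSG$ strategy, which wins with probability $8/9$.

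For the upper bound, convexity lets me restrict to deterministic strategies. In such a strategy the parity Bobs output constants, so $p=(p_1,p_2)$ is fixed. Define Alice's relabelled answer on row $x$ as $a'^{(x)}_y\coloneq a^{(x)}_y\oplus\sgn(x,y)$. This relabelled answer has the following properties:
\begin{itemize}
\item it is a deterministic function of $x$;
\item it still has even parity, because $\bigoplus_y\sgn(x,y)=0$ by \eqref{eq:mask};
\item the masked cell check $a_y\oplus b_x=\sgn(x,y)$ becomes $a'^{(x)}_y=b^{(y)}_x$.
\end{itemize}
So the pair $(a',b)$ is a deterministic $\MSG$ strategy with the same winning set. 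The row/column parity contradiction in the proof of Proposition~\ref{prop:msgc} then shows that at least one of the nine uniformly weighted cells is lost, giving at most $8/9$.

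There is no real obstacle here. The only point needing care is confirming that the relabelling preserves Alice's even-parity constraint, which is exactly the row-sum-zero property of the mask.
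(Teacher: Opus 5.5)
Your proposal is correct and follows the paper's own proof, which likewise derives $\wc(\pMSG_N)=8/9$ from Theorem~\ref{thm:parityG-values} together with $\wc(\MSG)=8/9$. Your supplementary direct argument is also sound. Relabelling Alice's answers by the mask mirrors the classical upper bound in the proof of Theorem~\ref{thm:parityG-values}, whereas the remark after the proposition instead checks that the extra mask term in the row-versus-column parity count vanishes.
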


\begin{proof}
This follows from Theorem~\ref{thm:parityG-values} and $\wc(\MSG)=8/9$.
\end{proof}

\begin{remark}
The classical value can also be checked directly. For a deterministic strategy the mask is
fixed, and the row-versus-column parity count of the magic square acquires the extra term
$\bigoplus_{x,y}\sgn(x,y)=p_1\bigoplus_{x,y}\alpha_{xy}\oplus p_2\bigoplus_{x,y}\beta_{xy}$.
Table~\ref{tab:mask} has four ones in each coordinate, so this term vanishes for every $(p_1,p_2)$, and we can follows the same proof as the 2-player Magic Square game.
\end{remark}

\paragraph{Key input.} The key input for Alice is $x^*=0$, which is never masked.

\subsection{Further instances}\label{sec:parityG-examples}
Here, we show some examples of other games which also fit into the framework of Parity-$G$ games.
\begin{example}[Parity-CHSH]\label{ex:pchsh}
Let $G=\mathsf{CHSH}$: $x,y\in\{0,1\}$ uniform, $|L|=1$, $b\in\{0,1\}$, and
$V(x,y,a,b)=[\,a\oplus b=xy\,]$. The optimal strategy on one EPR pair has Alice measure
$Q_0=Z$ and $Q_1=X$ and Bob measure $\tfrac{1}{\sqrt2}\big(Z+(-1)^{y}X\big)$. With this strategy, they win with probability $\cos^2(\pi/8)$. Alice's observables are Pauli strings and $x^*=0$ is a key input. The mask vectors are $u_0=0$ and $u_1=1$, so $\sgn(x,p)=xp$, and the resulting Parity-$G$ game, which we denote $\pCHSH_N$, is won iff
\[
a\oplus b\oplus xp=xy,\qquad p=\textstyle\bigoplus_{j=2}^{N-1}c_j .
\]
By Theorem~\ref{thm:parityG-values}, $\wq(\pCHSH_N)=\cos^2(\pi/8)$ and $\wc(\pCHSH_N)=3/4$. Bob's observables are not Pauli strings, which is allowed since the twist sits on Alice's side. The Parity-CHSH game was defined by \cite{ribeiro2018fully}, and also used as part of their conference key agreement protocol.
\end{example}

\begin{example}[Binary constraint system games]\label{ex:bcsg}
We note that a subset of \emph{Binary constraint system games} admitting Pauli strategies \cite{cleve2014characterization} can be used to construct their corresponding Parity-$G$ games as well, following a similar argument as above.
\end{example}

%% file: CKA.tex
\section{The DI-CKA protocol}\label{sec:protocol}

We now show our conference key agreement protocol, basing it on the Parity Magic Square game, followed by a proof of its security. By Remark~\ref{cor:nocons}, it seems challenging and perhaps impossible in certain cases to obtain a conference key from a single game, so our protocol uses two measurement settings depending on whether the players are in a test round or a key round. 

\subsection{Setup}

There are $N$ parties $P_1=\text{Alice},P_2,\dots,P_N=B_1,\dots,B_{N-1}$, each holding one untrusted device with settings $\{X,Y,Z\}$ and classical outputs. We make the same standard assumptions as used in DI-QKD literature (isolated parties, trusted local randomness, authenticated public broadcast given to Alice and Bob), see for instance \cite{cervero2025device}. In particular, we assume that the devices can not access classical communication between Alice and Bob during the protocol. This is important when Alice announces a round as being a test round or a key round, for instance, as we will see in the protocol. At the end, this information is made public, since their classical channels are not secret to an adversary Eve. Note that this is consistent with the non-signaling assumption between the devices and Eve, during the protocol. As we are in the device-independent setting, we assume nothing about the devices' internal workings and allow them to be maximally malicious within these constraints.

\subsection{The DI-CKA protocol}
Here we describe the conference key agreement protocol based on $\pMSG_N$.

\begin{protocolbox}[Protocol 1: $\pMSG_N$-DI-CKA] \label{pr:one}
\textbf{Parameters.} Rounds $n$; test probability $\gamma\in(0,1)$; tolerated score
$\omega_{\mathrm{tol}}=\omega(v)-\delta$; error-correction budget; correctness-hash length
$t$; key length $\ell$.

\smallskip
\textbf{For each round $i=1,\dots,n$:}
\begin{enumerate}
\item[(1)] \emph{Round type.} Alice asks her device to generate $\ket{\GHZ_N}^{\otimes2}$ and distributes it with all the players. She also broadcasts a coin $T_i$ with $\Pr[T_i=1]=\gamma$.
\item[(2a)] \emph{Key round} ($T_i=0$). Every player asks their devices to measure $Z$ and records their output, setting their raw key bits $\{K_{j,i}^b\}_{b \in \{0,1\}}$.
\item[(2b)] \emph{Test round} ($T_i=1$). Alice chooses a uniform random row $x_i$, $B_1$ chooses a uniform random column $y_i$, and they both ask their devices to perform measurements according to the cell $(x_i,y_i)$ in \eqref{eq:peres}; all the parity Bobs $B_2,...,B_{N-1}$ ask their devices to measure $X$ on both their qubits.
\end{enumerate}
(Note that all devices may behave dishonestly, both when preparing states or performing measurements) \\
\textbf{After all rounds:}
\begin{enumerate}
\item[(3)] \emph{Parameter estimation.}  All test-round inputs and outputs are broadcast. All parties calculate $p_{1,i},p_{2,i}$, the masks $\sgn_i$, and the
empirical winning frequency $\hat\omega$. If $\hat\omega<\omega_{\mathrm{tol}}$, they \textbf{abort}.
\item[(4)] \emph{Sifting.} Each party's raw key is its key-round string
$\mathbf K_j=\{K_{j,i}^0,K_{j,i}^1\}_{i:T_i=0}$.
\item[(5)] \emph{Error correction (one-to-many).} Alice broadcasts a single syndrome of
$\mathbf K_1$ of length $\leakEC\approx n_{\mathrm{key}}\max_j H(K_1|K_{B_j})+O(\log1/\eps_{\mathrm{EC}})$;
each Bob decodes towards $\mathbf K_1$. Alice broadcasts a $2$-universal hash of $\mathbf K_1$
of length $t$; abort on any mismatch.
\item[(6)] \emph{Privacy amplification.} Alice broadcasts a hash seed; all parties hash their corrected keys to length $\ell$, obtaining $K_1,\dots,K_N$.
\end{enumerate}
\end{protocolbox}

In short, every party measures $Z$ on key rounds, so all $N$ parties obtain the same key. The test rounds are just used to ensure that the devices behave honestly throughout the protocol, since they are not able to distinguish between key rounds and some test rounds (in particular, when Alice's test round input is 0).

Now, we can generalize this conference key agreement protocol to DI-CKA protocols based on Parity-$G$ games as well. Let $\pG{G}_N$ be a Parity-$G$ game corresponding to the two player game $G$. Similarly as before, we interleave test rounds of playing $\pG{G}_N$ with key rounds, in which Alice's device receives the key input $x^*$ and every other party measures (say) $k$ qubits in the computational basis. Condition (C3) from Definition \ref{def:pauli-strategy} makes Alice's key round output equivalent to a $Z$-basis outcome (or a bitstring), so all parties obtain a common
string. So, the security of this string can be certified by passing the test rounds, as we explain later in the analysis in the next section.
 
\begin{lemma}[Key input and consensus]\label{lem:parityG-key}
Let $x^*$ be a key input for $\mathcal S$, with $Q_{x^*,\ell}=(-1)^{s_\ell}Z^{v_\ell}$ as used in Definition \ref{def:pauli-strategy}. Then,
\begin{enumerate}
\item[\textnormal{(i)}] Every nonzero $\Pi^{x^*}_{a}$ equals $\proj{z}$ for the unique
$z\in\Ftwo^k$ with $a=\big(\langle v_\ell,z\rangle\oplus s_\ell\big)_{\ell\in L}$. Hence Alice's
output on $x^*$ and her computational-basis outcome $z$ determine each other.
\item[\textnormal{(ii)}] If all $N$ players measure each of their qubits of $\ket{\GHZ_N}^{\otimes k}$ in the computational basis, they all obtain the same uniformly random $z\in\Ftwo^k$.
\end{enumerate}
\end{lemma}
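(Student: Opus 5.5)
For part (i), I will diagonalize the projector of \eqref{eq:pauli-proj} in the computational basis. On input $x^*$ every factor is $\tfrac12(\id+(-1)^{a_\ell\oplus s_\ell}Z^{v_\ell})$. Since $Z^{v}\ket z=(-1)^{\langle v,z\rangle}\ket z$, each factor is diagonal and acts on $\ket z$ as the indicator of $a_\ell\oplus s_\ell=\langle v_\ell,z\rangle$. Hence
\[
\Pi^{x^*}_a=\sum_{z\in\Ftwo^k}\Big[\,a_\ell=\langle v_\ell,z\rangle\oplus s_\ell\ \ \forall \ell\in L\,\Big]\proj z .
\]
The main step is uniqueness of $z$. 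Suppose two strings $z,z'$ both satisfy the condition. Then $\langle v_\ell,z\oplus z'\rangle=0$ for all $\ell$. Because the $v_\ell$ span $\Ftwo^k$ by (C3), it follows that $\langle w,z\oplus z'\rangle=0$ for every $w$, and so $z=z'$. Thus every nonzero $\Pi^{x^*}_a$ is a single rank-one projector $\proj z$. Conversely, every $z$ determines an admissible $a$, so the map $z\mapsto a$ is a bijection onto the outcomes with nonzero projector. This yields the mutual determination of $a$ and $z$.

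For part (ii), I will treat each GHZ copy separately. Measuring all $N$ qubits of $\ket{\GHZ_N}=\tfrac1{\sqrt2}(\ket{0\cdots0}+\ket{1\cdots1})$ in the computational basis gives the all-$0$ or all-$1$ string, each with probability $1/2$. So all players obtain the same bit $z_i$ on copy $i$, and $z_i$ is uniform. The $k$ copies are in a tensor product, so the $z_i$ are independent, and the common string $z=(z_1,\dots,z_k)$ is uniform on $\Ftwo^k$.

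To connect with (i), I will also note that Alice measuring $x^*$ is, by (i), a coarse-graining of her computational-basis measurement; in fact it is a relabelling, since the map is a bijection. Her output is therefore the known function $a(z)$ of the shared string, and the other players can compute it from their own $z$.

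I expect no real obstacle. The only point requiring care is the use of the spanning condition for uniqueness, which also rules out degenerate projectors.
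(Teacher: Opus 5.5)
Your proposal is correct and follows essentially the same route as the paper: you diagonalize $\Pi^{x^*}_a$ in the computational basis via $Z^{v}\ket z=(-1)^{\langle v,z\rangle}\ket z$, deduce uniqueness of $z$ from the spanning condition (C3), and obtain (ii) copy by copy from the structure of $\ket{\GHZ_N}$. You spell out the uniqueness argument and the bijection $z\mapsto a(z)$ in more detail than the paper's proof does, but the approach is the same.
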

 
\begin{proof}
(i) $Q_{x^*,\ell}\ket z=(-1)^{\langle v_\ell,z\rangle\oplus s_\ell}\ket z$, so $\Pi^{x^*}_a$
projects onto the span of the $\ket z$ with $\langle v_\ell,z\rangle\oplus s_\ell=a_\ell$ for all
$\ell$. Since the $v_\ell$ span $\Ftwo^k$, at most one $z$ solves these equations.
(ii) Follows directly by the entanglement structure of each GHZ state.
\end{proof}

%% file: security.tex
\section{Security proofs for $\pMSG_N$-DI-CKA}\label{sec:security-proofs}
We start by recalling the security definition for a general $N$-party DI-CKA protocol.
\begin{definition}[Restatement of Definition \ref{def:cka}]
An $N$-party DI-CKA protocol outputs a key $K\in\{0,1\}^\ell$ for all parties, or $\bot$. It
is $\eps$-secure if:
\begin{enumerate}
\item[(i)] \emph{Correctness} ($\eps_{\mathrm{cor}}$):
$\Pr[\text{no abort}\wedge\exists i\ne j:K_i\ne K_j]\le\eps_{\mathrm{cor}}$;
\item[(ii)] \emph{Secrecy} ($\eps_{\mathrm{sec}}$): conditioned on no abort,
$\tfrac12\big\|\rho_{KE}-\tau_K\otimes\rho_E\big\|_1\le\eps_{\mathrm{sec}}$, with $\tau_K$
maximally mixed and $E$ the adversary's quantum side information including the public
transcript,
\end{enumerate}
and $\eps=\eps_{\mathrm{cor}}+\eps_{\mathrm{sec}}$.
\end{definition}

We first prove a theorem showing the correctness of the $\pMSG_N$-DI-CKA protocol.

\begin{theorem}[Correctness of Protocol 1]\label{thm:correctness}
For any parameter estimation tolerance bound $\delta \in [0,1/9)$, Protocol 1 is $\eps_{\mathrm{cor}}$ for
\begin{equation}
    \eps_{\mathrm{cor}} \le (N-1)(\eps_{EC}+\eps'_{EC}) + \left(1-\gamma\left(1-e^{-2\delta^2}\right)\right)^n
\end{equation}
where $\max\limits_{j}\{\Pr[\mathrm{EC}(j) \text{ } \mathrm {aborts}]\} \le \eps_{\mathrm{EC}}$ and $\max\limits_{i,j}\{\Pr[K_i\ne K_j \mid \mathrm{no} \text{ } \mathrm{abort}]\} \le \eps'_{\mathrm{EC}}$.
\end{theorem}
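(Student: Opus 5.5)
I will bound the failure event $\{\text{no abort}\wedge\exists i\ne j:K_i\ne K_j\}$ by a union bound over two sources of failure. The first source is a failure of the one-to-many error correction together with the hash check. The second is a failure tied to the parameter-estimation step. The strategy is to split the event according to which party's decoding went wrong, then handle the residual term separately.

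\emph{Error-correction term.} Every key $K_j$ is obtained by $B_j$ decoding towards Alice's string $\mathbf K_1$, followed by the same privacy-amplification hash applied to the corrected strings. Hence if every Bob's corrected string equals $\mathbf K_1$, then all $K_i$ coincide. So the event $\exists i\ne j: K_i\ne K_j$ implies that some $j\in\{2,\dots,N\}$ has a corrected string different from $\mathbf K_1$. A union bound over the $N-1$ Bobs reduces the problem to a single pair $(1,j)$. For that pair I bound the contribution by the sum of two probabilities: that $\mathrm{EC}(j)$ misbehaves (bounded by $\eps_{\mathrm{EC}}$), and that the outputs differ although no abort occurs (bounded by $\eps'_{\mathrm{EC}}$). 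The latter is the standard guarantee of the length-$t$ 2-universal hash check, since the collision probability is at most $2^{-t}$. Together these give $(N-1)(\eps_{\mathrm{EC}}+\eps'_{\mathrm{EC}})$.

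\emph{Parameter-estimation term.} The remaining summand $\left(1-\gamma(1-e^{-2\delta^2})\right)^n$ should account for the case where no test-round statistics meaningfully constrain the run, so that passing parameter estimation gives no guarantee. I would argue round by round. Each round is, independently, a test round with probability $\gamma$. Conditioned on being a test round, Hoeffding's inequality with deviation $\delta$ gives probability at least $1-e^{-2\delta^2}$ that the honest-like behaviour is detected, i.e.\ that the tolerance test is informative. So in each round the ``bad'' event has probability at most $1-\gamma(1-e^{-2\delta^2})$. Independence of the coins $T_i$ then yields the $n$-th power. The condition $\delta<1/9$ ensures $\omega_{\mathrm{tol}}=1-\delta>8/9$, so that passing the test excludes the trivial classical strategies.

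\emph{Main obstacle.} I expect the second term to be the main obstacle. Its exact form, a per-round Hoeffding factor raised to the $n$, is not a standard concentration bound on $\hat\omega$. Making it rigorous requires choosing carefully which event ``estimation fails'' refers to, and I would phrase it as a per-round Bernoulli event so that the product form arises. I would not try to sharpen it: the statement is only an upper bound, and the union bound over the error-correction terms plus this term suffices.
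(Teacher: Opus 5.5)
Your error-correction part matches the paper's: a union bound over the $N-1$ Bobs, giving $(N-1)(\eps_{\mathrm{EC}}+\eps'_{\mathrm{EC}})$. The gap is in your derivation of the second term, and this step fails. Hoeffding applied to a single round only says that a test round is lost with probability at most $e^{-2\delta^2}$. ``The tolerance test is informative'' is never defined as an event. The product $\prod_i\Pr[\text{round } i\text{ bad}]$ is the probability that \emph{every} round is bad, and you never relate that event to the failure event. The natural candidate, a parameter-estimation abort, is not contained in it: estimation aborts whenever the \emph{fraction} of won test rounds falls below $\omega_{\mathrm{tol}}$, which is compatible with many rounds being won. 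So ``independence of the $T_i$ yields the $n$-th power'' bounds nothing relevant.

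In the paper, this term is the probability that an \emph{honest} implementation aborts at parameter estimation. Condition on $\sum_iT_i=j$ and apply Hoeffding jointly to the $j$ win indicators $C_i$ (mean $\omega(v)$, threshold $\omega_{\mathrm{tol}}=\omega(v)-\delta$). This gives $\Pr[\sum_iC_i<\omega_{\mathrm{tol}}j\mid\sum_iT_i=j]\le e^{-2\delta^2j}$. Averaging over $j\sim\mathrm{Bin}(n,\gamma)$ and using the binomial theorem gives $(1-\gamma+\gamma e^{-2\delta^2})^n$. The product form is $\mathbb{E}[e^{-2\delta^2\sum_iT_i}]=\prod_i\mathbb{E}[e^{-2\delta^2T_i}]$, so independence enters through a moment generating function, not through a per-round bad event. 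The paper adds this to the error-correction \emph{abort} term $(N-1)\eps_{\mathrm{EC}}$ (note that $\eps_{\mathrm{EC}}$ bounds an abort probability, not a ``misbehaviour'') and to the mismatch term $(N-1)\eps'_{\mathrm{EC}}$. The result bounds abort-or-mismatch for honest devices, so the paper is folding completeness into the statement.

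For the event that actually defines $\eps_{\mathrm{cor}}$, you need neither extra term. Directly from the hypothesis on final keys, $\Pr[\text{no abort}\wedge\exists i\ne j:K_i\ne K_j]\le\sum_{m=2}^{N}\Pr[K_1\ne K_m\mid\text{no abort}]\le(N-1)\eps'_{\mathrm{EC}}$. There is no need to detour through corrected strings or the $2^{-t}$ collision bound, which in any case controls a joint rather than a conditional probability. The stated inequality then follows because the remaining terms are nonnegative. So either drop your second part and argue this way, or replace it with the binomial-averaged Hoeffding computation; as written it is not a valid argument. Minor points: $\omega_{\mathrm{tol}}=\omega(v)-\delta$, not $1-\delta$, and the hypothesis $\delta<1/9$ plays no role in the proof.
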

\begin{proof}
    In an honest implementation of the protocol, we assume the measurement devices acted honestly, which means they prepared the correct states and reported the correct measurement outcomes to all the players. If our protocol is honestly implemented, we show that the probability of getting a bit of key per round shared by all the players is high, as follows.
    
    Conditioned on not aborting, we know that the probability of any two of them not sharing the same key bits is bounded by $\eps_{\mathrm{EC}}$. By applying a union bound on the mismatch events corresponding to Alice's key bits and each of the $N-1$ Bobs' key bits, we obtain that 
    \begin{equation}\label{eq:nokey_noabort}
        \Pr[\exists i,j: K_i\ne K_j \mid \mathrm{no} \text{ } \mathrm{abort}] \le (N-1)\eps'_{\mathrm{EC}}.
    \end{equation}
    Now, we just need to bound the probability of the protocol aborting. There are two scenarios in which the protocol aborts.
    \begin{enumerate}
        \item An error correction step failed, which happens with probability at most $\eps_{\mathrm{EC}}$ per Bob's EC$(j)$. By a union bound, the probability of any one of them failing is at most
        \begin{equation}\label{eq:ECabort}
            \Pr[\exists j: \mathrm{EC}(j) \text{ } \mathrm {aborts}] \le (N-1)\eps_{\mathrm{EC}}.
        \end{equation}
        \item The parameter estimation step failed, conditioned on the protocol not aborting in the error correction step.
    \end{enumerate}
    The latter probability is bounded by the following expression. Let $G^{(j)}$ denote Alice's guess for Bob $j$'s testing-round bits, and let $G^{(1,\ldots,N-1)}$ be correct whenever $G^{(j)}$ is correct for every $j\in\{1,\ldots,N-1\}$. Then,
    \begin{align} 
    \Pr(\mathrm{PE} \text{ } \mathrm{aborts}) &= \Pr\!\left[G^{(1,\ldots,N-1)} \text{ is correct}\right] \Pr\!\left[ \sum_{i=1}^{n} C_i < \omega_{\mathrm{tol}}\sum_{i=1}^{n}T_i \,\middle|\, G^{(1,\ldots,N-1)} \text{ is correct} \right] \notag \\ 
    &\quad + \Pr\!\left[\exists j : G^{(j)} \text{ is wrong}\right] \Pr\!\left[ \sum_{i=1}^{n} C_i < \omega_{\mathrm{tol}}\sum_{i=1}^{n}T_i \,\middle|\, \exists j : G^{(j)} \text{ is wrong} \right] . 
    \end{align}

    Here $T_i$ denotes the indicator random variable for the event that round $i$ is chosen for testing, so that the $T_i$ are i.i.d. Bernoulli random variables with $\Pr[T_i=1]=\gamma$. Moreover, $C_i$ denotes the indicator random variable for the event that the parameter estimation test passed in test round $i$. So, for every $j\in\{0,\ldots,n\}$,
    \begin{equation}
        \Pr\!\left[\sum_{i=1}^{n}T_i=j\right] = \binom{n}{j}(1-\gamma)^{n-j}\gamma^j . 
    \end{equation}
    Now, conditioned on $\sum_{i=1}^{n}T_i=j$ and assuming $\mathbb{E}[C_i]=\omega(\nu)$ such that $\delta = \omega(v)-\omega_{\mathrm{tot}}$, Hoeffding's inequality gives 
    \[ 
    \Pr\!\left[ \sum_{i=1}^{n}C_i < (\omega_{\mathrm{tol}}) j \,\middle|\, \sum_{i=1}^{n}T_i=j \right] \leq e^{-2\delta^2 j}. 
    \]
    We can also trivially bound $ \Pr\!\left[G^{(1,\ldots,N-1)} \text{ is correct}\right]\leq 1$. Substituting back into the previous expression, we get

    \begin{align} 
    P_{\mathrm{PE}}(\mathrm{abort}) &= \sum_{j=0}^{n} \Pr\!\left[\sum_{i=1}^{n}T_i=j\right] \Pr\!\left[ \sum_{i=1}^{n}C_i < (\omega_{\mathrm{tol}})j \,\middle|\, \sum_{i=1}^{n}T_i=j \right] \notag \\ 
    &\le \sum_{j=0}^{n} \binom{n}{j}(1-\gamma)^{n-j}\gamma^j e^{-2\delta^2 j} \notag \\ 
    &= \sum_{j=0}^{n} \binom{n}{j}(1-\gamma)^{n-j} \left(\gamma e^{-2\delta^2}\right)^j \notag \\ 
    &= \left(1-\gamma+\gamma e^{-2\delta^2}\right)^n \notag \\ 
    &= \left(1-\gamma\left(1-e^{-2\delta^2}\right)\right)^n \,, \label{eq:parabort}
    \end{align}  
    where the final equality follows from the binomial theorem as 
    \[ 
    \sum_{j=0}^{n} \binom{n}{j}(1-\gamma)^{n-j} \left(\gamma e^{-2\delta^2}\right)^j = \left(1-\gamma+\gamma e^{-2\delta^2}\right)^n = \left(1-\gamma\left(1-e^{-2\delta^2}\right)\right)^n . 
    \]
    
\noindent Thus, the total error probability is bounded by the sum of the failure probabilities of any of the three events in \eqref{eq:nokey_noabort}, \eqref{eq:ECabort} and \eqref{eq:parabort}, which proves the lemma.
\end{proof}

It remains to prove the secrecy of Protocol 1. For this, the main quantity to control is the smooth conditional min-entropy of Alice's raw key string conditioned on the adversary's quantum side information and the public transcript. More precisely, we require a lower bound on 
\begin{equation}
    H_{\min}^{\eps}\!\left(A_1^n \mid X=0, E\right)_{\rho \mid \Omega},
\end{equation}
where $\rho_{A_1^n E}$ denotes the classical-quantum state describing Alice's raw key string and the adversary's side information immediately before privacy amplification, conditioned on $\Omega$, the event that the protocol did not abort.

The secrecy proof uses two main technical ingredients. The first is the Leftover Hash Lemma applied to a classical-quantum state $\rho_{A_1^n E}$, which converts a lower bound on the smooth min-entropy of Alice's raw key into a bound on the secrecy of the extracted key. The second is the Generalized Entropy Accumulation Theorem (GEAT), which is used to establish such a lower bound on the smooth min-entropy in terms of the per round conditional von Neumann entropy. We state these two theorems as follows.

\begin{theorem}[Leftover Hash Lemma on $\rho_{A_1^n E}$ \cite{Renner05}] \label{thm:leftoverhash}
Let $\rho_{A_1^n E}$ be the classical-quantum state of Alice's raw key string $A_1^n \in \{0,1\}^n$ and the adversary's quantum side information $E$ immediately before privacy amplification. Let $\Omega$ denote the event that Protocol 1 does not abort. Let $\mathcal{F}$ be a two-universal family of hash functions $F:\{0,1\}^n \to \{0,1\}^{\ell}$. Let $F$ be sampled independently and publicly. The final key is defined as $K := F(A_1^n)$. Then, for every $\eps \geq 0$, 
\begin{equation}
    \frac12 \left\|\rho_{K E \mid \Omega} - \tau_K \otimes \rho_{E \mid \Omega} \right\|_1 \leq \eps + \frac12 \cdot 2^{ -\frac12 \left( H_{\min}^{\eps} \bigl(A_1^n \mid X=0,E\bigr)_{\rho \mid \Omega} - \ell \right) }, 
\end{equation} 
where $\tau_K := 2^{-\ell}\mathbb{I}_K$ denotes the maximally mixed state on $\{0,1\}^{\ell}$.
\end{theorem}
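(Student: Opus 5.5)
The Leftover Hash Lemma for a classical-quantum state with quantum side information is due to Renner (and Tomamichel et al.). Here it is applied to the conditioned state $\rho_{A_1^n E\mid\Omega}$. I would prove it in two stages: first for the non-smooth min-entropy, then lift to the smooth version by a triangle inequality.

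\emph{Step 1: non-smooth bound via collision entropy.} Fix any subnormalized cq state $\sigma_{A_1^nE}$. Write
\[
\sigma_{KFE}=\sum_f \tfrac{1}{|\mathcal F|}\proj f\otimes\sigma^{f}_{KE}.
\]
I would bound the distance to $\tau_K\otimes\sigma_{FE}$ by a weighted Hilbert--Schmidt norm. Using the operator Cauchy--Schwarz inequality $\|M\|_1\le\sqrt{\tr(\omega)}\,\|\omega^{-1/4}M\omega^{-1/4}\|_2$, with $\omega=\id_K\otimes\zeta_E$ for an arbitrary state $\zeta_E$, this gives
\[
\tfrac12\|\sigma_{KFE}-\tau_K\otimes\sigma_{FE}\|_1\le\tfrac12\sqrt{2^\ell}\,\|\cdot\|_{2,\zeta}.
\]
Then I would expand the squared norm and use two-universality, $\Pr_F[F(a)=F(a')]\le 2^{-\ell}$ for $a\ne a'$. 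The cross terms cancel against the $\tau_K$ term, which leaves
\[
\|\cdot\|_{2,\zeta}^2\le \tr\!\big[(\zeta_E^{-1/4}\sigma_{A_1^nE}\zeta_E^{-1/4})^2\big]=2^{-H_2(A_1^n|E)_{\sigma|\zeta}}.
\]
Choosing $\zeta_E$ optimally and using $H_2\ge H_{\min}$ yields
\[
\tfrac12\|\sigma_{KFE}-\tau_K\otimes\sigma_{FE}\|_1\le\tfrac12\,2^{-\frac12(H_{\min}(A_1^n|E)_\sigma-\ell)}.
\]

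\emph{Step 2: smoothing.} Let $\tilde\rho$ be a state within purified distance $\eps$ of $\rho_{A_1^nE\mid\Omega}$ attaining (up to arbitrarily small slack) $H_{\min}^\eps$. I would take $\tilde\rho$ to be classical on $A_1^n$, which is possible by pinching, since pinching does not increase purified distance and does not decrease min-entropy. Applying the same hash channel to $\rho$ and to $\tilde\rho$, the triangle inequality gives three terms:
\[
\|\rho_{KFE}-\tilde\rho_{KFE}\|,\qquad \|\tilde\rho_{KFE}-\tau\otimes\tilde\rho_{FE}\|,\qquad \|\tau\otimes\tilde\rho_{FE}-\tau\otimes\rho_{FE}\|.
\]
The first and last are each bounded by the trace distance between $\rho$ and $\tilde\rho$, by data processing under the hash channel and partial trace. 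Trace distance is at most purified distance, so each is at most $\eps$.

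This naive accounting gives $2\eps$ plus the Step 1 term, whereas the statement claims $\eps$. The main obstacle is this constant. I would first try to obtain exactly $\eps$ by measuring distance from $\tilde\rho_{KFE}$ to $\tau_K\otimes\rho_{FE}$ directly in the Step 1 computation, i.e. running the collision argument with the ideal marginal taken from $\rho$ rather than $\tilde\rho$. If that fails, I would accept the standard convention in which the constant is absorbed into the smoothing parameter, for example by redefining $\eps$ as $\eps/2$ in the smooth min-entropy. The bound with $\eps$ stated is the form given in \cite{Renner05}.

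The conditioning on $X=0$ and on $\Omega$ enters only by fixing which normalized cq state is hashed. The seed $F$ is chosen independently of that state, so no further argument is needed. The statement is exactly the usual Leftover Hash Lemma applied to this state.
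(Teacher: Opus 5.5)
Your Step 1 (weighted Hilbert--Schmidt bound, two-universality, $H_2\ge H_{\min}$) and your triangle-inequality smoothing are the standard argument behind the citation, and they are correct. The paper itself gives no proof beyond \cite{Renner05}. The gap is that this argument proves the bound with $2\eps$, and neither of your repairs reaches $\eps$.

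With the purified-distance smoothing fixed in the Preliminaries, $P(\rho,\tilde\rho)\le\eps$ only gives $\tfrac12\|\rho-\tilde\rho\|_1\le\eps$, so your first and third terms cost $\eps$ each. Repair (a) does not work. Against $\tau_K\otimes\rho_{FE}$, the cross term involves $\tr[\zeta_E^{-1/4}\tilde\rho_E\zeta_E^{-1/2}\rho_E\zeta_E^{-1/4}]$ rather than the square of $\zeta_E^{-1/4}\tilde\rho_E\zeta_E^{-1/4}$, so the cancellation is lost. After the factor $\tr\omega=2^{\ell}$ you are left with
\[
\tfrac12\Big(2^{\ell-H_2(A_1^n|E)_{\tilde\rho|\zeta}}+\big\|\zeta_E^{-1/4}(\tilde\rho_E-\rho_E)\zeta_E^{-1/4}\big\|_2^2\Big)^{1/2},
\]
on top of the untouched first term $\tfrac12\|\rho_{KFE}-\tilde\rho_{KFE}\|_1\le\eps$. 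By the same Cauchy--Schwarz step, the residual is at least $\|\tilde\rho_E-\rho_E\|_1^2$. Beyond that it is not controlled by $\eps$ at all, because $\zeta_E$ is tailored to $\tilde\rho$ and may be tiny, or vanish, where $\rho_E$ has weight. At best you recover $2\eps$. Repair (b) proves the inequality with $H^{\eps/2}_{\min}$ in place of $H^{\eps}_{\min}$. That is strictly weaker than the claim, since the smaller ball gives $H^{\eps/2}_{\min}\le H^{\eps}_{\min}$; it changes the statement, not a convention.

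Appealing to \cite{Renner05} does not settle it either. That reference smooths over trace-norm balls, and the constant depends on the convention. Your own argument gives exactly $\eps$ if $H^{\eps}_{\min}$ is taken over the unhalved ball $\{\tilde\rho:\|\tilde\rho-\rho\|_1\le\eps\}$, because each extra term is then at most $\eps/2$. However, the $H^{\eps}_{\min}$ in this statement is the purified-distance one, and so is the one the GEAT bound feeds into Theorem \ref{thm:secrecy}. A purified ball of radius $\eps$ is only contained in the unhalved trace-norm ball of radius $2\eps$.

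So what your argument establishes is the bound with $\eps$ replaced by $2\eps$. This is also the form usually stated for purified-distance smoothing (e.g.\ by Tomamichel, Schaffner, Smith and Renner). You should state and prove that version; it only changes $\eps$ to $2\eps$ in Theorem \ref{thm:secrecy} and Corollary \ref{thm:security}. Obtaining $\eps$ itself with purified-distance smoothing would need an idea that is not in your proposal.
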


\begin{theorem}[Generalized Entropy Accumulation Theorem (GEAT) \cite{metger2022generalised}]\label{thm:geat} Consider quantum channels $\mathcal{M}_i : R_{i-1}E_{i-1} \to A_i R_i E_i \,, i \in \{1,\dots,n\},$ satisfying the following non-signaling condition: for each channel $\mathcal{M}_i$, there exists a quantum channel $\mathcal{R}_i : E_{i-1} \to E_i$ such that
\begin{equation}\label{eq:geat_nonsignalling} 
    \operatorname{Tr}_{A_i R_i} \circ \mathcal{M}_i = \mathcal{R}_i \circ \operatorname{Tr}_{R_{i-1}} . 
\end{equation} 
Then, for any initial state $\rho^0_{R_0E_0}$, the smooth min-entropy of the outputs $A^n := A_1\cdots A_n$ conditioned on the final side information $E_n$ satisfies 
\begin{align}\label{eq:geat_min_entropy_bound} 
H_{\min}^{\varepsilon} \!\left( A^n \mid E_n \right)_{ \mathcal{M}_n \circ \cdots \circ \mathcal{M}_1 \left(\rho^0_{R_0E_0}\right) } &\geq \sum_{i=1}^{n} \inf_{\omega} H\!\left( A_i \mid E_i \widetilde{E}_{i-1} \right)_{ \mathcal{M}_i(\omega) } - O(\sqrt{n}) \,, \\
&\ge nh - c_0 - c_1\sqrt{n}  \,,
\end{align} 
where $\widetilde{E}_{i-1} \equiv R_{i-1}E_{i-1}$ is a purifying system for the input to $\mathcal{M}_i$, and the infimum is taken over all states $\omega$ on the systems $R_{i-1}E_{i-1}\widetilde{E}_{i-1}$. Appendix \ref{app:geat} contains the specific values of the constants $c_0$ and $c_1$, which depend on an affine min-tradeoff function $f$ as well.
\end{theorem}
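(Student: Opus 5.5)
The first inequality is the generalized entropy accumulation theorem of \cite{metger2022generalised}, so I would invoke it rather than re-derive it, and only spell out the chain of reductions it rests on. The proof proceeds in three steps. First, replace the smooth min-entropy by a sandwiched R\'enyi entropy $H_\alpha$ with $\alpha\in(1,2)$, using the standard bound
\[
H_{\min}^{\eps}(A^n\mid E_n)\ \ge\ H_\alpha(A^n\mid E_n)-\tfrac{g(\eps)}{\alpha-1},
\]
where $g(\eps)=\log(1/\eps^2)$ up to constants. Second, apply the R\'enyi chain rule of \cite{metger2022generalised} one channel at a time. The non-signalling condition \eqref{eq:geat_nonsignalling} is exactly what allows one to write $E_i=\mathcal R_i(E_{i-1})$ after tracing out $A_iR_i$, and hence to peel off $\mathcal M_i$. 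Each step yields
\[
H_\alpha(A_1^i\mid E_i)\ \ge\ H_\alpha(A_1^{i-1}\mid E_{i-1})+\inf_\omega H_\alpha(A_i\mid E_i\widetilde E_{i-1})_{\mathcal M_i(\omega)}.
\]
Here the infimum over all inputs $\omega$ together with the purifying register $\widetilde E_{i-1}$ absorbs all dependence on earlier rounds.

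Third, pass from $H_\alpha$ back to the von Neumann entropy. By continuity, $H_\alpha\ge H-(\alpha-1)\,O(\log^2 d_A)$, where $d_A=4$ in our protocol. Summing over $i$ gives an error of order $n(\alpha-1)+1/(\alpha-1)$, and choosing $\alpha-1\sim 1/\sqrt n$ gives the $O(\sqrt n)$ term.

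For the second inequality, I would introduce an affine min-tradeoff function $f$ for the test statistics, i.e.\ the win/lose/key-round register $C_i$. It must satisfy
\[
f(q)\le \inf\bigl\{H(A_i\mid E_i\widetilde E_{i-1})_{\mathcal M_i(\omega)} : \omega \text{ produces distribution } q \text{ on } C_i\bigr\}.
\]
Then conditioning on the non-abort event $\Omega$ (empirical score at least $\omega_{\mathrm{tol}}$), the accumulated entropy is at least $\sum_i f(\mathrm{freq}(C_i))\ge n\,h$ with $h=f(\omega_{\mathrm{tol}})$, using affinity of $f$. Making this step rigorous requires the event-conditioned version of GEAT: one appends the statistic register and uses that conditioning on $\Omega$ costs $\log(1/\Pr[\Omega])$, which is absorbed into $c_0$. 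The remaining constants come out of the $\alpha$-optimization. Specifically, $c_1$ collects $\sqrt{2\ln 2}\,(\log(1+2d_A)+\sqrt{\mathrm{Var}(f)})\sqrt{1-2\log\eps}$-type terms, and $c_0$ collects second-order terms in $\max f-\min f$ and $\log(1/\eps)$. These are recorded in Appendix~\ref{app:geat}.

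The main obstacle is not the formal accumulation but constructing a valid $f$. This requires a per-round entropy bound $H(A\mid E)\ge F(\omega)$ for $\pMSG_N$ that is uniform in $N$, with $F$ convex so that its tangent at $\omega_{\mathrm{tol}}$ gives an affine $f$. To get it, I would first try the reduction to $\MSG$ (Lemma~\ref{lem:pmsg-Nto2-reduction}). The argument would go through the proof of Theorem~\ref{thm:parityG-values}: decompose over the parity outcome $p$, note that each conditional strategy plays $\MSG$ with relabelled Alice POVMs, and observe that Alice's key input is unmasked (Lemma~\ref{lem:mask-props}(ii)). The reduction then follows by convexity of the two-party bound, which is obtained from an NPA/Brown--Fawzi--Fawzi SDP for $\MSG$.
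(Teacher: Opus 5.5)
Your proposal agrees with the paper. The paper gives no proof of this theorem; it imports it, with the constants listed in Appendix~\ref{app:geat}, from \cite{metger2022generalised}. You are also right that the bound $nh-c_0-c_1\sqrt{n}$ must come from the event-conditioned version for $\rho_{|\Omega}$ with an affine min-tradeoff function, not from continuing the chain after the first inequality, whose unconstrained infimum can vanish in the device-independent setting.

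Two details of your sketch are off, though neither is load-bearing. First, the GEAT chain rule bounds the new round by $H_{1/(2-\alpha)}$ rather than $H_\alpha$; this is harmless, since $\tfrac{1}{2-\alpha}-1=\Theta(\alpha-1)$. Second, the conditioning cost enters as $\tfrac{\alpha}{\alpha-1}\log(1/\Pr[\Omega])$, so with $\alpha-1\sim 1/\sqrt{n}$ it feeds into $c_1$, as in the appendix, rather than being a constant absorbed into $c_0$. The appendix also uses $V=\log(2d_A^2+1)+\sqrt{2+\operatorname{Var}(f)}$ in place of your $\log(1+2d_A)$.
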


We first define the channels in Protocol \hyperref[pr:one]{1} which satisfy the non-signaling condition required by GEAT, and give a short proof of this claim.

\begin{lemma}[Non-signaling condition]\label{lemma:non_signaling}
    Consider the following CPTP map in a general round $i$: $\{\mathcal{M}_i: R_{i-1}E_{i-1} \rightarrow K_{A,i}K_{B,i,j}C_iR_iE_i\}$ of Protocol 1, where $K_{A,i}$ is the set of Alice's two key bits in the $i-$th round, $R=R_AR_{B_1}...R_{B_{N-1}}$ are the internal registers of Alice and Bobs' devices, $C_i$ is the classical information computed by Alice and Bob (e.g., for privacy amplification), and $E_i=X^i_1Y^{i,1}_1...Y^{i,N-1}_1T^i_1Q_E$ is Eve's side information for that round. The state $\omega_{RE_{i-1}}^{i-1}$ is given as input to the map $\mathcal{M}_i$. Then, there exist corresponding reduced channels $\{\mathcal{R}_i : E_{i-1} \to E_i\}$ such that $\operatorname{Tr}_{A_i R_i} \circ \mathcal{M}_i = \mathcal{R}_i \circ \operatorname{Tr}_{R_{i-1}}$.
\end{lemma}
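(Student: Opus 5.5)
The plan is to construct $\mathcal{R}_i$ explicitly by following what happens to Eve's registers in round $i$. First, decompose $\mathcal{M}_i$ into a composition of simpler maps.
\begin{enumerate}
\item[(a)] The public randomness generation samples the round type $T_i\sim\mathrm{Bernoulli}(\gamma)$ and, on test rounds, the inputs $x_i,y_i$. It also sets the parity Bobs' fixed setting. These are written into classical registers.
\item[(b)] The devices act. Each device applies an instrument to its internal memory $R_{P,i-1}$, controlled by its own classical input. This produces the outputs and the updated memory $R_{P,i}$.
\item[(c)] Alice and Bob classically post-process the outputs into $K_{A,i}$, $K_{B,i,j}$ and $C_i$.
\item[(d)] The public announcements $X_i,Y_i^{(1)},\dots,Y_i^{(N-1)},T_i$ are appended to $E_{i-1}$ to form $E_i$.
\end{enumerate}
Eve's quantum register $Q_E$ is untouched during the round. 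In the sequential model of GEAT, Eve's quantum side information is fixed at the start, as $E_0$, and the devices cannot access the classical channel between parties. We will state this explicitly as a modelling assumption, in line with the setup in Section~\ref{sec:protocol}.

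Next, we check what $\operatorname{Tr}_{A_iR_i}\circ\mathcal{M}_i$ does when $A_i=K_{A,i}K_{B,i,j}C_i$. Tracing out $R_i$ removes the device memories. Tracing out the keys and $C_i$ then removes everything that depended on the devices' quantum operations, except possibly the announced registers in $E_i$. The crucial observation is that the announced registers $X_i$, $Y_i^{(j)}$, $T_i$ are the referee's (trusted) inputs, not device outputs. They are sampled from a fixed distribution, $\gamma$ for $T_i$ and uniform rows and columns on test rounds, that does not depend on $R_{i-1}$ or $E_{i-1}$. Since the instruments in (b) are trace-preserving, tracing out their outputs and post-measurement memory leaves only the input distribution. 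We therefore define
\[
\mathcal{R}_i(\sigma_{E_{i-1}}) := \sigma_{E_{i-1}}\otimes \sum_{t,x,\vec y}\Pr[T_i=t,X_i=x,\vec Y_i=\vec y]\,\proj{t,x,\vec y},
\]
that is, $\mathcal{R}_i$ appends fresh independent classical registers. We then verify $\operatorname{Tr}_{A_iR_i}\circ\mathcal{M}_i=\mathcal{R}_i\circ\operatorname{Tr}_{R_{i-1}}$ on product inputs, and extend by linearity to all inputs.

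The main obstacle is the bookkeeping of which classical data land in $E_i$ versus $A_i$ or $C_i$. If any announced quantity depends on device outputs, the naive $\mathcal{R}_i$ fails, because the output distribution depends on $R_{i-1}$. Two cases need care. The first is test-round outputs broadcast in parameter estimation; the second is error-correction syndromes. The plan is to keep all device outputs, including the test-round scores, inside $C_i$ or $A_i$ in the per-round map. Public leakage from them is then accounted for separately: the score enters through the min-tradeoff function, and the syndrome through the $\leakEC$ term subtracted after GEAT via the chain rule for smooth min-entropy. With that convention only input registers enter $E_i$, and the verification above goes through.
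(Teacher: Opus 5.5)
Your proposal is correct and takes essentially the same approach as the paper. In both, $\mathcal{R}_i$ freshly samples the public inputs $T_i, X_i, Y_i^{1},\dots,Y_i^{N-1}$ from the protocol's fixed distribution and appends them to $E_{i-1}$; the identity $\operatorname{Tr}_{A_iR_i}\circ\mathcal{M}_i=\mathcal{R}_i\circ\operatorname{Tr}_{R_{i-1}}$ then follows because the devices' operations are trace-preserving and no information flows from the devices to Eve during the round. Your write-up is more explicit than the paper's, which also lets $\mathcal{R}_i$ simulate Eve's own per-round operations, a harmless generalisation of your static-$Q_E$ assumption; and your observation that device outputs must be kept out of $E_i$ is exactly what the lemma's definition of $E_i$ builds in.
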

\begin{proof}
Consider the channel $\mathcal{R}_i : E_{i-1} \to E_i$ constructed as follows, following the steps of protocol. So, given the register $E_{i-1}$, we generate the register $E_i$ by sampling $T_i,X_i,Y_i^1,...,Y_i^{N-1}$ as Alice and Bobs do in Protocol 1 at the start of the round, and then simulating Eve's operations on her part of the state. Clearly, the distribution on these samples is identical to what is produced by $\mathcal{R}$. Since we assumed the devices of Alice and Bob cannot communicate with Eve, the simulated output marginal of $\mathcal{M}_i$ will be independent of $R_{i-1}$. Thus, if we trace out the $R_{i-1}$ register, the action of $\mathcal{R}_i$ is identical to the expected action when the channel $\mathcal{M}_i$ is applied to the state $\omega_{RE_{i-1}}^{i-1}$. Thus, $\operatorname{Tr}_{A_i R_i} \circ \mathcal{M}_i = \mathcal{R}_i \circ \operatorname{Tr}_{R_{i-1}}$. 
\end{proof}

In the absence of a closed-form analytical \emph{min-tradeoff} function for even the 2-player Magic Square Game, we resort to obtaining a lower bound on the per round conditional von Neumann entropy $H(A|X=0,E)$ of Protocol 1 via numerical methods. We use the techniques of \cite{brown2024device} to formulate a constrained minimization problem, which can be solved using the NPA hierarchy method \cite{navascues2007bounding}. We first reduce the entropy lower bound of $\pMSG_N$ to $\pMSG_2=\MSG$ to simplify our problem as follows, and then solve for the case of $\MSG$ as a semi-definite program. 

Since this is perhaps the \emph{key component} of our security analysis, we state and prove the following lemma in full generality for Parity-$G$ games as defined in Section \ref{sec:parityG}; setting $G=\MSG$ would then follow as a special case.

\begin{lemma}[Entropy Lower Bound Reduction from the $N$-player game $\pG{G}_N$ to the two-player game $G$]
\label{lem:pmsg-Nto2-reduction}
Let $N\geq2$ and $x^\ast\in\cX$, where the latter is Alice's input set with $x^\ast$ being her key input. Let $A$ denote Alice's $k$-bit
output, and $E$ denote Eve's quantum system. Let us assume there exists a convex function $f_{G}:[0,1]\to\mathbb R$ such that the per round conditional von Neumann entropy is lower bounded as
\begin{equation}
 H(A\mid X=x^\ast, E)_{G} \geq f_{G}(\omega)
 \label{eq:msg-entropy-tradeoff}
\end{equation}
over all possible quantum strategies by Eve, for the ordinary two-player $G$,
where $\omega$ is its winning probability for $G$ and $E$ is arbitrary quantum side information. Then,
\begin{equation}
 H(A\mid X=x^\ast,E)_{\pG{G}_N} \geq f_{G}(\omega') \,,
 \label{eq:pmsg-from-msg-entropy-tradeoff}
\end{equation}
where $\omega'$ is the winning probability for $\pG{G}_N$.
\end{lemma}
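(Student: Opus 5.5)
We follow the decomposition used in the quantum upper bound of Theorem~\ref{thm:parityG-values}. Fix an arbitrary strategy for $\pG{G}_N$: a state $\rho$ on $Q_A Q_{B_1} Q_{\mathrm{par}} E$ (with $E$ purifying), POVMs $\{M^x_a\}$ for Alice, $\{N^y_b\}$ for $B_1$, and input-free POVMs for the parity Bobs. Since the parity Bobs receive no input, we may imagine them measuring first. This does not change the joint distribution of Alice's output with $E$, because their measurement commutes with everyone else's operations. The resulting parity $p$ occurs with probability $q_p$ and leaves a conditional state $\rho^{(p)}_{Q_AQ_{B_1}E}$. As in the proof of Theorem~\ref{thm:parityG-values}, we then have $\omega'=\sum_p q_p\,\omega_p$, where $\omega_p$ is the winning probability in $G$ of the relabelled strategy $\mathcal T_p$. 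This strategy uses the state $\rho^{(p)}$, Alice's POVMs $M'^{x}_{a}=M^{x}_{a\oplus\sgn(x,p)}$, and Bob's unchanged POVMs.

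The first key step is to observe that on the key input the relabelling is trivial. By Lemma~\ref{lem:mask-props}(ii), $\sgn(x^\ast,p)=0$, so $M'^{x^\ast}_a=M^{x^\ast}_a$. Hence Alice's key output $A$ under $\mathcal T_p$, jointly with Eve's system, is exactly the cq-state obtained in $\pG{G}_N$ conditioned on parity $p$. The second key step is to give Eve the parity register. Let $P$ denote a classical register holding $p$, which Eve could obtain from a copy of the parity Bobs' outcomes; more precisely, $E$ together with a purification of $Q_{\mathrm{par}}$ can be kept as side information. Conditioning on more side information only lowers entropy. The parity Bobs' outcomes are also available to Eve as public transcript only in test rounds, but in any case strong subadditivity gives
\[
H(A\mid X=x^\ast,E)_{\pG{G}_N}\;\ge\; H(A\mid X=x^\ast,E,P)\;=\;\sum_p q_p\,H(A\mid X=x^\ast,E)_{\rho^{(p)}} .
\]
If the $E$ in the statement is meant to already include the parity Bobs' systems, the first inequality is simply an equality of registers. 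Each term $H(A\mid X=x^\ast,E)_{\rho^{(p)}}$ is the key-input entropy of a legitimate quantum strategy $\mathcal T_p$ for the two-player game $G$ against side information $E$. Hypothesis \eqref{eq:msg-entropy-tradeoff} therefore gives $H(A\mid X=x^\ast,E)_{\rho^{(p)}}\ge f_G(\omega_p)$.

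Finally, convexity of $f_G$ and Jensen's inequality give
\[
\sum_p q_p f_G(\omega_p)\ \ge\ f_G\Big(\sum_p q_p\omega_p\Big)=f_G(\omega').
\]
Chaining these inequalities yields \eqref{eq:pmsg-from-msg-entropy-tradeoff}.

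The main obstacle we anticipate is the bookkeeping of side information. We must justify that Eve's system in the $N$-party game can be taken to include the parity outcome $p$ without loss. We must also justify that, conditioned on $p$, the post-measurement state $\rho^{(p)}$ on $Q_AQ_{B_1}E$ is a valid input to the two-party bound, which requires an arbitrary state with arbitrary quantum side information. The latter holds because \eqref{eq:msg-entropy-tradeoff} is assumed over all strategies and all $E$, including $E$ extended by the parity Bobs' post-measurement residue. We would handle the former by conditioning on the classical $P$ and using the decomposition $H(A|EP)=\sum_p q_p H(A|E,P=p)$. If the intended $E$ excludes $P$, the monotonicity $H(A|E)\ge H(A|EP)$ covers it. A minor point is that $q_p=0$ terms are dropped, and that in $\pG{G}_N$ the input $x^\ast$ is chosen independently of $p$, so conditioning on $X=x^\ast$ commutes with the decomposition over $p$.
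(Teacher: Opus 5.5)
Your proof is correct and follows the paper's argument: condition on the parity Bobs' outcome, use Lemma~\ref{lem:mask-props}(ii) to see that the relabelled two-player strategy leaves Alice's key-input POVM unchanged, apply the two-player bound branch by branch, and finish with Jensen's inequality using $\omega'=\sum_p q_p\omega_p$. The differences are minor. You condition on the parity $p$ rather than on the full outcome string $c$. You also justify the side-information step by the monotonicity $H(A\mid E)\ge H(A\mid EP)$, whereas the paper simply takes the parity Bobs' outcomes to be part of Eve's system ($E=\tilde E C$); your version therefore holds whether or not $E$ already contains them.
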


\begin{proof}
Consider a strategy $\mathcal S$ for $\pG{G}_N$. Let the output distribution of the parity-Bobs $B_2...B_{N-1}$ be described by the random variable $C$. Since the parity-Bobs receive no input, and by the non-signaling assumption, their output distribution $q_c \coloneq \Pr[C=c]$ is independent of Alice's and $B_1$'s inputs. Indeed, if $M_{a|x}$, $N_{b|y}$, and $D_{c}$ are the corresponding measurement operators, then
\begin{align}
 \Pr[C = c\mid X=x,Y=y]
 &=\sum_{a,b}\tr\!\left[\rho\,
     M_{a|x}\otimes N_{b|y}\otimes D_{c}\right]\\
 &=\tr\!\left[\rho\,
     \id\otimes\id\otimes D_{c}\right] =q_{c}. \label{eq:pmsg-transcript-independent}
\end{align}

For any output $c$ sampled from $C$, let $p=p(c)$ be the corresponding parity string output. Conditioned on this, Alice and Bob $B_1$ reduce to playing $G$, by the proof of Theorem~\ref{thm:parityG-values}. In particular, Alice and $B_1$ can play a strategy for $G$ with Alice's relabeled POVMs $M^{'x}_{a}=M^{x}_{a\oplus\sgn(x,p(c))}$. Hence the conditional, relabeled strategy $\mathcal S \mid p(c)$ is equivalent to a $G$ strategy. Let $\omega_{c}$ denote its winning probability. By \eqref{eq:pmsg-transcript-independent} and the definition of independent events, the conditional distribution of inputs $(X,Y)$ over all $c$ also remains the same. Therefore,
\begin{equation}
 \omega'_{\pG{G}_N}(\mathcal S) = \sum_c q_c\omega_c \,,
 \label{eq:pmsg-score-mixture}
\end{equation}
where $\omega'_{\pG{G}_N}(\mathcal S)$ is the winning probability of strategy $\mathcal S$ for $\pG{G}_N$. 

Now, by Lemma~\ref{lem:mask-props} (ii), $M'^{x^*}=M^{x^*}$. By definition of $f_G$, this implies $H(A\mid E,X=x^\ast,C=c) \ge f_G(\omega_c)$. So, applying \eqref{eq:msg-entropy-tradeoff} in each branch gives the following lower bound for the strategy $\mathcal S$, splitting $E = \Tilde{E}C$,
\begin{align}
 H(A\mid E, X=x^\ast) &=\sum_{c}q_{c} H(A\mid \Tilde{E},X=x^\ast,C=c)\\
 &\geq\sum_{c}q_{c}
   f_{G}(\omega_{c})\\
 &\geq f_{G}\!\left(
   \sum_{c}q_{c}\omega_{c}\right) \label{eq:jensen} \\
 &=f_{G}\!\left(\omega'_{\pG{G}_N}(\mathcal S)\right),
\end{align}
where \eqref{eq:jensen} uses Jensen's inequality and the last line follows from \eqref{eq:pmsg-score-mixture}. Since we assumed $\mathcal S$ was arbitrarily chosen, the lemma follows.
\end{proof}

\begin{lemma}[Conditional von Neumann entropy lower bound]\label{lemma:cvn-entropy}
    The conditional von Neumann entropy of Alice's classical key bitstring $H(A|X=0,Q_E) \ge \inf\limits_{\omega}  H\!\left( A_i \mid E_i \widetilde{E}_{i-1} \right)_{ \mathcal{M}_i(\omega) } \ge f(\omega) = 33.4\omega -31.47$, imposing the winning probability threshold $\omega \ge \omega_{\mathrm{tol}}$.
\end{lemma}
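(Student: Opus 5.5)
This bound is obtained numerically, so the plan has four parts: reduce to the two-player game, solve a two-player SDP, take an affine (tangent) bound, and identify the per-round quantity the GEAT needs.

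\emph{Reduction to $\MSG$.} By Lemma~\ref{lem:pmsg-Nto2-reduction} with $G=\MSG$ and $x^\ast=0$, it suffices to find a convex $f_{\MSG}$ with $H(A\mid X=0,E)_{\MSG}\ge f_{\MSG}(\omega)$ for every two-player strategy winning $\MSG$ with probability $\omega$. The lemma then gives $H(A\mid X=0,E)_{\pMSG_N}\ge f_{\MSG}(\omega')$ for every $N$. An affine $f$ is convex, so the same $f$ transfers to $\pMSG_N$.

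\emph{Numerical bound for $\MSG$.} Alice's key output on row $0$ is her two bits $(a_0,a_1)$, with $a_2$ fixed by parity. I would lower bound $H(A\mid X=0,E)$ with the Brown--Fawzi--Fawzi method \cite{brown2024device}. This writes the von Neumann entropy as a limit of a Gauss--Radau quadrature sum. Each term is an infimum over operators $Z_a$ on Eve's side of a noncommutative polynomial objective in Alice's row-$0$ projectors $\Pi^0_a$ and the $Z_a$. The constraint is
\[
\tfrac19\sum_{x,y}\sum_{a,b}V_{\MSG}(x,y,a,b)\,\langle M^x_a\otimes N^y_b\rangle\ \ge\ \omega .
\]
Each quadrature term is relaxed with the NPA hierarchy \cite{navascues2007bounding}, for instance at level $2$ plus selected extra monomials. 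This gives a certified lower bound $g(\omega)$ on a grid of $\omega\in[\omega_{\mathrm{tol}},1]$. The quadrature truncation is itself a valid lower bound, and dual feasibility of the SDP makes each grid value certified.

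\emph{Affine tangent.} I would take $f(\omega)=33.4\,\omega-31.47$ from the SDP dual at a chosen point, or as a line lying below $g$ on the whole relevant range. Because the true bound is convex in $\omega$ (mixing strategies), a tangent line at one point lower bounds it everywhere. A supporting hyperplane from the dual of the relaxed SDP also gives a valid affine bound directly, so convexity of $g$ itself is not needed. This affine $f$ is exactly the min-tradeoff function that Theorem~\ref{thm:geat} requires.

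\emph{Connecting to the GEAT quantity.} By Lemma~\ref{lemma:non_signaling}, the per-round channels satisfy the non-signaling condition. The infimum $\inf_\omega H(A_i\mid E_i\widetilde E_{i-1})_{\mathcal M_i(\omega)}$ ranges over states compatible with the observed score. Since $\widetilde E_{i-1}$ purifies the input, it is general quantum side information, so the single-round bound applies with $E=E_i\widetilde E_{i-1}$. The first inequality holds because the reduced entropy $H(A\mid X=0,Q_E)$ conditions on less side information than the GEAT quantity, and dropping side information cannot decrease entropy.

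\emph{Main obstacle.} The hardest step is getting the SDP bound tight and rigorous. The NPA level must be high enough with two-bit outcomes and nine settings per party, and numerical solver output must be converted into a certified dual bound, with all constants rounded downward. A second difficulty is that on the key round, $A$ must be given the correct conditioning on the round type. I would handle this by fixing $X=0$ and using that the device cannot distinguish key rounds from row-$0$ test rounds.
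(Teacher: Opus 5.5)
Your proposal follows the paper's proof. The paper likewise obtains $f(\omega)=33.4\,\omega-31.47$ from a feasible dual solution of the Brown--Fawzi--Fawzi Gauss--Radau/NPA SDP for the two-player $\MSG$, with the winning probability constrained to be at least $\omega_{\mathrm{tol}}$, and then transfers this bound to $\pMSG_N$ via Lemma~\ref{lem:pmsg-Nto2-reduction} since an affine function is convex. Your extra remarks only make explicit steps the paper leaves implicit: that dropping side information gives the first inequality, and that a dual certificate yields an affine bound valid for every $\omega$.
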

\begin{proof}
Following the formalism in \cite{brown2024device}, we can express a lower bound as follows.
\begin{lemma}[\cite{brown2024device}]
    Let $m\in\mathbb N$, and let $t_1,\dots,t_m$ and $w_1,\dots,w_m$ be the nodes and weights of an $m$-point Gauss--Radau quadrature on $[0,1]$ with endpoint $t_m=1$. Let $\rho_{Q_AQ_BQ_E}$ be the initial quantum state shared between Alice, Bob, and Eve's devices, and let $\{M_{a|x^\ast}\}_a$ denote the measurement operators performed by Alice's device in response to the input $X=x^\ast$. Furthermore, for $i=1,\dots,m-1$, let $\alpha_i:=\frac{3}{2}\max\left\{\frac{1}{t_i},\frac{1}{1-t_i}\right\}$. Then $H(A|X=x^\ast,Q_E)$ is never smaller than \begin{equation} \label{eq:gauss-radau-entropy-bound} \begin{aligned} c_m + \sum_{i=1}^{m-1} \frac{w_i}{t_i\ln 2} \sum_a \inf_{\substack{ Z_a\in\mathcal B(Q_E)\\ \|Z_a\|\le \alpha_i }} \operatorname{Tr}\!\Big[ \rho_{Q_AQ_E} \Big( & M_{a|x^\ast}\otimes \big( Z_a+Z_a^\ast+(1-t_i)Z_a^\ast Z_a \big) \\ & + t_i \big( I_{Q_A}\otimes Z_aZ_a^\ast \big) \Big) \Big], \end{aligned} \end{equation} where $c_m=\sum_{i=1}^{m-1}\frac{w_i}{t_i\ln 2}$. Moreover, these lower bounds converge to $H(A|X=0,Q_E)$ as $m\to\infty$.
\end{lemma}
We also require the following additional constraint which describes the winning condition for the magic square with the winning probability threshold $\omega_\mathrm{tol}$.
\begin{align}
    \omega(\rho_{Q_AQ_BQ_E},M_{a|x^\ast},N_{b|y^\ast}) &\ge \omega_\mathrm{tol} \,,
\end{align}
Formulating and solving the resultant SDP via the NPA hierarchy techniques \cite{navascues2007bounding}, we get a lower bound function from a feasible solution to the dual SDP as prescribed by \cite{brown2024device}, as follows.
\begin{equation}
  f(\omega) = 33.4\omega -31.47 \,.
\end{equation}
This solution comes from solving the SDP for the $\MSG$ game\footnote{all code used for solving SDPs, and plotting key rates based on this in Section \ref{sec:key-rates}, can be found at \cite{MagicSquareCVN}}, but by Lemma $\ref{lem:pmsg-Nto2-reduction}$, we know this is also a valid lower bound for the $\pMSG_N$ game, for any $N\ge2$, since $f$ is clearly a convex -- indeed affine -- function.

\end{proof}

\noindent We now combine these results and are ready to prove our main theorem. 

\begin{theorem}[Secrecy of Protocol 1]\label{thm:secrecy}
    Conditioned on not aborting, for every $\eps>0$, Protocol 1 produces $\ell$ bits of secret key and is $\eps_{\mathrm{sec}}$-secret with
    \begin{equation}
        \eps_{\mathrm{sec}} \le \eps + \frac12 \cdot 2^{ -\frac12 \left(nh - c_0 - c_1\sqrt{n} - \mathrm{leak}_\eps - \ell \right) } 
    \end{equation}
    where $h=\min_{c^n \in \Omega} f\!\left(\operatorname{freq}(c^n)\right)$ corresponding to the affine min-tradeoff function $f(\omega) = 33.4\omega -31.47$, $\gamma$ is the probability of a test round, and the constants $c_0,c_1$ are as defined in Theorem \ref{thm:geat}.
\end{theorem}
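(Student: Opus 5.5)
The plan is to chain the Leftover Hash Lemma (Theorem~\ref{thm:leftoverhash}) with a smooth min-entropy lower bound obtained from the GEAT (Theorem~\ref{thm:geat}). The bound is then corrected for the information revealed during error correction.

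\textbf{Step 1: Apply the Leftover Hash Lemma.} Conditioned on the non-abort event $\Omega$, I apply Theorem~\ref{thm:leftoverhash} to the classical-quantum state of Alice's raw key. The side information is $E$ together with the full public transcript: round types, test inputs and outputs, and the error-correction syndrome and hash. This gives
\[
\eps_{\mathrm{sec}} \le \eps + \tfrac12\cdot 2^{-\frac12\left(H_{\min}^{\eps}(A_1^n\mid X=0, E\,\mathrm{EC})_{\rho\mid\Omega}-\ell\right)},
\]
so it suffices to lower bound this smooth min-entropy by $nh-c_0-c_1\sqrt n-\mathrm{leak}_\eps$.

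\textbf{Step 2: Remove the error-correction leakage.} The syndrome and the correctness hash are classical registers of total length $\mathrm{leak}_\eps$, meaning $\leakEC+t$. By the chain rule for smooth min-entropy with a classical register of bounded dimension,
\[
H_{\min}^{\eps}(A_1^n\mid E\,\mathrm{EC}) \ge H_{\min}^{\eps}(A_1^n\mid E)-\mathrm{leak}_\eps .
\]

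\textbf{Step 3: Accumulate entropy with the GEAT.} I invoke Theorem~\ref{thm:geat} with the round channels $\mathcal M_i$ of Lemma~\ref{lemma:non_signaling}; that lemma supplies the required non-signalling condition. The per-round entropy $\inf_\omega H(A_i\mid E_i\widetilde E_{i-1})$ is lower bounded by the affine min-tradeoff function $f(\omega)=33.4\omega-31.47$ from Lemma~\ref{lemma:cvn-entropy}. That lemma was derived for $\MSG$ and transferred to $\pMSG_N$ for all $N$ through Lemma~\ref{lem:pmsg-Nto2-reduction}, which applies because $f$ is affine and hence convex. The GEAT with this tradeoff function gives a bound on the state conditioned on $\Omega$ of the form
\[
H_{\min}^{\eps}(A_1^n\mid E)_{\rho\mid\Omega}\ge n\min_{c^n\in\Omega}f(\mathrm{freq}(c^n))-c_0-c_1\sqrt n = nh-c_0-c_1\sqrt n ,
\]
with the constants taken from Appendix~\ref{app:geat}. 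These constants absorb $\gamma$, the range and variance of $f$, and the $\log(1/\Pr[\Omega])$ penalty from conditioning. Combining Steps 1--3 yields the stated bound.

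\textbf{Main obstacle.} The delicate step is the conditioning on $\Omega$ together with the use of the frequency-dependent tradeoff function. The affine $f$ must be extended into a proper GEAT min-tradeoff function on the test/key statistics: key rounds record $\perp$, and test rounds are scaled by $1/\gamma$. Non-abort must then force $\mathrm{freq}(c^n)$ to lie in the region where $f\ge h$. In addition, $\Pr[\Omega]$ enters the constants $c_0,c_1$. I would handle all of this by quoting the GEAT statement with event conditioning directly, rather than re-deriving it. A secondary point is that Eve's registers in Lemma~\ref{lemma:non_signaling} must include all public round information, so that the per-round infimum in Lemma~\ref{lemma:cvn-entropy} is taken over the correct systems.
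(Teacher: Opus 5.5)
Your proposal is correct and follows the same route as the paper: the GEAT, fed by the non-signalling Lemma~\ref{lemma:non_signaling} and the affine tradeoff $f(\omega)=33.4\omega-31.47$ (transferred to $\pMSG_N$ via Lemma~\ref{lem:pmsg-Nto2-reduction}), gives $H_{\min}^{\eps}(A^n\mid E_n)\ge nh-c_0-c_1\sqrt n$. The leakage is then subtracted, and the Leftover Hash Lemma converts the result into the stated bound on $\eps_{\mathrm{sec}}$. You are more explicit than the paper about two points it simply absorbs into $\mathrm{leak}_\eps$ and into the constants $c_0,c_1$: the chain rule for the classical error-correction leakage, and the event-conditioning subtleties (how $f$ is extended to a proper min-tradeoff function, and how $\Pr[\Omega]$ enters the constants).
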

\begin{proof}
        By Theorem \ref{thm:geat} and Lemma \ref{lemma:cvn-entropy}, since we satisfy Lemma \ref{lemma:non_signaling}, we can lower bound the smooth min-entropy of Alice's key bitstring $A^n$ before error correction as
    \begin{equation}
        H^{\eps}_{\min}(A^n|E_n) \ge nh - c_0 - c_1\sqrt{n} \,.
    \end{equation}
    Let the additional leakage in the broadcast, error correction and privacy amplification steps be denoted by $\mathrm{leak}_\eps$. The exact value of this depends on implementation details and will be negligible asymptotically. Applying Theorem \ref{thm:leftoverhash}, with this lower bound on $H^{\eps}_{\min}(A^n|E_n)$ and leakage proves our desired result.
\end{proof}

\begin{corollary}[Security of Protocol 1]\label{thm:security}
    Conditioned on not aborting, for every $\eps'>0$, Protocol 1 produces $\ell$ bits of secret key and is $\eps$-secure with 
    \begin{equation}
        \eps \le (N-1)(\eps_{EC}+\eps'_{EC}) + \left(1-\gamma\left(1-e^{-2\delta^2}\right)\right)^n + \eps' + \frac12 \cdot 2^{ -\frac12 \left(n \cdot f(\omega) - O(\sqrt{n}) - \mathrm{leak}_\eps  - \ell \right)} 
    \end{equation}
\end{corollary}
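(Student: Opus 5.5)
The corollary follows by combining correctness and secrecy additively, since Definition~\ref{def:cka} sets $\eps=\eps_{\mathrm{cor}}+\eps_{\mathrm{sec}}$.

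First, I take the correctness bound of Theorem~\ref{thm:correctness} verbatim. This requires $\delta\in[0,1/9)$, which is consistent with $\omega_{\mathrm{tol}}=\omega(v)-\delta>8/9$ in the regime where $f(\omega)>0$. It supplies the terms $(N-1)(\eps_{EC}+\eps'_{EC})$ and $\left(1-\gamma\left(1-e^{-2\delta^2}\right)\right)^n$.

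Second, I invoke Theorem~\ref{thm:secrecy} with smoothing parameter $\eps'$ in place of $\eps$. This gives
\[
\eps_{\mathrm{sec}}\le\eps'+\tfrac12\cdot 2^{-\frac12(nh-c_0-c_1\sqrt n-\mathrm{leak}_\eps-\ell)}.
\]

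The one point needing care is rewriting $nh-c_0-c_1\sqrt n$ as $n f(\omega)-O(\sqrt n)$. The function $f(\omega)=33.4\omega-31.47$ is affine and increasing, and on the non-abort event every accepted frequency satisfies $\operatorname{freq}(c^n)\ge\omega_{\mathrm{tol}}$. Hence
\[
h=\min_{c^n\in\Omega}f(\operatorname{freq}(c^n))\ge f(\omega_{\mathrm{tol}}).
\]
Here I read the $\omega$ in the statement as the tolerated score $\omega_{\mathrm{tol}}$; any $\omega\le$ the accepted frequency works. The constant $c_0$ is $O(1)\subseteq O(\sqrt n)$, so $c_0+c_1\sqrt n$ is absorbed into $O(\sqrt n)$. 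Since the exponent term is monotone decreasing in the entropy bound, substituting this lower bound only increases the bound on $\eps_{\mathrm{sec}}$, so the inequality remains valid.

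Summing the two bounds gives the stated $\eps$. I expect the main obstacle to be this bookkeeping of $h$ against $f(\omega)$, together with making sure $\Omega$ is the same non-abort event in both theorems. The secrecy bound is conditioned on $\Omega$, while correctness bounds a joint probability. Definition~\ref{def:cka} already defines secrecy conditioned on no abort, so the two bounds combine without any further rescaling by $\Pr[\Omega]$.
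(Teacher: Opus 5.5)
Your proposal is correct and matches the paper's proof, which likewise just adds the bound of Theorem~\ref{thm:correctness} to that of Theorem~\ref{thm:secrecy} (with smoothing parameter $\eps'$) via $\eps=\eps_{\mathrm{cor}}+\eps_{\mathrm{sec}}$ from Definition~\ref{def:cka}. Your extra bookkeeping, namely $h\ge f(\omega_{\mathrm{tol}})$ by monotonicity of the affine $f$ and absorbing $c_0+c_1\sqrt n$ into $O(\sqrt n)$, makes explicit a step the paper leaves implicit.
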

\begin{proof}
    This follows from Definition \ref{def:cka}; and the statements of Theorem \ref{thm:correctness} and Theorem \ref{thm:secrecy}.
\end{proof}

%% file: noisy-key-rates.tex
\section{Asymptotic key rate analysis with depolarizing noise}\label{sec:key-rates}
The winning probability threshold we choose in an implementation of $\pMSG_N$ should inherently be a function of the noise we expect in the channels. If we expect our channels to be perfectly noiseless, then we should set our winning probability threshold to be perfectly $\omega(v)=1$. In this case, losing any single game would mean that we detect an eavesdropper Eve, and abort the protocol. This implies that in the asymptotic regime, we get a guarantee that our key is perfectly secure with high probability, by monogamy of entanglement and self-testing property of the magic square game. Now, we define the key rate of a protocol as
\begin{equation}
    r \coloneq \frac{\ell}{n} = (1-\gamma)[H(A |X=0,T=1,E) - H(A|X=0,T=1,B)]
\end{equation}
where $\ell$ is the number of bits of the final secret key produced by the protocol, $n$ is the number of rounds, $\gamma$ is the test round probability, $H(A|X=0,T=1,E)$ is the per round conditional von Neumann entropy of Alice's raw key in a test round given by $T_i=1$ (so Alice's input $X=0$), conditioned on Eve's side information, and $H(A|X=0,T=1,B)$ is the conditional entropy of Alice's final key (after error correction and privacy amplification) with respect to the final key of any Bob $B_i$. For this calculation, we assume that all the outputs generated in the test rounds are published publicly and used for parameter estimation, so the players get secret bits of key as a fraction of the key rounds only.

For calculating key rates, we consider an \emph{honest} implementation of the protocol, which means that the states prepared and measurements performed by the devices are honest, \emph{up to some noise threshold}. This means that the adversary Eve can adversarially prepare the state, and try to guess the key, but any measurement outcomes the dishonest devices report should be indistinguishable from the honest measurement outcome distribution on a noisy state. This is an upper bound on the number of bits of key we can get in a fully adversarial setting with noisy \emph{and} dishonest measurements, and it is a simplified scenario which makes it amenable to comparing different protocols. 

In particular, assuming absence of noise, setting $\omega_{\mathrm{tol}}=1$, $\gamma=0$, and having shared states which are honestly prepared, the key rate for $\pMSG_N$ is 2. This is maximal since at most 2 bits of key can be produced per round. More generally, any quantum key distribution or conference key agreement protocol, whose security relies on the monogamy of entanglement, achieves maximal key rate in the absence of noise. 

But since, realistically, our channels are not noiseless, we must set a winning probability threshold which is below 1. Unfortunately, this also results in key rates which are lower. However, we can still get good lower bounds on the key rate as a function of the noise, by lower bounding $\omega_{\mathrm{tol}}$ as a function of the noise and applying Lemma \ref{lemma:cvn-entropy} to get the corresponding number of bits per round. We first consider a global depolarizing noise channel when preparing the state, which is a standard assumption for a noise model in our setting, and also subsumes noisy measurements made by the players. We note that our techniques extend to any general noise model beyond depolarizing. The state $\rho$ that all the players will share now is 
\begin{equation}
    \rho = (1-q)\ket{\psi}\bra{\psi} + q\frac{\id_{4^N}}{4^N} \,,
\end{equation}
where $q \in [0,1]$ is the noise parameter and $\ket{\psi}=\ket{GHZ_N}^{\otimes 2}$.  Note that the quantum bit error rate (QBER) $Q=\Pr( A\ne B_i)$, which is the probability that Alice and some Bob $B_i$'s bits disagree on average over their 2 bits, is $Q=q/2$. Now, the expected probability $p_{\mathrm{exp}}$ of them winning $\pMSG_N$ is $1-q/2$, since with probability $1-q$, the state is correct and with probability $q$, both the outputs of Alice and Bob $B_1$, and the parity bits, are all random. So, we set $\omega_{\mathrm{tol}} = p_{\mathrm{exp}} = 1-q/2=1-Q$ and compute the \emph{asymptotic} key rate $r_\infty$ of $\pMSG_N$ by defining $r_\infty \coloneq \lim_{n\rightarrow\infty} \frac{l}{n}$. So,
\begin{equation}\label{eq:msg-global-lower-bound}
    r_{\infty}(\pMSG_N) = (1-\gamma)[f(p_{\mathrm{exp}}) - h(Q)] = (1-\gamma)[f(1-Q) - h(Q)]
\end{equation}
where $f$ was defined in Lemma \ref{lemma:cvn-entropy}, and $h$ is the binary entropy function. We will also set $\gamma = O\left(\frac{\log(n)}{n}\right)$, so $\lim_{n\rightarrow\infty} \gamma = 0$. 

\paragraph{Local Depolarizing Noise Channels}
We may also consider a local depolarizing noise model where each \emph{pair of qubits} that any Bob receives \emph{independently} becomes maximally mixed with probability $q$. This model is relevant in the setting where we assume the state is locally prepared by Alice's device, but the major source of noise comes from Alice sending out a pair of qubits from $\ket{GHZ_N}^{\otimes 2}$ to each of the Bobs, so we assume the quantum channel between Alice and any Bob $B_i$ is a depolarizing channel $\mathcal{D}:\mathcal{L}(\mathcal{H})\rightarrow\mathcal{L}(\mathcal{H})$, with $\dim(\mathcal{H})=4$, such that
\begin{equation}
    \mathcal{D}(\rho) = (1-q)\rho + q\frac{\id_4}{4}
\end{equation}
where $q \in [0,1]$. The QBER in this case becomes $Q=q/2$, similarly as before. However, the expected winning probability $p_{\mathrm{exp}}$ changes. If Bob $B_1$'s subsystem becomes maximally mixed, then the probability of winning $\pMSG_N$ becomes $1/2$, but if $B_1$ is noiseless and at least one parity-Bob qubit becomes maximally mixed, there is a $2/3$ chance of winning (if Alice picks the first row as input); in all other cases it is $1/2$ . So, 
\begin{align}
    p_{\mathrm{exp}} &= \Pr[\text{all $B_i$s ideal}]\cdot 1 + \Pr[B_1\text{ noisy}] \cdot \frac12 + \Pr[B_1\text{        ideal }, \exists i: \text{$B_i \ne B_1$ noisy}]\cdot \frac{2}{3}  \\ 
        &= (1-q)^{N-1} + \frac{q}{2} + \frac{2}{3}(1-q)\left(1-(1-q)^{N-2}\right) \\
        &= \frac{1}{2} + \frac{(1-q) + 2(1-q)^{N-1}}{6} \,.
\end{align}

\paragraph{Depolarizing Noise per qubit}
We can further consider a more fine grained noise model where \emph{each qubit} that any Bob receives becomes maximally mixed with probability $q \in [0,1]$. The QBER remains $Q=q/2$, and the expected winning probability $p_{\mathrm{exp}}$ is again calculated through a case analysis as follows.
\begin{align}
    p_{\mathrm{exp}} &= (1-q)^{2N-2} + 2q(1-q)\left[\frac{5}{9}+\frac{(1-q)^{N-2}}{18}\right] + \frac{q^2}{2} + (1-q)^2\biggl[2(1-(1-q)^{N-2})\\&\nonumber\quad (1-q)^{N-2} \cdot \frac{7}{9} + (1-(1-q)^{N-2})^2 \cdot \frac{2}{3} \biggl] \\
        &= \frac{1}{2} + \frac{(1-q)}{9} + \frac{(1-q)^2}{18} + \frac{1}{9}(1-q)^{N-1} + \frac{1}{9}(1-q)^N + \frac{1}{9}(1-q)^{2N-2} \label{eq:pmsg_noise_local}
\end{align}

\subsection{Comparison to the Parity-CHSH game}
We compare the asymptotic key rate of our CKA protocol using $\pMSG_N$ to its precursor, the CKA protocol using Parity-CHSH game as defined in \cite{ribeiro2018fully}, which we denote by $\pCHSH_N$. It is analogously defined, with Alice and Bob $B_1$ playing CHSH games in the test rounds instead of the magic square game. This game only requires 1 GHZ state to be shared between the players per round instead of 2, but it produces at most 1 bit of key per round for this reason as well. We also calculate the key rate for $\pCHSH$ as follows, in the presence of global depolarizing noise. Note that there is no dependence on the number of players $N$, since we are considering global noise on the entire state, regardless of the Hilbert space dimension.
\begin{equation}
    r_{\infty}(\pCHSH_N) = f'\left(\frac{1}{2}+\frac{1-2Q}{2\sqrt{2}}\right) - h(Q)
\end{equation}
where $f'(\omega) = 1-h\left(\frac{1}{2}+\frac{1}{2}\sqrt{(4\omega-2)^2-1}\right)$ asymptotically. We again set $\gamma = O\left(\frac{\log(n)}{n}\right)$, so $\lim_{n\rightarrow\infty} \gamma = 0$, and this term vanishes from the $r_{\infty}$ expression. We explain this comparison in Figure \ref{fig:entropy-global-noise}.

\begin{figure}[h!]
    \centering
    \includegraphics[width=\linewidth]{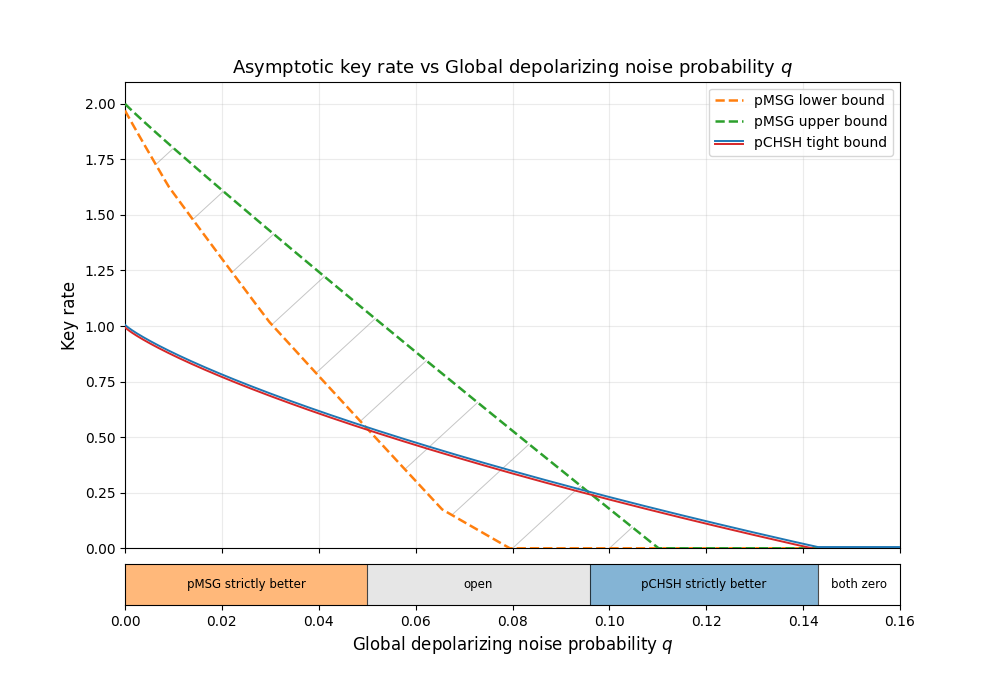}
    \caption{Key rate $r=H(A|X=0,T=1,E) - h(q/2)$ as a function of the \emph{global} depolarizing noise probability $q$. The per round entropy lower bound for $\pMSG$ is computed from Eq. (\ref{eq:msg-global-lower-bound}) and the techniques of Lemma~\ref{lemma:cvn-entropy}. We use a piecewise linear lower bound instead of just a single affine function since we just need a per round bound. The upper bound for $\pMSG$ is calculated from the strategy described in the paragraph below. We refer to prior work for a tight entropy upper and lower bound for $\pCHSH$ \cite{grasselli2023boosting}. The \emph{true} key rate for $\pMSG$ lies in the shaded region. The bars below the graph show which game achieves higher key rate in some noise regime; the middle section is open since our bounds for $\pMSG$ are not tight.}
    \label{fig:entropy-global-noise}
\end{figure}

\paragraph{The pMSG upper bound} To prove an upper bound on the per round conditional von Neumann entropy of $\pMSG$, it suffices to show that a strategy for Eve exists such that given a particular expected winning probability $\omega$, $H(A|X=0,T=1,E) \le g(\omega)$ for some function $g$. Of course, if $\omega \le 8/9$, which is the classical winning probability for $\pMSG$, then $g(\omega)=0$ since a classical strategy would succeed up to this probability. Similarly, if $\omega=1$ which is the quantum winning probability for $\pMSG$, then $g(\omega)=2$, by monogamy of entanglement. Now surprisingly, it turns out that there exists a (quantum) strategy for Eve such that $g(\omega) = 0$ for $\omega \lesssim 0.935$ \cite{adamson2020quantum}. Since this seems to not have been discussed in the DI-QKD literature to the best of the authors' knowledge, we describe this strategy in Appendix \ref{app:msg_upper}. Eve then employs a convex-combination attack \cite{lukanowski2023upper} (between this strategy and the honest strategy), which gives us a reasonably good upper bound $g(\omega)$ for any $\omega$.

Now, we step back and discuss our model of comparison. For the purpose of comparing these two games, if we allow Eve to adversarially prepare states such that they maximize her guessing probability for the bits of secret key, then global depolarizing noise is the correct model to consider, since we put no bound on the dimension of the state Alice's device prepares, so a per qubit noise model is not well-defined. 

However, note that, with an \emph{honest} strategy, the maximal key rate for $\pMSG_N$ is 2, and for $\pCHSH_N$ it is 1. This is mainly due to the fact that, in $\pMSG_N$, we allow for the players to share two GHZ states in parallel per round for the honest strategy. This suggests we can assume a noise model which fixes the dimension of the states, and allows for a fairer comparison by adding noise proportional to the number of qubits the players share, so arbitrarily increasing the size of this state doesn't improve key rate for free.

\paragraph{Local Depolarizing Noise Channels} 
Similarly as in the case of analyzing $\pMSG_N$ with noisy channels, the QBER remains $Q=q/2$. Following the same case analysis as previously, the noisy winning probability is as follows. Note that this noise model is the same as the depolarizing noise per qubit case, since Alice sends  exactly one qubit to every Bob for $\pCHSH_N$, unlike in $\pMSG_N$ where she had to send 2 qubits each.
\begin{align}
    p_{\mathrm{exp}} &= (1-q)^{N-1} \cdot \left(\frac{1}{2}+\frac{1}{2\sqrt{2}}\right) + \frac{q}{2}+ (1-                          q)\left(1-(1-q)^{N-2}\right) \cdot \left(\frac{1}{4}+ \frac{1}{2}\cdot                           \left(\frac{1}{2}+\frac{1}{2\sqrt{2}}\right)\right) \\
                     &=  \frac{1}{2} + \frac{(1-q) + (1-q)^{N-1}}{4\sqrt{2}} \,. \label{eq:pchsh_noise_local}
\end{align}

\begin{figure}[h!]
    \centering
    \includegraphics[width=0.9\linewidth]{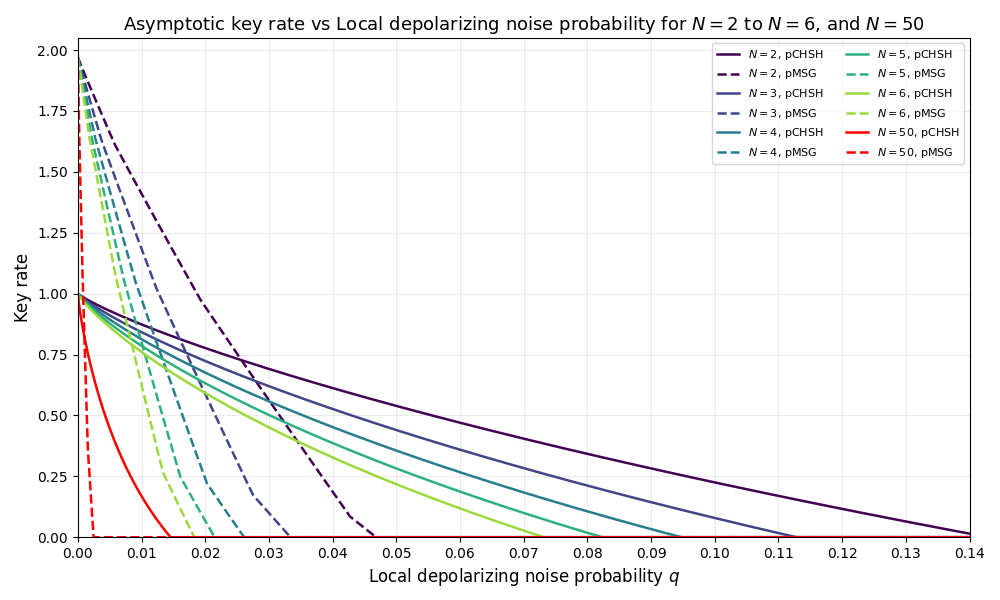}
    \caption{Key rate $r=H(A |X=0,T=1,E) - h(q/2)$ as a function of the \emph{local, per qubit,} depolarizing noise probability $q$, for various values of the number of parties $N$. The per round entropy lower bound for $\pMSG$ is computed from Eq. (\ref{eq:pmsg_noise_local}) and the techniques of Lemma~\ref{lemma:cvn-entropy}. We use a piecewise linear lower bound instead of just a single affine function since we just need a per round bound. We refer to Eq. (\ref{eq:pchsh_noise_local}) and  prior work for the (tight) lower bound for $\pCHSH$ \cite{grasselli2023boosting}.}
    \label{fig:entropy-varyingNsq}
\end{figure}

\paragraph{Comparison with Parallel CHSH.}
\phantomsection
\label{par:parallel}
Perhaps the most natural DI-CKA proposal is to use $N-1$ pairwise DI-QKDs to share keys between a designated leader and each other party, allowing the leader to one-time pad their choice of key to all others. If each DI-QKD was based on a two-player non-local game, the resulting DI-CKA protocol would either be based on \textit{sequential} two-player games, reducing the bits of key gained per round by a factor of $N-1$, or would be based on \textit{parallel} two-player games, which are actually new hard-to-analyze $N$-party games where the leader gets $N-1$ inputs while all others get a single input.

In \cite{ribeiro2018fully}, the main point of comparison for their CKA protocol based on the $\pCHSH$ game is $N-1$ sequential DI-QKD protocols, each based on the CHSH game. Since we show that $\pMSG$ obtains improved key rates over $\pCHSH$ in the low noise regime, this implies we also improve over $N-1$ sequential DI-QKD CHSH protocols in this regime. 

We may also attempt to compare with $N-1$ DI-QKD protocols played in \emph{parallel}, but that would require us to compute entropy lower bounds for a new $N$-party parallel CHSH game. This is essentially unfeasible with current techniques such as \cite{brown2024device} since the size of the resulting SDPs would scale with $N$. The only reason we are able to analyze the security of $\pMSG$ using an SDP is because the security reduces to the 2-party case as shown in Lemma \ref{lem:pmsg-Nto2-reduction}. We analyze a possibly tractable variant which we term the \emph{Parallel Parity-CHSH game} in Appendix \ref{app:par-pchsh}, which may serve as a proxy for comparison with $N-1$ parallel DI-QKDs. We leave the full comparison with Parallel CHSH, or indeed any parallel two-party game including the Parallel Magic Square game, to future work.

%% file: acknowledgements.tex
\section{Acknowledgments}
This work was completed while all authors were employed by Visa Inc. The authors thank Adithya Bhat and Jonathan Lu for helpful conversations and the late Dr. Anderson C. A. Nascimento for his kindness and inspiring curiosity. AS is supported by a TQI Graduate Fellowship. RR is supported by an NSF Graduate Research Fellowship.

\paragraph{Statement of AI Use.} Generative AI was used to write the code to solve semi-definite programs (found here \cite{MagicSquareCVN}), based on the implementation by \cite{zhen2023device}, which gave us the bounds for Lemma \ref{lemma:cvn-entropy}. It also helped write the code for calculating the key rates in Figure \ref{fig:entropy-global-noise} and Figure \ref{fig:entropy-varyingNsq} as well. It was also used to aid in proofreading and some parts of writing the paper. It did not contribute to any of the main ideas or conceptual insights of this work.

%% file: appendix.tex
\section*{Supplementary material}

\input{impossibility}

\section{The Generalized Entropy Accumulation Theorem} \label{app:geat}
We restate the full theorem here for reference, including the specific values of the constants $c_0,c_1$ mentioned in the main text.
\begin{theorem}[Generalized Entropy Accumulation Theorem (GEAT) \cite{metger2022generalised}] Consider quantum channels $\mathcal{M}_i : R_{i-1}E_{i-1} \to A_i R_i E_i \,, i \in \{1,\dots,n\},$ satisfying the following non-signaling condition: for each channel $\mathcal{M}_i$, there exists a quantum channel $\mathcal{R}_i : E_{i-1} \to E_i$ such that
\begin{equation}\label{eq:geat_nonsignalling_appendix} 
    \operatorname{Tr}_{A_i R_i} \circ \mathcal{M}_i = \mathcal{R}_i \circ \operatorname{Tr}_{R_{i-1}} . 
\end{equation} 
Then, for any initial state $\rho^0_{R_0E_0}$, the smooth min-entropy of the outputs $A^n := A_1\cdots A_n$ conditioned on the final side information $E_n$ satisfies 
\begin{align}\label{eq:geat_min_entropy_bound_appendix} 
H_{\min}^{\varepsilon} \!\left( A^n \mid E_n \right)_{ \mathcal{M}_n \circ \cdots \circ \mathcal{M}_1 \left(\rho^0_{R_0E_0}\right) } &\geq \sum_{i=1}^{n} \inf_{\omega} H\!\left( A_i \mid E_i \widetilde{E}_{i-1} \right)_{ \mathcal{M}_i(\omega) } - O(\sqrt{n}) \,, \\
&\ge nh - c_0 - c_1\sqrt{n}  \,,
\end{align} 
where $\widetilde{E}_{i-1} \equiv R_{i-1}E_{i-1}$ is a purifying system for the input to $\mathcal{M}_i$, and the infimum is taken over all states $\omega$ on the systems $R_{i-1}E_{i-1}\widetilde{E}_{i-1}$. The constants $c_0$ and $c_1$ are as follows, corresponding to an affine min-tradeoff function $f$, with $h=\min_{c^n \in \Omega} f\!\left(\operatorname{freq}(c^n)\right)$.
\[
c_0 \coloneq
\frac{(2-\eta)\eta^2\log(1/\Pr_{\rho_n}[\Omega]+\eta^2)g(\varepsilon)}{3(\ln 2)^2V^2(2\eta-1)^3}2^{\frac{1-\eta}{\eta}\left(2\log d_A+\operatorname{Max}(f)-\operatorname{Min}_{\Sigma}(f)\right)}\ln^3\left(2^{2\log d_A+\operatorname{Max}(f)-\operatorname{Min}_{\Sigma}(f)}+e^2\right),
\]
\begin{flalign*}
c_1 \coloneq \sqrt{\frac{2\ln(2)V^2}{\eta}
\left(g(\varepsilon)+(2-\eta)\log(1/\Pr_{\rho_n}[\Omega])\right)}\,,
&&
\end{flalign*}
such that
\[
\eta \coloneq \frac{2\ln(2)}{1+2\ln(2)}, \qquad
g(\varepsilon) \coloneq -\log\left(1-\sqrt{1-\varepsilon^2}\right), \qquad
V \coloneq \log\left(2d_A^2+1\right)+\sqrt{2+\operatorname{Var}(f)}\,.
\]
\end{theorem}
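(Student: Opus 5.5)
The proof follows the structure of Metger, Fawzi, Sutter and Renner~\cite{metger2022generalised}; it is not redone from scratch. The plan passes through sandwiched R\'enyi entropies $H_\alpha$ with $\alpha\in(1,2)$, an order to be optimized at the end. First, the smooth min-entropy is related to the R\'enyi entropy of the final state by the standard bound
\[
H_{\min}^{\varepsilon}(A^n\mid E_n)\ \ge\ H_\alpha(A^n\mid E_n)-\frac{g(\varepsilon)}{\alpha-1},
\]
with $g(\varepsilon)=-\log(1-\sqrt{1-\varepsilon^2})$. This is where the $g(\varepsilon)$ terms in $c_0$ and $c_1$ originate.

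Second, and most important, is a \emph{chain rule} for R\'enyi entropies across a single channel $\mathcal M_i$. If $\mathcal M_i$ satisfies the non-signalling condition \eqref{eq:geat_nonsignalling_appendix}, then
\[
H_\alpha(A_1^iE_i)_{\mathcal M_i(\rho)}\ \ge\ H_\alpha(A_1^{i-1}E_{i-1})_\rho+\inf_\omega H_{\alpha'}(A_i\mid E_i\widetilde E_{i-1})_{\mathcal M_i(\omega)},
\]
where $\alpha'$ is slightly larger than $\alpha$. To prove it:
\begin{itemize}
\item Use $\mathcal R_i$ to write $\mathcal M_i$, after tracing out $A_iR_i$, as a channel acting on $E_{i-1}$ alone.
\item Dilate this to an isometry, so that $E_i$ is obtained from $E_{i-1}$ without touching $R_{i-1}$.
\item Apply the duality and data-processing properties of sandwiched R\'enyi divergences to split off the new output $A_i$. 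The worst case over inputs is controlled by the purifying system $\widetilde E_{i-1}$.
\end{itemize}
Iterating the chain rule over $i=1,\dots,n$ turns the global R\'enyi entropy into a sum of single-round infima. I expect this step to be the main obstacle. The difficulty is that $E_i$ is produced by a channel acting on $E_{i-1}$ rather than being a Markov-chain-conditioned register, so the ordinary EAT chain rule does not apply. The strengthened chain rule, which requires passing to the order $\alpha'$ and relies on a careful use of the non-signalling channel $\mathcal R_i$, is the genuinely new ingredient.

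Third, I will incorporate the test data and the event $\Omega$. Each round emits a classical register $C_i$, and the affine min-tradeoff function $f$ is folded into the entropy. Concretely, I append an auxiliary register whose entropy equals $-f(C_i)$, so that each single-round term becomes $H_{\alpha'}(A_i\mid E_i\widetilde E_{i-1})+\mathbb E[f(C_i)]$. This sum is bounded below by $\min_{c^n\in\Omega}f(\operatorname{freq}(c^n))=h$ per round. Conditioning on $\Omega$ costs a term $\frac{\alpha}{\alpha-1}\log(1/\Pr[\Omega])$, which produces the $\log(1/\Pr_{\rho_n}[\Omega])$ factors.

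Fourth, a continuity (Taylor) bound replaces the R\'enyi single-round terms by von Neumann entropies:
\[
H_{\alpha'}\ \ge\ H-(\alpha'-1)\frac{\ln 2}{2}V^2-O\big((\alpha'-1)^2\big).
\]
Here $V=\log(2d_A^2+1)+\sqrt{2+\operatorname{Var}(f)}$, and the cubic remainder carries the explicit $2^{\frac{1-\eta}{\eta}(\cdots)}\ln^3(\cdots)$ prefactor. Finally, I choose $\alpha-1\propto 1/\sqrt n$, with the constant $\eta=\frac{2\ln2}{1+2\ln2}$ fixing how $\alpha'$ relates to $\alpha$, and collect terms. The $\sqrt n$ coefficient balances $g(\varepsilon)/(\alpha-1)$ against $n(\alpha-1)V^2$, which yields $c_1$. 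The constant terms yield $c_0$. This is routine bookkeeping, so I will only check that the stated $c_0,c_1$ match this optimization.
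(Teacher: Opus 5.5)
The paper does not prove this theorem. It restates the GEAT of \cite{metger2022generalised} and relies on the citation, so deferring to that paper is exactly the paper's route. Your outline is also the right one: the $g(\eps)/(\alpha-1)$ conversion to a R\'enyi entropy, a non-signalling chain rule, an $f$-encoding register with conditioning on $\Omega$, a continuity bound, and an optimization over $\alpha$. Two remarks on the chain rule. First, $H_\alpha(A_1^iE_i)$ should read $H_\alpha(A_1^i\mid E_i)$. Second, your three sub-bullets do not contain the essential point: the accumulated term must keep the \emph{same} order $\alpha$, and only the new single-round term pays ($\alpha'=\frac{1}{2-\alpha}$ in \cite{metger2022generalised}). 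Duality and data processing alone do not give this. The standard R\'enyi chain rules, with orders tied by $\frac{\alpha}{\alpha-1}=\frac{\beta}{\beta-1}+\frac{\gamma}{\gamma-1}$, raise the order of the accumulated term, and that loss cannot be iterated $n$ times. Since you import this step, that is acceptable, but the sub-bullets are not an argument.

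The step that fails as written is your third.
\begin{itemize}
\item No register has entropy $-f(C_i)$, since this is negative wherever $f>0$.
\item The per-round term you then write, with $+\mathbb E[f(C_i)]$, has the opposite sign.
\item $h$ cannot appear as a per-round bound. The single-round infimum ranges over arbitrary inputs and knows nothing about the observed event $\Omega$.
\end{itemize}
The construction that works uses a register $D_i$ whose entropy given $C_i=c$ is $\operatorname{Max}(f)-f(\delta_c)\ge 0$. Since $C_i$ is read off classical parts of $A_i$ and $E_i$ and $f$ is affine,
\[
H(A_iD_i\mid E_i\widetilde E_{i-1})=H(A_i\mid E_i\widetilde E_{i-1})+\operatorname{Max}(f)-f(q)\ge\operatorname{Max}(f),
\]
where $q$ is the distribution of $C_i$. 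The min-tradeoff property may be used here only after your continuity step, because $f$ bounds von Neumann rather than R\'enyi entropy; this is also why $\operatorname{Var}(f)$ enters $V$. The quantity $h$ appears only when $D^n$ is stripped off on $\Omega$, where
\[
\sum_i\big(\operatorname{Max}(f)-f(\delta_{c_i})\big)=n\big(\operatorname{Max}(f)-f(\operatorname{freq}(c^n))\big)\le n\big(\operatorname{Max}(f)-h\big).
\]
This uses two hypotheses the statement omits. First, $C_i$ must be a non-disturbing classical read-out of $A_i$ and $E_i$. Second, the bound $nh-c_0-c_1\sqrt n$ holds for the state conditioned on $\Omega$, not for the unconditioned output state.

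For the bookkeeping, $\eta$ does not relate $\alpha'$ to $\alpha$; the chain rule fixes $\alpha'=\frac{1}{2-\alpha}$. Instead $\eta$ is a cap $2-\alpha\ge\eta$. It gives $\alpha'-1\le(\alpha-1)/\eta$, and it gives $\alpha\le 2-\eta$ in the $\alpha\log(1/\Pr[\Omega])/(\alpha-1)$ term and in the higher-order remainder. Choosing
\[
\alpha-1=\sqrt{2\eta\,\big(g(\eps)+(2-\eta)\log(1/\Pr_{\rho_n}[\Omega])\big)/(n\ln 2\,V^2)}
\]
reproduces $c_1$ exactly. This choice is admissible only once $n$ is large enough that $\alpha\le 2-\eta$, another condition missing from the statement. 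The same choice yields $c_0$ with numerator $\eta^2\big(g(\eps)+(2-\eta)\log(1/\Pr_{\rho_n}[\Omega])\big)$. The stated $c_0$ is therefore garbled, and your check will not match it literally.
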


\section{The Magic Square game strategy} \label{app:msg_upper}
Since the security of $\pMSG$ reduces to the security of the Magic Square game, as shown in Lemma \ref{lem:pmsg-Nto2-reduction}, we simply analyze the latter. Our goal is to come up with a strategy for Eve in the Magic square game DI-QKD protocol, such that the entropy of Alice's key bits is zero while obtaining a winning probability $\omega > 8/9$, which is strictly greater than the best classical winning probability. We follow the idea introduced in \cite{adamson2020quantum}, in the context of a DI-QKD protocol. 

We will design an attack from the perspective of the adversary Eve who prepares the devices that Alice and Bob use, and then calculate what winning probability we obtain. Recall that Alice and Bob both sample their inputs from the uniform distribution over $\{0,1,2\}$. When either device gets an input 0, this corresponds to measuring in the Z basis, which includes all key rounds and some test rounds. So the devices deterministically generate the same but known output bits (say 10) on such rounds. So, Eve learns all the key bits. Conversely, when either device gets either 1 or 2 as an input, this must be a test round. This new game now reduces to a \emph{CHSH game} as shown in \cite{adamson2020quantum}, where the input to a device is either $0$ or from the set \{1,2\}. The devices play the optimal CHSH strategy on the latter inputs, winning with probability $\cos^2(\pi/8)$. Now, we can calculate the overall winning probability of this strategy for the Magic square game, with known key rounds inputs, as $5/9 \cdot 1 + 4/9 \cdot \cos^2(\pi/8) \approx 0.935$. Perhaps surprisingly, this turns out to be greater than $8/9$, concluding our argument. The optimality of this winning probability remains open, but an SDP calculation shows us that it cannot possibly be more than $\approx 0.945$.

\section{The Parallel Parity-CHSH game} \label{app:par-pchsh}
Like $\pCHSH$, the parallel Parity-CHSH game is also an $N$-player game, but where all players receive 2 sets of inputs, each input set corresponding to a $\pCHSH$ game. They win the parallel game iff they win both $\pCHSH$ games. A similar argument to Lemma \ref{lem:pmsg-Nto2-reduction} would then show that this reduces to winning two CHSH games in parallel. We can design a conference key agreement protocol based on this game, similarly as the $\pCHSH$ is used for \cite{ribeiro2018fully}.

Now, we could try comparing the key rates of the conference key protocol based on $\pMSG$ to the parallel Parity-CHSH game, between Alice and each of the Bobs. The honest strategy in both cases now is to share 2 GHZ states. But the optimal classical winning probability of both CHSH games in parallel is $5/8 = 0.625$ which is greater than $(3/4)^2=9/16=0.5625$ as we might expect. The optimal quantum winning probability now becomes $\cos^4(\pi/8) \approx 0.7285$ \cite{ambainis2026optimal}. It is unclear how this would change the key rates since we would need to re-derive bounds on $H(A|E)$ and key rates for this new game. In particular, we could again try applying the method by \cite{brown2024device} to get lower bounds on the per round conditional von Neumann entropy of Parallel Parity-CHSH, which allows us to lower bound the key rate.

We note that we expect its \emph{finite} round key rate (as opposed to asymptotic key rate which is what we consider in the main text for simplicity) to be worse than that of $\pMSG$ when the key round probability is a \emph{constant}, particularly in the low noise regime, due to the following fact. The conditional von Neumann entropy $H(A|E)$ is calculated conditioned on Alice receiving a certain input, as described in Lemma \ref{lem:pmsg-Nto2-reduction}. For the Parallel Parity-CHSH game, it would be conditional on Alice receiving the bits 00 as input, which occurs with probability $1/4$. So, when the noise is zero, only $1/4$'th of the key rounds will result in a valid key bit. Let there be a $\mu=1-\gamma$ fraction of key rounds over all rounds. Thus, $H(A|E) = p(X=00) \cdot H(A|E,X=00) =  \mu \cdot 1/4 \cdot 2 = \mu/2$ when the noise probability $q=0$, and it will further decrease as $q$ increases. The corresponding calculation for $\pMSG$ gives $H(A|E) = \mu \cdot 1/3 \cdot 2 = 2\mu/3 > \mu/2$. For $\pCHSH$, it is $\mu \cdot 1/2 \cdot 1 = \mu/2$. So, in the finite round, low noise regime, it suggests $\pMSG$ obtains better key rates over both these games. The asymptotic case would require a separate analysis. 

This also gives us some intuition why increasing the number of qubits shared between all parties per round in a conference key agreement doesn't arbitrarily let us improve key rate, since the probability of a valid key round would naively go down. For $M$ parallel CHSH games, we would get $M$ bits of key on valid key rounds when the input to (say) Alice is $0^M$, but the probability of this is $2^{-M}$, which goes down exponentially as $M$ increases. This suggests we may also need better formulations for asymptotic key rate calculations, since various current definitions would only analyze the key rate on valid key rounds when Alice's input is $0^M$, which is clearly not indicative when running this experiment in a practical sense.

We leave a tight comparison between the key rates obtained from $\pMSG, \pCHSH$ and the Parallel Parity-CHSH game, as an interesting open question for future work.

%% file: impossibility.tex
\section{Impossibility of a Multiparty Consensus Magic Square}\label{sec:obstruction}

In this section, we prove why a natural multiparty magic square game fails to be useful for a conference key agreement protocol, which motivates us to define $\pMSG_N$ in the framework of Parity-$G$ games.

\begin{definition}[Literal multiparty magic square $\MSG_N$]\label{def:msgn}
$N$ parties: Alice and $B_1,\dots,B_{N-1}$. Alice receives a row $x\in\{0,1,2\}$ and outputs
an even triple $a$; each $B_j$ receives the \emph{same} column $y\in\{0,1,2\}$ (uniform) and
outputs an odd triple $b^{(j)}$. The players wins iff $a_y=b^{(j)}_x$ for all $j$ --- Alice
reaches \emph{cell-consensus} with every Bob. (This is a consensus predicate: winning forces
$a_y=b^{(1)}_x=\cdots=b^{(N-1)}_x$.)
\end{definition}

\begin{theorem}[Monogamy obstruction]\label{thm:obstruction}
$\wq(\MSG_N)<1$ for every $N\ge3$. Hence $\MSG_N$ is not a pseudo-telepathy game.
\end{theorem}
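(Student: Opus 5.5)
The plan is a proof by contradiction. Suppose that for some $N\ge3$ there is a quantum strategy winning $\MSG_N$ with probability one. It suffices to look at Alice, $B_1$ and $B_2$, treating the remaining Bobs as part of the environment. Winning with certainty on every input pair $(x,y)$ forces $a_y=b^{(1)}_x=b^{(2)}_x$ with probability one.

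\textbf{Step 1: $B_1$ and $B_2$ agree.} Fix the common column $y$. By non-signaling, the joint distribution of $(b^{(1)},b^{(2)})$ does not depend on Alice's row $x$. For each $x$ the winning condition gives $b^{(1)}_x=a_y=b^{(2)}_x$ almost surely. Ranging over all three rows, $b^{(1)}=b^{(2)}$ with probability one for every $y$. Hence, for every column, $B_2$ holds a perfect copy of $B_1$'s output.

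\textbf{Step 2: rigidity of the $(A,B_1)$ marginal.} The restriction of the strategy to Alice and $B_1$ wins $\MSG$ with certainty. By the known rigidity (self-testing) theorem for the Magic Square game, there is a local isometry on $B_1$'s system, and one on Alice's, after which:
\begin{itemize}
\item the shared state contains $\Bell{+}^{\otimes2}$ on registers $A_0\subseteq A$ and $B_0\subseteq B_1$, tensored with junk;
\item $B_1$'s column observables act as the conjugated Peres observables $\overline{P_{xy}}$ of \eqref{eq:peres} on $B_0$.
\end{itemize}
In particular, the cell $(1,0)$ outcome $b^{(1)}_1$ for column $y=0$ is the outcome of $X\otimes\id$ on $B_0$. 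The cell $(0,0)$ outcome is the outcome of $\id\otimes Z$.

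\textbf{Step 3: monogamy contradiction.} Apply Lemma~\ref{lem:monogamy} to the pure state $\ket\psi_{A\,B_1\,C}$, where $C$ contains $B_2$ together with any purification. Here $B_0$ is maximally entangled with $A_0$, with the roles of the registers swapped relative to the lemma's statement. The lemma says $B_0$ is uncorrelated with $C$: $\rho_{B_0C}=\rho_{B_0}\otimes\rho_C$, and $\rho_{B_0}$ is maximally mixed. So the outcome of any nontrivial Pauli measurement on $B_0$, for instance $X\otimes\id$, is a uniformly random bit independent of every measurement on $C$. But Step 1 says $B_2$'s column-$0$ measurement on $C$ reproduces this bit with certainty. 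Conditioning on $B_2$'s outcome then leaves zero entropy, whereas independence forces one bit. This contradiction rules out a perfect strategy.

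\textbf{Main obstacle.} The main obstacle is passing from ``no strategy wins with probability exactly $1$'' to $\wq(\MSG_N)<1$, since the quantum value is a supremum that need not be attained. I would first try the robust version of Magic Square self-testing. A strategy winning with probability $1-\eps$ is $O(\sqrt\eps)$-close to the ideal one after local isometries. Step 1 is robust: $\Pr[b^{(1)}\neq b^{(2)}]\le O(\eps)$ by a union bound over rows. A continuity bound (e.g.\ Fano together with Alicki--Fannes) then turns near-perfect prediction of a nearly uniform, nearly independent bit into a contradiction for $\eps$ below an explicit constant, giving $\wq(\MSG_N)\le1-\eps_0$. If the robust rigidity constants are awkward, the fallback is a compactness argument. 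Perfect Magic Square strategies exist in dimension four, so one restricts to bounded-dimension strategies, on which the winning probability is continuous. The winning probability is then attained on a compact set, so a supremum equal to $1$ would give an exact perfect strategy, which the argument above excludes. This fallback would need a dimension reduction for near-optimal strategies, which again leans on rigidity, so I expect the robust self-testing route to be the cleaner one.
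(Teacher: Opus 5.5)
Your argument is correct, but it takes a different route from the paper's. The paper self-tests every pair $(A,B_j)$. It then asserts, via an informal irreducibility remark, that all these self-tests single out one and the same two-qubit core $\Acore$ inside Alice's system. The contradiction comes from monogamy on Alice's side: $\Acore$ cannot be maximally entangled with two different Bobs. You instead use non-signaling from Alice to show that $B_2$ must reproduce $B_1$'s whole answer with certainty. You then self-test only the single pair $(A,B_1)$ and put the monogamy on $B_1$'s side: $B_1$'s core is maximally mixed and in product with $B_2$, so the $X\otimes\id$ bit cannot be copied. This avoids the step the paper justifies least carefully, namely that the Alice-side isometries for different $j$ coincide. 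That claim is true, since the standard swap isometry is built from Alice's observables alone, but the paper does not argue it; the price is your extra, elementary Step~1. Your choice of $B_1,B_2$ also works verbatim for $N=3$, whereas the paper's Step~3 invokes $B_2$ and $B_3$, which exist only for $N\ge4$. More importantly, you flag something the paper skips entirely: excluding an exactly perfect strategy does not by itself give $\wq(\MSG_N)<1$, because $\wq$ is a supremum that need not be attained, and the paper's proof stops at the exact case. Your robust self-testing route closes this gap, and more simply than you suggest: no Fano or Alicki--Fannes bound is needed. $\delta(\eps)$-closeness to $\Bell{+}^{\otimes2}\otimes\ket{\mathrm{junk}}$, in the standard form where the junk may be entangled with $B_2$ and the purification, directly gives $\Pr[b^{(1)}_1=b^{(2)}_1\mid y=0]\le\tfrac12+O(\delta(\eps))$. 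The robust Step~1 gives at least $1-9\eps$, so you obtain an explicit $\eps_0$. As you note yourself, the compactness fallback does not stand on its own, because $\wq$ is a supremum over unbounded dimension.
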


\begin{proof}
Suppose a strategy wins $\MSG_N$ with probability $1$. Purify everything to a register $\ket\psi$ on
$\cH_A\otimes\cH_{B_1}\otimes\cdots\otimes\cH_{B_{N-1}}\otimes\cH_\Eve$.

\emph{Step 1 (bipartite reduction).} Fix any $B_j$. Marginalizing to $(A,B_j)$: Alice's input
$x$ and $B_j$'s input $y$ are uniform and independent, Alice outputs an even triple, $B_j$ an
odd triple, and the $j$-th winning clause $a_y=b^{(j)}_x$ holds with certainty for every
$(x,y)$. Thus $(A,B_j)$ wins the bipartite magic square (Definition~\ref{def:msg}) with
probability $1$.

\emph{Step 2 (self-test pins a single core).} Alice performs \emph{one} fixed measurement (her row observables), regardless of which Bob is considered. By the self-testing property of the Magic Square game \cite{coladangelo2017robust}, it is identical up to isometry in every application. Hence there is a single two-qubit core $\Acore\subseteq\cH_A$ such that,
for \emph{every} $j$, perfect $(A,B_j)$ play forces $\Acore$ to be maximally entangled with
$B_j$'s core. (Alice cannot split into independent per-Bob cores: the Peres observables
generate the full operator algebra on her two-qubit core, which acts irreducibly on
$\mathbb C^4$, so a single observable cannot act non-trivially on several independent cores.)

\emph{Step 3 (monogamy).} Apply Step 2 with $j=2$: $\rho_{\Acore,B_{2}}$ is pure maximally
entangled. By Lemma~\ref{lem:monogamy}, $\Acore$ is then in a product state with $B_{3}$, so
the entanglement between $\Acore$ and $B_{3}$ is zero. But Step 2 with $j=3$ requires $\Acore$
maximally entangled with $B_{3}$ (two ebits), which is a contradiction. Therefore no perfect
strategy exists: $\wq(\MSG_N)<1$.
\end{proof}

\begin{remark}\label{cor:nocons}
The above theorem shows that the obvious idea for constructing a genuinely multipartite \emph{consensus} pseudo-telepathy game of the magic-square
form, for $N\ge3$, does not work. A DI conference key therefore cannot be extracted from this game, motivating the two-basis design of Section~\ref{sec:protocol}: a non-consensus certification test plus a separate consensus key basis. Our Parity-$G$ games framework explicitly requires this in the form of test rounds and key rounds.

We leave it as an intriguing open problem to construct a genuinely multipartite \emph{consensus} pseudo-telepathy game, from which we could extract a DI conference key with a single measurement setting. 
\end{remark}

%% file: disclaimer.tex
\medskip

\noindent \textbf{Disclaimer}

\medskip

\noindent Case studies, comparisons, statistics, research and recommendations are provided ``AS IS'' and intended for informational purposes only and should not be relied upon for operational, marketing, legal, technical, tax, financial or other advice.  Visa Inc. neither makes any warranty or representation as to the completeness or accuracy of the information within this document, nor assumes any liability or responsibility that may result from reliance on such information.  The Information contained herein is not intended as investment or legal advice, and readers are encouraged to seek the advice of a competent professional where such advice is required.
 
These materials and best practice recommendations are provided for informational purposes only and should not be relied upon for marketing, legal, regulatory or other advice. Recommended marketing materials should be independently evaluated in light of your specific business needs and any applicable laws and regulations. Visa is not responsible for your use of the marketing materials, best practice recommendations, or other information, including errors of any kind, contained in this document.